\newtheorem*{rep@theorem}{\rep@title}
\newcommand{\newreptheorem}[2]{%
\newenvironment{rep#1}[1]{%
 \def\rep@title{#2 \ref{##1} (restated)}%
 \begin{rep@theorem}}%
 {\end{rep@theorem}}}
\newtheorem*{prov@theorem}{\prov@title}
\newcommand{\newprovtheorem}[2]{%
\newenvironment{prov#1}[1]{%
 \def\prov@title{#2 \ref{##1} (proven)}%
 \begin{prov@theorem}}%
 {\end{prov@theorem}}}
\newtheorem{theorem}{Theorem}
\newtheorem{lemma}[theorem]{Lemma}
\newtheorem{proposition}[theorem]{Proposition}
\newtheorem{corollary}[theorem]{Corollary}
\theoremstyle{definition}
\newtheorem{remark}[theorem]{Remark}
\newtheorem{definition}{Definition}
\newtheorem{algo}{Algorithm}
\newtheorem{assumption}{Assumption}
\newcommand\IP{\texttt{IP}\xspace}
\newcommand\IPS{\texttt{IP1S}\xspace}
\newcommand\MI{$D$-\texttt{OSMC}\xspace}
\newcommand\SMC{\texttt{SMC}\xspace}
\newcommand\PE{\texttt{PolyEquiv}\xspace}
\newcommand\PP{\texttt{PolyProj}\xspace}
\newcommand\MPKC{\texttt{MPKC}\xspace}
\newcommand\NoSol{\textsc{NoSolution}\xspace}
\newcommand\GPD{\textsc{GPD}\xspace}
\newcommand\mcal{\mathcal}
\newcommand\mbf{\mathbf}
\newcommand\mbb{\mathbb}
\newcommand\msf{\mathsf}
\newcommand\mrm{\mathrm}
\newcommand\mfk{\mathfrak}
\newcommand\POW{\mrm{\mbf{POW}}}
\newcommand\eps{\varepsilon}
\newcommand\f{\mbf{f}}
\newcommand\g{\mbf{g}}
\newcommand\x{\mbf{x}}
\newcommand\Z{\mbb{Z}}
\newcommand\K{\mbb{K}}
\newcommand\LL{\mbb{L}}
\newcommand\F{\mbb{F}}
\newcommand\Q{\mbb{Q}}
\newcommand\cH{\mcal{H}}
\newcommand\cK{\mcal{K}}
\newcommand\cC{\mcal{C}}
\newcommand\id{\mrm{Id}}
\newcommand\rJ{\mrm{J}}
\newcommand\vphi{\varphi}
\newcommand\ceil[1]{\left\lceil#1\right\rceil}
\newcommand\floor[1]{\left\lfloor#1\right\rfloor}
\newcommand\pare[1]{\left(#1\right)}
\newcommand\acc[1]{\left\{#1\right\}}
\newcommand\Gauss{Gau\ss}
\DeclareMathOperator\car{char}
\DeclareMathOperator\GL{GL}
\DeclareMathOperator\T{T}
\DeclareMathOperator\Cent{\mcal{C}}
\DeclareMathOperator\Diag{Diag}
\DeclareMathOperator\tr{Tr}
\DeclareMathOperator\rank{rank}
\journal{Journal of Complexity}
\begin{document}

\begin{frontmatter}


\title{Polynomial-Time Algorithms for Quadratic Isomorphism of Polynomials:
  The Regular Case}

\author[label1,label2,label3]{J\'er\'emy
  Berthomieu\corref{cor1}}
\cortext[cor1]{Laboratoire d'Informatique de Paris 6,
  Université Pierre-et-Marie-Curie, Boîte Courrier 169, 4 place
  Jussieu, F-75252 Paris Cedex 05, France.}
\ead{jeremy.berthomieu@lip6.fr}
\author[label3,label1,label2]{Jean-Charles Faug\`ere}
\ead{jean-charles.faugere@inria.fr}
\author[label1,label2,label3]{Ludovic Perret}
\ead{ludovic.perret@lip6.fr}
\address[label1]{Sorbonne Universit\'es, \textsc{UPMC} Univ Paris 06,
  \'Equipe \textsc{PolSys}, \textsc{LIP6}, F-75005, Paris, France}
\address[label2]{\textsc{CNRS}, \textsc{UMR 7606}, \textsc{LIP6},
  F-75005, Paris, France}
\address[label3]{\textsc{INRIA}, \'Equipe \textsc{PolSys},
  Centre Paris~-- Rocquencourt, F-75005, Paris, France}



\begin{abstract}
  Let $\mathbf{f}=(f_1,\ldots,f_m)$ and $\mathbf{g}=(g_1,\ldots,g_m)$ be
  two sets of $m\geq 1$ nonlinear polynomials in
  $\mathbb{K}[x_1,\ldots,x_n]$ (\(\mathbb{K}\) being a field). We
  consider the computational
  problem of finding -- if any -- an invertible transformation on
  the variables mapping $\mathbf{f}$ to $\mathbf{g}$. The
  corresponding equivalence problem is known as \emph{Isomorphism of
    Polynomials with one Secret} (\texttt{IP1S}) and is a fundamental
  problem in multivariate cryptography. Amongst its applications, we
  can cite Graph Isomorphism (\texttt{GI}) which reduces to
  equivalence of cubic polynomials with respect to an invertible
  linear change of variables, according to Agrawal and Saxena.
  The main result is a randomized polynomial-time
  algorithm for solving \texttt{IP1S} for quadratic instances~-- a
  particular case of importance in cryptography.
  
  To this end, we show that \texttt{IP1S} for quadratic polynomials
  can be reduced to a variant of 
  the classical module isomorphism problem in representation theory.
  We show that we can essentially \emph{linearize} the problem by
  reducing  quadratic-\texttt{IP1S} to test
  the orthogonal simultaneous similarity of symmetric matrices; this
  latter problem was shown by Chistov, Ivanyos and Karpinski
  (\textsc{ISSAC} 1997) to be
  equivalent to finding an invertible matrix in the linear space
  $\mathbb{K}^{n \times n}$ of $n \times n$ matrices over
  $\mathbb{K}$ and 
  to compute the square root in a certain representation
  in a matrix algebra. While computing
  square roots of matrices can be done efficiently using numerical
  methods, it seems difficult to control the bit complexity of such
  methods. However, we present exact and polynomial-time algorithms
  for computing a representation of the square root of a matrix in
  $\mathbb{K}^{n \times n}$,  for
  various fields (including finite fields), as a product of two
  matrices. Each
  coefficient of these matrices lie in an extension field of
  $\mathbb{K}$ of polynomial degree. We then consider
  \#\texttt{IP1S}, the counting version of \texttt{IP1S} for quadratic
  instances. In particular, we provide a (complete) characterization
  of the automorphism group of homogeneous quadratic polynomials.
  Finally, we also consider the more general \emph{Isomorphism of
    Polynomials} (\texttt{IP}) problem where we allow an invertible
  linear transformation on the variables \emph{and} on the set of
  polynomials. A randomized polynomial-time algorithm for solving
  \texttt{IP} when \(\mathbf{f}=(x_1^d,\ldots,x_n^d)\) is
  presented. From an algorithmic point of view, the problem boils
  down to factoring the determinant of a linear matrix (\emph{i.e.}\
  a matrix whose components are linear polynomials).  This extends
  to \texttt{IP} a result of Kayal obtained for \texttt{PolyProj}.
\end{abstract}

\begin{keyword}
Quadratic forms  \sep computer algebra
\sep polynomial isomorphism \sep multivariate
cryptography \sep module isomorphism


\MSC[2010]12Y05 
\sep
94A60 
\sep
68W20 
\sep
68W30 
\sep
68Q25 
\end{keyword}

\end{frontmatter}


\section{Introduction}
A fundamental question in computer science is to provide algorithms
allowing to test if two given objects are \emph{equivalent} with
respect to some transformation. In this paper, we consider equivalence
of nonlinear polynomials in several variables.  Equivalence of
polynomials has profound connections with a rich variety of
fundamental problems in computer science, ranging -- among others
topics -- from cryptography
(\emph{e.g.}~\cite{DBLP:conf/eurocrypt/Patarin96,
  DBLP:conf/nss/TangX12,Tang2014,DBLP:conf/ispec/YangTY11}),
arithmetic complexity (\emph{via} Geometric Complexity Theory (GCT)
for instance,
see~\cite{DBLP:conf/coco/Burgisser12,DBLP:conf/stoc/Kayal12,
  DBLP:journals/cacm/Mulmuley12,DBLP:journals/siamcomp/MulmuleyS01}),
testing low degree affine-invariant properties~
(\cite{DBLP:conf/soda/BhattacharyyaFL13,
  DBLP:journals/cdm/GreenT09,DBLP:journals/eccc/BhattacharyyaGRS11},
$\ldots$).  As we will see, the notion of equivalence can come with
different flavours that impact the intrinsic hardness of the problem
considered.

In~\cite{DBLP:conf/stacs/AgrawalS06,Saxena2006}, the authors show that
Graph Isomorphism reduces to equivalence of cubic polynomials with
respect to an invertible linear change of variables (a similar
reduction holds between $\F$-algebra Isomorphism and cubic equivalence
of polynomials). This strongly suggests that solving equivalence
problems efficiently is a very challenging algorithmic task.

In cryptography, the hardness of deciding equivalence between two sets
of $m$ polynomials with respect to an invertible linear change of
variables is the security core of several cryptographic schemes: the
seminal zero-knowledge \textsc{ID} scheme
of~\cite{DBLP:conf/eurocrypt/Patarin96}, and more
recently group/proxy signature
schemes~(\cite{DBLP:conf/nss/TangX12,Tang2014,DBLP:conf/ispec/YangTY11}). Note
that there is a subtle difference between the equivalence problem
considered in~\cite{DBLP:conf/stacs/AgrawalS06,Kayal2011,Saxena2006} and
the one considered in cryptographic applications.

Whilst~\cite{DBLP:conf/stacs/AgrawalS06,Kayal2011,Saxena2006} restrict
their attention to $m=1$, arbitrary $m \geq 1$ is usually considered
in cryptographic applications. In the former case, the problem is
called \emph{Polynomial Equivalence} (\PE), whereas it is called
\emph{Isomorphism of Polynomials with One Secret} (\IPS) problem in
the latter case. We emphasize that the hardness of equivalence can
drastically vary in function of $m$. An interesting example is the
case of quadratic forms. The problem is completely solved when $m=1$,
but no polynomial-time algorithm exists for deciding simultaneous
equivalence of quadratic forms. In this paper, we make a step ahead to
close this gap by presenting a randomized polynomial-time algorithm
for solving simultaneous equivalence of quadratic forms over various
fields.

Equivalence of multivariate polynomials is also a fundamental problem
in Multivariate Public-Key Cryptography (\MPKC). This is a family of
asymmetric (encryption and signature) schemes whose public-key is
given by a set of $m$ multivariate
equations~(\cite{C*,DBLP:conf/eurocrypt/Patarin96}). To minimize the
public-key storage, the multivariate polynomials considered are
usually quadratic.  The basic idea of \MPKC is to construct a
public-key which is equivalent to a set of quadratic multivariate
polynomials with a specific structure (see for instance~\cite{WP11}).
Note that the notion of equivalence considered in this context is more
general than the one considered for \PE or \IPS. Indeed, the
equivalence is induced by an invertible linear change of variables and
an invertible linear combination on the polynomials. The corresponding
equivalence problem is
known~(\cite{DBLP:conf/eurocrypt/Patarin96}) as
\emph{Isomorphism of Polynomials} (\IP or \texttt{IP2S}).
      
\PE, \IP, and \IPS are not NP-Hard unless the
polynomial-hierarchy
collapses, \cite{DBLP:journals/eccc/ECCC-TR04-116,DBLP:conf/eurocrypt/PatarinGC98}. However,
the situation changes drastically when considering the equivalence for
more general linear transformations (in particular, not necessarily
invertible). In this context, the problem is called \PP. At {\sc
  SODA'11}, \cite{Kayal2011} showed that \PP is
NP-Hard. This may be due to the fact that various fundamental
questions in arithmetic complexity can be re-interpreted as particular
instances of \PP (see~\cite{DBLP:conf/coco/Burgisser12,
  DBLP:conf/stoc/Kayal12,DBLP:journals/cacm/Mulmuley12,
  DBLP:journals/siamcomp/MulmuleyS01}).

Typically, the famous \textsc{VP} vs \textsc{VNP}
question~(\cite{DBLP:journals/tcs/Valiant79}) can be formulated as an
equivalence problem between the determinant and permanent
polynomials. Such a link is in fact the core motivation of Geometric
Complexity Theory. The problem of computing the symmetric
rank~(\cite{DBLP:journals/jsc/BernardiGI11,DBLP:journals/siammax/ComonGLM08})
of a symmetric tensor also reduces to an equivalence problem involving
a particular multivariate
polynomial~(\cite{DBLP:conf/stoc/Kayal12}). To mention another
fundamental problem, the task of minimizing the cost of computing
matrix multiplication reduces to a particular equivalence
problem~(\cite{DBLP:conf/stoc/BurgisserI11,
  DBLP:conf/stoc/BurgisserI13,DBLP:conf/soda/CohnU13,DBLP:conf/stoc/Kayal12}).

\subsection*{Organization of the Paper and Main Results}
Let $\K$ be a field, $\f$ and $\g$ be two sets of $m$ polynomials over
$\K[x_1,\ldots,x_n]$. The Isomorphism of Polynomials (\IP) problem,
introduced by
Patarin~(\cite{DBLP:conf/eurocrypt/Patarin96}), is as
follows:\\
\textbf{Isomorphism of Polynomials (\IP)}\\
\textbf{Input:} $\big( (\f=(f_1,\ldots,f_m)$ and
$\g=(g_1,\ldots,g_m)\big) \in \K[x_1,\ldots,x_n]^m \times \K[x_1,\ldots,x_n]^m$.\\
\textbf{Question:} Find -- if any -- $(A,B) \in\GL_n(\K) \times
\GL_m(\K)$ such that:
\[\g(\x)=B\cdot \f(A \cdot \x), \mbox{ with
  $\x=(x_1,\dots,x_n)^{\T}$}.\] While \IP is a fundamental problem
in multivariate cryptography, there are quite few algorithms, such
as~\cite{DBLP:conf/eurocrypt/PatarinGC98,DBLP:conf/eurocrypt/BouillaguetFV13,FaugerePerret06},
solving \IP. In particular, \cite{FaugerePerret06} proposed to solve
\IP by reducing it to a system of nonlinear equations whose
variables are the unknown coefficients of the matrices.
It was conjectured in~\cite{FaugerePerret06},
but never proved, that the corresponding system of nonlinear equations
can be solved in polynomial time as soon as the \IP instances
considered are not homogeneous. Indeed, by slicing of the
polynomials degree by degree, one can find equations in the
coefficients of the transformation allowing one to recover the
transformation. More recently,
\cite{DBLP:conf/eurocrypt/BouillaguetFV13} presented exponential (in
the number of variables $n$) algorithms for solving quadratic
homogeneous instances of \IP over finite fields.  This situation is
clearly unsatisfying, and suggests that an important open problem for
\IP is to identify large class of instances which can be solved in
(randomized) polynomial time.

An important special case of \IP is the \emph{\IP problem with one
  secret} (\IPS for short), where $B$ is the identity matrix.  From
a cryptographic point of view, the most natural case encountered for
equivalence problems is inhomogeneous polynomials with affine
transformations.  For \IPS, we show that such a case can be handled
in the same way as homogeneous instances with linear transformations
(see Proposition~\ref{prop:homo}). As a side remark, we mention that
there exist more efficient methods to handle the affine case;
typically by considering the homogeneous components,
see~\cite{FaugerePerret06}. However, homogenizing the instances allows
us to make the proofs simpler and cleaner. As such, we focus our
attention to solve \IPS for quadratic homogeneous forms.

When \(m=1\), the \IPS problem can be easily solved by computing a
reduced form of the input quadratic
forms. In~\cite{DBLP:conf/pkc/BouillaguetFFP11}, the authors present
an efficient heuristic algorithm for solving \IPS on quadratic
instances. However, the algorithm requires to compute a Gr\"obner
basis. So, its complexity could be exponential in the worst case.
More recently, \cite{MacariotRatPG2013} proposed a polynomial-time
algorithm for solving \IPS on quadratic instances with $m=2$ over
fields of any characteristic.  We consider here arbitrary $m>1$.

In computer algebra, a fundamental and related
problem is the simplification of a homogeneous polynomial system
$\f\in\K[\x]^m$. That is, compute $A\in\GL_n(\K)$ such that
$\g(\x)=\f(A\cdot\x)$ is easier to solve. In this setting,
\textsc{Ridge} algorithm
(see~\cite{BerthomieuHM2010,Hironaka1970,Giraud1972}) and
\textsc{MinVar} algorithm (see~\cite{Carlini05reducingthe,Kayal2011}) reduce to
the best the number of variables of the system. More generally, for a given
homogeneous polynomial system $\f$, the \emph{Functional Decomposition
  Problem} is the problem of computing $\mathbf{h}=(h_1,\ldots,h_s)$
homogeneous and $\g$ such that $\f(\x)=\g(\mathbf{h}(\x))$.

To simplify the presentation in this introduction, we mainly deal with
fields of characteristic $\not =2$. Results for fields of
characteristic $2$ are also given later in this paper. Now, we define
formally \IPS:
\begin{definition}
  Let $\big( \f=(f_1,\ldots,f_m),\g=(g_1,\ldots,g_m)\big) \in
  \K[x_1,\ldots,x_n]^m \times \K[x_1,\ldots,x_n]^m$. We shall say that
  $\f$ and $\g$ are equivalent, denoted $\f \sim \g$, if $\exists$ $A
  \in \GL_n(\K)$ such that:
  \[\g(\x)=\f(A \cdot \x).\]
  \IPS is then the problem of finding -- if any -- $A \in \GL_n(\K)$
  that makes $\g$ equivalent to $\f$ (\emph{i.e.}\ $A \in \GL_n(\K)$
  such that $\g(\x)=\f(A \cdot \x)$\big).
\end{definition}
In such a case, we present a randomized polynomial-time algorithm for
solving \IPS with quadratic polynomials. To do so, we show that such
a problem can be reduced to the variant of a classical problem of
representation theory over finite dimensional algebras.  In our
setting we need, as in the case $m=1$, to provide a canonical form of
the problem.

\subsubsection*{Canonical Form of \IPS}
Let $\big( \f=(f_1,\ldots,f_m),\g=(g_1,\ldots,g_m)\big) \in
\K[x_1,\ldots,x_n]^m \times \K[x_1,\ldots,x_n]^m$ be homogeneous
quadratic polynomials. Let $H_1,\ldots,H_m$ be the Hessian matrices
of $f_1,\ldots,f_m$ (resp. $H^{\prime}_1,\ldots,H^{\prime}_m$ be the
Hessian matrices of $g_1,\ldots,g_m$). Recall that the Hessian matrix
associated to a $f_i$ is defined as \(H_i=\left(\frac{\partial^2
    f_i}{\partial x_k \partial x_{\ell}}\right)_{k,\ell}\in
\mathbb{K}^{n\times n}\). Consequently, \IPS for quadratic forms is
equivalent to finding \(A \in \GL_n(\K)\) such that:
\begin{equation}\label{eq:orig}
  H_i^{\prime}=A^{\T}\cdot H_i \cdot A,   \text{ for all } i, 1 \leq i \leq m.
\end{equation}
Assuming $H_j$ is invertible, and thus so is $H_j'$, one has
$H_j'^{-1}=A^{-1}H_j^{-1}A^{-\T}$. Combining this with
equation~\eqref{eq:orig} yields $H_j'^{-1}H_i'=A^{-1}\cdot
H_j^{-1}H_i\cdot A$. If none of the $H_i$'s is invertible, then we
look for an invertible linear combination thereof. For this reason, we
assume all along this paper:
\begin{assumption}[Regularity assumption]~\label{as:regular} Let
  $\f=(f_1,\ldots,f_m)\in\K[x_1,\ldots,x_n]$. We assume that a linear
  combination over $\K$ of the quadratic forms $f_1,\ldots,f_m$ is not
  degenerate\footnote{We would like to thank G. Ivanyos for having
    pointed us this issue in a preliminary version of this paper.}. In
  particular, we assume that $|\K| > n$.
\end{assumption}
Taking as variables the entries of $A$, we can see
that~\eqref{eq:orig} naturally yields a nonlinear system of
equations. However, we show that one can essentially linearize
equations~\eqref{eq:orig}. To this end, we prove in
Section~\ref{s:normalization} that under Assumption~\ref{as:regular}
any quadratic homogeneous instance \IPS can be reduced, under a
randomized process, to a canonical form on which -- in particular --
all the quadratic forms are nondegenerate. We shall call these
instances \emph{regular}. More precisely:
\begin{theorem}
  \label{th:normal}
  Let $\K$ be a field of $\car\K\neq 2$. There exists a randomized
  polynomial-time algorithm which given a regular quadratic
  homogeneous instance of \IPS returns ``\NoSol'' only if the two
  systems are not equivalent or a canonical form
  \[\pare{\big(\sum_{i=1}^{n}d_ix_{i}^{2},f_2,\ldots,f_m\big),
    \big(\sum_{i=1}^{n}d_ix_{i}^{2},g_2,\ldots,g_m\big)},\]
  where the $d_i$ are equal to $1$ or a nonsquare in $\K$, \(f_i\) and
  \(g_i\) are \emph{nondegenerate} homogeneous quadratic polynomials
  in \(\K[x_1,\ldots,x_n]\). Any solution on $\K$ on the canonical
  form can be efficiently mapped to a solution of the initial instance
  (and conversely).
\end{theorem}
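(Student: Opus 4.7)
The plan is to construct an explicit randomized reduction exploiting two symmetries that preserve the \IPS relation: applying a common $B\in\GL_m(\K)$ to both tuples (which leaves the unknown $A$ unchanged, since $\g=\f\circ A$ implies $B\g=(B\f)\circ A$), and applying independent invertible variable substitutions $P$ to $\f$ and $Q$ to $\g$ (which transforms the unknown $A$ into $P^{-1}AQ$). The first freedom will be used to make every component nondegenerate, and the second to bring $f_1$ and $g_1$ into a common diagonal form.

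\textbf{Linearization step.} I would draw $B\in\GL_m(\K)$ with entries uniformly chosen from a subset $S\subset\K$ of cardinality strictly greater than $n$ (extending the base field if necessary to meet the Schwartz--Zippel bound). By Assumption~\ref{as:regular}, the polynomial $\det\big(\sum_j \lambda_j H_j\big)$ in $\lambda_1,\ldots,\lambda_m$ is nonzero of total degree $n$, so each row of $B$ yields a nondegenerate combination of the $f_j$'s with probability at least $1-n/|S|$ and $\det B\neq 0$ with probability at least $1-m/|S|$, so re-sampling gives a suitable $B$ in expected polynomial time. Apply the same $B$ to $\g$. Since $\f\sim\g$ via $A$ forces $\tilde g_i(\x)=\tilde f_i(A\x)$ with $\tilde f_i:=(B\f)_i$ and $\tilde g_i:=(B\g)_i$, nondegeneracy of $\tilde f_i$ is equivalent to that of $\tilde g_i$; a mismatch certifies $\f\not\sim\g$, in which case the algorithm returns \NoSol.

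\textbf{Diagonalization step.} I would next diagonalize $\tilde f_1$ via the classical Gauss--Lagrange algorithm (polynomial-time for $\car\K\neq 2$) to obtain $P\in\GL_n(\K)$ with $\tilde f_1(P\x)=\sum_i c_i x_i^2$, and then write each $c_i=a_i^2 d_i$ where $d_i$ is a representative of its square class ($d_i=1$ or a nonsquare in $\K$); absorbing $\Diag(1/a_1,\ldots,1/a_n)$ into $P$ yields $\tilde f_1(P\x)=\sum_i d_i x_i^2$. Then I would invoke a polynomial-time solver for the $m=1$ equivalence of the pair of nondegenerate quadratic forms $(\tilde f_1,\tilde g_1)$: find $M\in\GL_n(\K)$ with $\tilde g_1(\x)=\tilde f_1(M\x)$, returning \NoSol if none exists. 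Setting $Q:=M^{-1}P$ then yields $\tilde g_1(Q\x)=\tilde f_1(MQ\x)=\tilde f_1(P\x)=\sum_i d_i x_i^2$, matching the first component on the $\f$ side.

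\textbf{Output and correspondence.} The algorithm outputs $(\f_{\mrm{new}},\g_{\mrm{new}}):=\big(\tilde\f(P\x),\tilde\g(Q\x)\big)$. By construction $f_{1,\mrm{new}}=g_{1,\mrm{new}}=\sum_i d_i x_i^2$, and the remaining $f_{i,\mrm{new}},g_{i,\mrm{new}}$ stay nondegenerate since nondegeneracy is preserved by invertible variable changes. If $A$ solves the original instance, then $A_{\mrm{new}}:=P^{-1}AQ$ solves the reduced one, because $\g_{\mrm{new}}(\x)=B\g(Q\x)=B\f(AQ\x)=\f_{\mrm{new}}(P^{-1}AQ\x)$; conversely, $A:=PA_{\mrm{new}}Q^{-1}$ pulls a reduced solution back to the original. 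The main obstacle is the diagonalization step: it relies on a polynomial-time solver for the $m=1$ equivalence of nondegenerate quadratic forms over $\K$ and on the ability to extract square-class representatives in $\K$, both of which are classical and effective over the fields of interest (finite fields of odd characteristic, $\R$, $\C$, and number fields via Hasse--Minkowski).
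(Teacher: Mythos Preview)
Your argument is correct and structurally matches the paper's (Proposition~\ref{prop:first_reduction} together with paragraph~\ref{par:invertible}): act by $\GL_m$ on the tuple to force nondegeneracy, then apply separate variable changes to $\f$ and $\g$ to bring the first components to the same diagonal form, recording the bijection $A\leftrightarrow P^{-1}AQ$. The implementation differs in two places worth noting. First, you draw a full random $B\in\GL_m(\K)$ and ask every row to yield a nondegenerate combination, which costs you a union bound over $m$ events and prompts your aside about ``extending the base field'' for Schwartz--Zippel; the paper is thriftier with randomness, randomizing only the first slot $\varphi=\sum_i\lambda_i f_i$ and then \emph{deterministically} repairing each remaining $H_i$ via $H_i\mapsto H_i-\nu_i H_1$ with $\nu_i$ chosen outside the spectrum of $H_iH_1^{-1}$, which needs only $|\K|>n$. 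Second, for matching the diagonals you invoke a black-box $m=1$ isometry solver, whereas the paper simply runs Gau\ss\ reduction on $\varphi$ and $\gamma$ independently and compares; these coincide over finite fields, but your closing remark that this step is routine over number fields ``via Hasse--Minkowski'' overshoots---the paper explicitly flags (just after Proposition~\ref{prop:first_reduction}) that producing identical diagonals over $\Q$ is hard, and Hasse--Minkowski decides equivalence without efficiently constructing the transformation.
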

Let us note that over the rationals, computing the exact same sum of
squares for the first quadratic forms of each set is
difficult, see~\cite[Chapter~3]{Saxena2006},
\cite[Chapter~1]{Wallenborn2013}.
As such, one could only
assume that the first quadratic form of the second set is
$\sum_{i=1}^nd_i'x_i^2$. This does not fundamentally change the
algorithms presented in this paper, beside some matrices denoted by
$D$ which could be changed into $D'=\Diag(d_1',\ldots,d_n')$.

Note that the success probability of the algorithms presented here
will depend on the size of the field. If one looks for
$A\in\LL^{n\times n}$ with $\LL$ an extension of $\K$, one can amplify
the success probability over a small field by using the fact that
matrices are conjugate over $\K$ if and only if they are conjugate
over an algebraic extension $\LL$ (see~\cite{Pazzis2010}). Thus, one
can search linear change of variables with coefficients in some
algebraic extension \(\LL\supseteq\K\) (but of limited
degree).

\subsubsection*{Conjugacy Problem}
When \IPS is given in canonical form, equations~\eqref{eq:orig} can
be rewritten as $A^{\T}\,D\, A = D$ with $D=\Diag(d_1,\ldots,d_n)$ and
$H_i^{\prime}=A^{\T}\cdot H_i \cdot A=D\,A^{-1}\,D^{-1}\cdot H_i \cdot
A$ for all $i, 2 \leq i \leq m$. Our task is now to solve the
following problem:
\begin{definition}[\textbf{$D$-Orthogonal Simultaneous Matrix Conjugacy
    (\MI)}]
  Let $\K^{n\times n}$ be the set of $n\times n$ matrices with entries
  in $\K$. Let $\{H_1,\ldots,H_m\}$ and
  $\{H^{\prime}_1,\ldots,H^{\prime}_m\}$ be two families of matrices
  in $\K^{n\times n}$. The \MI problem is the task to recover -- if
  any -- a $D$-orthogonal matrix $X \in \LL^{n\times n}$, \emph{i.e.}\
  $X^{\T}DX=D$, with \(\LL\) being an algebraic extension of \(\K\),
  such that:
  \begin{eqnarray*}
    X^{-1}\, H_i \, X=H^{\prime}_i, & \quad \forall\,i, 1 \leq i
    \leq m,\end{eqnarray*}  
\end{definition}
\cite{DBLP:conf/issac/ChistovIK97} show that \MI with $D=\id$ is
equivalent to:
\begin{enumerate}
\item Solving the Simultaneous Matrix\ Conjugacy problem (\SMC)
  between $\{H_{i}\}_{1 \leq i \leq m}$ and $\{H^{\prime}_i\}_{1 \leq
    i \leq m}$, that is to say finding an invertible matrix $Y \in
  \GL_n(\K)$ such that:
  \begin{eqnarray}\label{eq:smc0} Y^{-1}\cdot H_i \cdot Y=H^{\prime}_i
    & \text{ and } & Y^{-1}\cdot H_i^{\T} \cdot Y={H^{\prime}_i}^{\T}
    \quad \forall\,i, 1 \leq i \leq m.
  \end{eqnarray}  
\item Computing the square-root \(W\) of the matrix $Z=Y \cdot
  Y^{\T}$. Then, the solution of the \MI problem is given by \(X=Y\,
  W^{-1}\).
\end{enumerate}
In our context, $D=\Diag(d_1,\ldots,d_n)$ is any diagonal invertible
matrix. So, we extend~\cite{DBLP:conf/issac/ChistovIK97} and show that
\MI is equivalent to
\begin{enumerate}
\item Finding an invertible matrix $Y \in \GL_n(\K)$ such that:
  \begin{eqnarray} \label{eq:smc} Y^{-1}\cdot H_i \cdot Y=H^{\prime}_i
    & \text{ and } &D\,Y^{-1}D^{-1}\cdot H_i^{\T} \cdot
    D\,YD^{-1}={H^{\prime}_i}^{\T} \quad \forall\,i, 1 \leq i \leq m.
  \end{eqnarray}  
\item Computing the square-root \(W\) of the matrix $Z=D\,Y \cdot
  Y^{\T}D^{-1}$. Then, the solution of the \MI problem is given by
  \(X=Y\, W^{-1}\).
\end{enumerate}

In our case, the $H_i$'s (resp. $H^{\prime}_i$'s) are symmetric
(Hessian matrices). Thus, condition~\eqref{eq:smc} yields a system of
\emph{linear} equations and one polynomial inequation:
\begin{equation}\label{eq:lin}
  H_1\cdot Y=Y\cdot H^{\prime}_1,\ldots,H_m \cdot Y = Y\cdot H^{\prime}_m
  \text{ and }\det(Y)\neq 0.
\end{equation}
From now on, we shall denote by $\mcal{O}_n(\LL, D)$ the set of
$D$-orthogonal matrices with coefficients in $\LL$.

Let $V \subset \K^{n \times n}$ be the linear subspace of matrices
defined by these linear equations. The \SMC problem is then
equivalent to recovering an invertible matrix in $V$; in other words
we have to solve a particular instance of Edmonds'
problem~(\cite{edm67}). Note that, if the representation of the
algebra spanned by \(\{H_{1}^{-1}H_{i}\}_{1\leq i\leq m} \) is
\emph{irreducible}, we know that \(V\) has dimension at most \(1\)
(Schur's Lemma, see~\cite[Chap.~XVII, Proposition~1.1]{Lang2002}
and~\cite[Lemma~2]{Newman1967} for a matrix version of this
lemma). After putting the equations in triangular form, randomly
sampling over the free variables an element in $V$ yields, thanks to
Schwartz-Zippel-DeMillo-Lipton Lemma~(\cite{DL78,Z79}), a solution to
\MI with overwhelming probability as soon as $\K$ is big enough. If
one accepts to have a solution matrix over an extension field, we can
amplify the probability of success by considering a bigger algebraic
extension (see~\cite{Pazzis2010}).  Whilst a rather ``easy''
randomized polynomial-time algorithm solves \SMC, the task of finding
a deterministic algorithm is more delicate.  In our particular case,
we can adapt the result of~\cite{DBLP:conf/issac/ChistovIK97} and
provide a deterministic polynomial-time algorithm for
solving~\eqref{eq:smc0}.

\subsubsection*{Characteristic $2$}
Let us recall that in characteristic $2$, the associated matrices
$H_1,\ldots,H_m,H_1',\ldots,H_m'$ to quadratic forms can be chosen as
upper triangular. In this context, we show in Section~\ref{ss:char_2}
that \IPS can still be reduced to a $(H_1+H_1^{\T})$-conjugacy
problem.  Under certain conditions in even dimension, we can solve
this conjugacy problem in polynomial-time. These results are well
confirmed by some experimental results presented in
Section~\ref{ss:bench}.  We can recover a solution in less than one
second for $n$ up to one hundred (cryptographic applications of \IPS
usually require smaller values of $n$, typically $\leq 30$, for
efficiency reasons).

\subsubsection*{Matrix Square Root Computation}
It is well known that computing square roots of matrices can be done
efficiently using numerical methods (for instance,
see~\cite{grant}). On the other hand, it seems difficult to control
the bit complexity of numerical methods.
In~\cite[Section~3]{DBLP:conf/issac/ChistovIK97}, the authors consider
the problem of computing, in an exact way, the square root of matrices
over algebraic number fields. As presented, it is not completely clear
that the method proposed is polynomial-time as some coefficients of
the result matrix lie in extensions of nonpolynomial size,
see~\cite{Cai1994}. However, by applying a small trick to the proof
of~\cite{DBLP:conf/issac/ChistovIK97}, one can compute a solution in
polynomial-time for various field of characteristic $\neq 2$.
In any case, for the sake of completeness, we propose two
polynomial-time algorithms for this task. First, a general method
fixing the issue encountered
in~\cite[Section~3]{DBLP:conf/issac/ChistovIK97} is presented in
Section~\ref{ss:computation}.  To do so, we adapt the technique
of~\cite{Cai1994} and compute the square root as the product of two
matrices in an algebraic extension which can both be computed in
polynomial time. The delicate task being to control the size of the
algebraic extensions occurring during the algorithm. In here, each
coefficient of the two matrices are lying in an extension field of
polynomial degree in $n$. Furthermore, these matrices allow us to test
in polynomial time if $H_1,\ldots,H_m$ and $H_1',\ldots,H_m'$ are
indeed equivalent. We then present a
second simpler method based on the \emph{generalized Jordan normal
  form} (see Section~\ref{ss:sqrt_char_p}) which works (in polynomial
time) over finite fields. In general, it deals with algebraic
extensions of smaller degree than the first one. Putting things
together, we obtain our main result:
\begin{theorem}\label{th:probmain}
  Let $\K$ be a field with $\car\K\neq
  2$. Under Assumption~\ref{as:regular}, there exists a randomized
  polynomial-time algorithm for solving quadratic-\IPS over an
  extension field of $\K$ of polynomial degree in $n$.
\end{theorem}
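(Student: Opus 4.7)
The plan is to chain together the three main reductions that have already been sketched in the introduction, and to check that every link in the chain is both randomized polynomial-time and produces output (matrices, field extensions) of polynomial size. Given an instance $(\f,\g)$ of quadratic-\IPS, the first step is to invoke Theorem~\ref{th:normal} and replace $(\f,\g)$ by a regular canonical instance in which the first quadratic form in each tuple equals $\sum_{i=1}^n d_i x_i^2$ for some $d_i\in\{1,s\}$ (with $s$ a nonsquare in $\K$). Under Assumption~\ref{as:regular} this reduction succeeds with high probability by sampling random linear combinations to hit a nondegenerate form, and any $A\in\GL_n$ solving the canonical instance can be lifted back to a solution of the original one. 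The Hessian equation $H_i'=A^{\T}H_iA$ on the canonical instance then becomes the condition that $A$ be $D$-orthogonal and simultaneously conjugate the matrices $H_1^{-1}H_i$ to the ${H_1'}^{-1}H_i'$, i.e.\ an instance of \MI.

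Next I would apply the extension of Chistov--Ivanyos--Karpinski described in the text: \MI reduces to (i) finding a matrix $Y\in\GL_n(\K)$ satisfying the \emph{linear} constraints and invertibility condition~\eqref{eq:smc}--\eqref{eq:lin}, and (ii) computing a square root $W$ of $Z=DYY^{\T}D^{-1}$, after which $X=YW^{-1}$ is the desired $D$-orthogonal simultaneous conjugator. For step~(i), the equations $H_iY=YH_i'$ and the symmetric counterpart define a $\K$-subspace $V\subseteq\K^{n\times n}$; since the symmetry of the $H_i$'s collapses \eqref{eq:smc} to a purely linear system plus the inequation $\det Y\neq 0$, one can put a basis of $V$ in triangular form and draw a random point from $V$. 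If $\f\sim\g$ there exists an invertible element in $V$, so by Schwartz--Zippel--DeMillo--Lipton the determinant polynomial vanishes with probability at most $n/|\K|$ on a random draw; if $\K$ is too small one works over a polynomial-degree algebraic extension $\LL\supseteq\K$, as justified by the $\K$-conjugacy $\Leftrightarrow$ $\LL$-conjugacy result of de~Seguins~Pazzis cited in the text.

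For step~(ii), I would invoke the exact square-root procedure developed in Sections~\ref{ss:computation} and~\ref{ss:sqrt_char_p}: it returns $W$ as a product of two matrices whose entries lie in an extension of $\K$ of degree polynomial in $n$, and it runs in polynomial time. Composing $X=YW^{-1}$ and verifying $X^{\T}DX=D$ and $X^{-1}H_iX=H_i'$ can then be done in polynomial time; if verification fails we output \NoSol. Mapping $X$ back through the normalization of Theorem~\ref{th:normal} yields the $A\in\GL_n(\LL)$ solving the original \IPS instance.

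The main obstacle, and the reason the theorem is nontrivial, is controlling the degree of the algebraic extensions that enter the computation. Naive approaches to (i) blow up $\LL$ to amplify the sampling probability, while naive approaches to (ii), as pointed out after~\cite{DBLP:conf/issac/ChistovIK97}, can produce square roots whose coefficients lie in extensions of nonpolynomial degree (cf.~\cite{Cai1994}). Every step therefore has to be instrumented with a bound on $[\LL:\K]$: a moderate extension suffices for the Schwartz--Zippel sampling, and the factorization $W=W_1W_2$ used in Section~\ref{ss:computation} keeps each factor in a polynomial-degree extension. Once those size bounds are in place the overall algorithm inherits a polynomial running time, which gives Theorem~\ref{th:probmain}.
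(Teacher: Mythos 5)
Your proposal follows the paper's own proof essentially step for step: normalize via Theorem~\ref{th:normal}, reduce to the $D$-orthogonal simultaneous conjugacy problem (Theorem~\ref{th:second_reduction}), solve the linear conjugacy system~\eqref{eq:lin} by random sampling with Schwartz--Zippel--DeMillo--Lipton (amplifying over an extension if $\K$ is small), and finish with the exact blockwise square-root computation of Section~\ref{ss:computation}. One small imprecision: you say ``composing $X=YW^{-1}$ and verifying $X^{\T}DX=D$'' is polynomial time, whereas Proposition~\ref{prop:comput_ortho} emphasizes that the product $S\,T^{-1}=A$ generally \emph{cannot} be formed over a polynomial-degree extension -- the algorithm therefore outputs the pair $(S,T)$ and verifies $S^{\T}H_iS=T^{\T}H_i'T$ in factored form rather than assembling $X$; your closing paragraph shows you are aware of exactly this degree-blowup obstruction, so the argument is sound once that verification step is phrased in terms of the factored representation.
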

Let us note
that the authentication scheme using \IPS requires to find a
solution over the base field. However, it is not always necessary to
find a solution in the base field (typically, in the context of a
key-recovery for multivariate schemes).  In~\cite{BettaleFP2013}, the
authors recover an equivalent key over an extension for the multi-HFE
scheme.

In addition, under some nondegeneracy assumption,
Theorem~\ref{th:probmain} can be turned into a deterministic algorithm
solving \IPS over the base field $\K$ or an extension thereof. That
is:
\begin{theorem}\label{th:main}
  Under Assumption~\ref{as:regular} and the assumption that one of the
  quadratic form is nondegenerate, there is a deterministic
  polynomial-time algorithm for solving quadratic-\IPS over an
  extension of $\K$ of polynomial degree in $n$. Furthermore, if the
  space of matrices satisfying equations~\eqref{eq:lin} has dimension $1$,
  then the algorithm can solve quadratic-\IPS over $\K$.
\end{theorem}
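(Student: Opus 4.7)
The plan is to derandomize the pipeline behind Theorem~\ref{th:probmain} by replacing each randomized step with a deterministic analogue, and then to exploit the rigidity of a one-dimensional solution space to push the output from an algebraic extension back down to $\K$. I would first apply Theorem~\ref{th:normal} to put the instance in canonical form with $H_1 = H_1' = D = \Diag(d_1,\ldots,d_n)$. Under the extra hypothesis that some quadratic form is nondegenerate, this normalization can itself be derandomized since, by Assumption~\ref{as:regular}, a nondegenerate linear combination of the $f_i$'s can be exhibited by a deterministic sweep over a basis of $\K^m$ rather than by random sampling. \IPS then becomes the \MI problem, which by the reduction~\eqref{eq:smc}--\eqref{eq:lin} splits into (a) finding an invertible matrix $Y$ in the linear subspace $V \subset \K^{n \times n}$ cut out by~\eqref{eq:lin}, and (b) computing a square root $W$ of $Z = D Y Y^{\T} D^{-1}$, the final \IPS solution being $X = Y W^{-1}$.

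For step (a), I would employ the deterministic adaptation of the \cite{DBLP:conf/issac/ChistovIK97} algorithm for \SMC announced earlier in the paper, which finds an invertible element of $V$ in polynomial time whenever one exists -- a condition guaranteed here by the invertibility of $H_1$ together with the existence of the sought isomorphism. For step (b), I would invoke the exact matrix square-root algorithm of Section~\ref{ss:computation}, which outputs $W$ as a product of two matrices whose entries lie in an algebraic extension $\LL \supset \K$ of degree polynomial in $n$. Composing, $X = Y W^{-1} \in \GL_n(\LL)$ is obtained in deterministic polynomial time, proving the first assertion.

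For the second assertion, suppose $\dim V = 1$ and let $Y_0 \in V \cap \K^{n \times n}$ span $V$. Every invertible $Y \in V$ has the form $\lambda Y_0$ with $\lambda \in \K^*$, so $Z = \lambda^2 Z_0$ for the fixed matrix $Z_0 = D Y_0 Y_0^{\T} D^{-1}$; moreover, if $W_0$ is any square root of $Z_0$ then $\lambda W_0$ is a square root of $Z$, so the output $X = Y W^{-1} = Y_0 W_0^{-1}$ of the pipeline is in fact independent of $\lambda$. Existence of a $\K$-rational \IPS solution therefore reduces to computing a suitable $\K$-rational square root of the single matrix $Z_0$, a condition that can be tested and, when satisfied, realized by a classical deterministic polynomial-time procedure over $\K$ (for instance via the rational Jordan form of $Z_0$). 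Tracing back the reduction then yields $X \in \GL_n(\K)$ whenever one exists.

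The main obstacle is the square-root step. Numerical methods give no bit-complexity guarantees, and the straightforward algebraic approach of~\cite{DBLP:conf/issac/ChistovIK97} can produce algebraic extensions of nonpolynomial degree, as already observed in~\cite{Cai1994}; it is the two-factor representation developed in Section~\ref{ss:computation} that keeps each intermediate extension of polynomial degree. With this technical ingredient in hand, the remainder of the proof is the careful assembly of the deterministic subroutines described above.
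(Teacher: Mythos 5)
Your derandomization of the first assertion follows the paper's own route: under the extra hypothesis one can find an invertible Hessian by a deterministic sweep, the normalization of Section~\ref{s:normalization} then becomes deterministic, an invertible $Y$ in the linear space $V$ cut out by~\eqref{eq:lin} is produced deterministically via \cite[Theorem~2]{DBLP:conf/issac/ChistovIK97}, and the exact square-root algorithm of Section~\ref{ss:computation} yields a solution over an extension of polynomial degree. So far the proposal and the paper coincide.

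For the dimension-one assertion you take a slightly more indirect route. You correctly note that along the ray $Y = \lambda Y_0$ the output $X = Y_0 W_0^{-1}$ is independent of $\lambda$, and you reduce the task to producing a ``suitable $\K$-rational square root'' of the fixed matrix $Z_0$. The paper instead (concretely, Algorithm~\ref{al:crypto}) observes that since $\dim V = 1$ the unknown degenerates to the single scalar $\lambda$, recovered from any one quadratic constraint $\lambda^2 (Y_0^{\T} H_1 Y_0)_{ij} = (H_1')_{ij}$ by a single scalar square root. Your version is equivalent but has one hand-wave you should not leave in: computing and testing a $\K$-rational matrix square root is, in general, exactly the nontrivial problem the whole of Section~\ref{s:matsqrt} is devoted to, and there the square root is only guaranteed over an extension of polynomial degree, not over $\K$. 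What rescues your argument -- and what you should make explicit -- is that when a $\K$-rational solution $A = \lambda Y_0$ exists, $D$-orthogonality of $A$ forces $Z_0 = D^{-1}Y_0^{\T} D Y_0 = \lambda^{-2}\id$ to be a \emph{scalar} matrix, so the ``suitable'' square root is just $\lambda^{-1}\id$ and the whole step collapses to a scalar square root in $\K$, i.e.\ precisely the paper's computation. Without this observation, your appeal to a ``classical deterministic polynomial-time procedure over $\K$'' for a general matrix square root is unjustified; with it, the two proofs coincide.

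Two small remarks. First, the nondegenerate form is exhibited by the \emph{extra hypothesis} of the theorem (one of the given forms is nondegenerate), not by Assumption~\ref{as:regular} alone; the latter only asserts existence of a nondegenerate linear combination, which is not derandomized by sweeping the standard basis. Second, the matrix you should be taking the square root of is $Z = D^{-1} Y^{\T} D Y$ as in Proposition~\ref{prop:pseudoortho_solution} and Algorithm~\ref{al:one_sol} (the formula $D Y Y^{\T} D^{-1}$ in the introduction does not commute with the $K_i'$'s in general); this does not affect the structure of your argument.
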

Let us note that assuming that one of the Hessian matrix is invertible
is not a strong assumption when the size of $\K$ is not too
small. Indeed, the probability of picking a random invertible
symmetric matrix over $\F_q$ is
\[\frac{\prod_{i=1}^{n}(1-q^{-i})}
{\prod_{i=1}^{\lfloor n/2\rfloor}(1-q^{-2i})}
=\prod_{i=1}^{\lceil n/2\rceil}(1-q^{-2i+1}),\]
see~\cite[Equations~4.7 and~4.8]{Carlitz1954}.

If $m\geq 3$, for random matrices $H_1,\ldots,H_m$, the set of
solutions of equations~\eqref{eq:lin} is a $1$-dimensional matrix
space. This allows us to solve quadratic-\IPS in polynomial-time
over $\K$.
In Section~\ref{ss:bench}, we present our
timings for solving \IPS. These experiments confirm that for randomly
chosen matrices, our method solves \IPS over $\K$. We remark also
that our method succeeds to solve \IPS over $\F_2$ for public-keys
whose sizes are much bigger than practical ones.

In Section~\ref{s:enumeration}, we consider the counting problem
\#\IPS associated to \IPS for quadratic (homogeneous) polynomials
in its canonical form (as defined in Theorem~\ref{th:normal}). Note
that such a counting problem is also related to cryptographic
concerns. It corresponds to evaluating the number of equivalent
secret-keys in \MPKC, see~\cite{DFPW11,WP11}. Given homogeneous quadratic
polynomials $\big( \f=(f_1,\ldots,f_m),\g=(g_1,\ldots,g_m)\big) \in
\K[x_1,\ldots,x_n]^m \times \K[x_1,\ldots,x_n]^m$, we want to count
the number of invertible matrices $A \in \GL_n(\K)$ such that $
\g(\x)=\f(A \cdot \x).$ To do so, we define:
\begin{definition}
  Let $\f=(f_1,\ldots,f_m) \in \K[x_1,\ldots,x_n]^m$, we shall call
  \emph{automorphism group of $\f$} the set:
  \[\mathcal{G}_{\f}=\{ A \in \GL_n(\K) \mid \f(A\cdot \x) =\f(\x)
  \}.\]
\end{definition}
If $\f \sim \g$, the automorphism groups of $\f$ and $\g$ are similar.
Thus, the size of the automorphism group of $\f$ allows us to count
the number of invertible matrices mapping $\f$ to $\g$. For quadratic
homogeneous polynomials, the automorphism group coincides with the
subset of regular matrices in the centralizer $\Cent(\cH)$ of the
Hessian matrices $\mcal{H}$ associated to $\f$. Taking $\alpha$ an
algebraic element of degree $m$ over $\K=\F_q$, let us assume the Jordan
normal form of $H=\sum_{i=i}^m\alpha^{i-1}\,H_i$ has Jordan blocks of
sizes $s_{i,1}\leq\cdots\leq s_{i,d_i}$ associated to eigenvalue
$\zeta_i$, for $i,1\leq i\leq r$. Then, as a consequence
of~\cite[Lemma~4.11]{Singla2010}, we prove that, if $q$ is an odd
prime power, then the number of
solutions of quadratic-\IPS in $\F_q^{n\times n}$ is bounded from
above by:
\[q^{\pare{\sum_{1\leq i\leq r}\sum_{1\leq j\leq d_i}(2\,d_i-2\,j+1)s_{i,j}}}-1.\]
 
\subsection*{Open Question: The Irregular Case}
Given a quadratic instance of \IPS, a nondegenerate instance is an
instance wherein the matrix whose rows are all the rows of
$H_1,\ldots,H_m$ has rank $n$.  In paragraph~\ref{par:variables}, we
see how to transform some degenerate instances into nondegenerate
instances. However, nondegenerate instances are not always regular
instances. There are cases, the so-called \emph{irregular} cases, such
that the vector space of matrices spanned by $H_1,\ldots,H_m$ does not
contain a nondegenerate matrix.  This situation is well illustrated by
the following example $f_1=x_1x_3,f_2=x_2x_3$.  Any linear combination
of $f_1,f_2$ is degenerate, while $\f=(f_1,f_2)$ is not.  Note that we
can decide in randomized polynomial time if an instance of
quadratic-\IPS is irregular since it is equivalent to checking if a
determinant is identically equal to zero; thus it is a particular
instance of polynomial identity testing.  In the irregular case, it is
clear that our algorithm fails.  In fact, it seems that most known
algorithms dedicated to
quadratic-\IPS~(\cite{DBLP:conf/pkc/BouillaguetFFP11,MacariotRatPG2013})
will fail
on these instances; making the hardness of the irregular case
intriguing and then an interesting open question.
 
\subsection*{Special case of \IP}
In our quest of finding instances of \IP solvable in
polynomial-time, we take a first step in Section~\ref{s:ip}.  We
consider \IP for a specific set of polynomials with $m=n$. In the
aforementioned Section~\ref{s:ip}, we prove the following:
\begin{theorem} \label{th:ip}
  Let $\g=(g_1,\ldots,g_n) \in \K[x_1,\ldots,x_n]^n$ be given in
  dense representation, and
  $\f=\POW_{n,d}=(x_1^d,\ldots,x_n^d) \in \K[x_1,\ldots,x_n]^n$ for
  some $d>0$. Whenever $\car\K=0$, let $e=d$ and
  $\tilde{\g}=\g$. Otherwise, let $p=\car\K$, let $e$ and $r$ be integers such
  that $d=p^r\,e$, $p$ and $e$ coprime, and let $\tilde{\g}\in\K[\x]^n$
  be such that $\g(\x)=\tilde{\g}(\x^{p^r})$. Let $L$ be a polynomial
  size of an arithmetic circuit to evaluate the determinant
  of the Jacobian matrix of $\tilde{\g}$. If the size
  of $\K$ is at least $12\,\max
  (2^{L+2},e\,(n-1)\,2^{e\,(n-1)}+e^3\,(n-1)^3,2\,(e\,(n-1)+1)^4)$,
  then there is a randomized
  polynomial-time algorithm which recovers -- if any --
  $(A,B)\in\GL_n(\K)\times\GL_n(\K)$ such that:
  \[\g=B\cdot\POW_{n,d}(A\cdot\x).\]
\end{theorem}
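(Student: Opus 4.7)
The plan is to reduce the problem to multivariate polynomial factorization, carried out in four logical steps. I first normalize by Frobenius. If $\car\K = p$ and $d = p^r\,e$, then the identity $\g = B \cdot \POW_{n,d}(A\cdot\x)$ rewrites as $\g(\x) = \tilde{\g}(\x^{p^r})$ with $\tilde{\g}(\y) = B \cdot \POW_{n,e}(A^{(p^r)}\cdot\y)$, where $A^{(p^r)}$ denotes the entrywise $p^r$-th power of $A$. Because Frobenius is injective on $\K$ (and bijective on the entries of $A$ in the yes-instance case), recovering $(B, A^{(p^r)})$ from $\tilde{\g}$ lets me recover $(B, A)$. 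I may therefore assume $\gcd(\car\K, e) = 1$ and work throughout with $\tilde{\g}$ and exponent $e$.

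The key observation is that the Jacobian determinant of $\tilde{\g}$ is a product of $n$ linear forms. Letting $L_j(\y) = (A^{(p^r)}\y)_j$ for $1 \le j \le n$, direct differentiation gives $\mathrm{Jac}(\tilde{\g}) = e \cdot B \cdot \Diag(L_1^{e-1},\ldots,L_n^{e-1}) \cdot A^{(p^r)}$, and hence
\[
J(\y) := \det \mathrm{Jac}(\tilde{\g}) = e^n \, \det(B) \, \det(A^{(p^r)}) \prod_{j=1}^n L_j(\y)^{e-1}.
\]
This is a homogeneous polynomial of degree $n(e-1)$ that splits as a product of $n$ linear forms in $\K[\y]$, each with multiplicity $e-1$. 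Since $A$ is invertible the $L_j$ are pairwise non-proportional, so after factoring $J$ I can read off $A^{(p^r)}$ up to a row scaling and a row permutation.

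Given the size-$L$ arithmetic circuit evaluating $J$, I apply Kaltofen's randomized polynomial-time black-box multivariate factorization algorithm and return \NoSol if its output does not consist of $n$ linear factors each with multiplicity $e-1$. Writing the recovered factors as $\hat{L}_j = \mu_j L_{\sigma(j)}$ for unknown nonzero scalars $\mu_j$ and an unknown permutation $\sigma$, I assemble the matrix $\hat{A}$ whose rows are the $\hat{L}_j$ and solve a linear system in an unknown matrix $\hat{B}$ expressing each $\tilde{g}_i$ as $\sum_j \hat{B}_{ij}\,\hat{L}_j^e$; the scalar and permutation ambiguities are absorbed into $\hat{B}$, and the system is consistent because $\hat{L}_1^e,\ldots,\hat{L}_n^e$ are $\K$-linearly independent in $\K[\y]_e$ (a standard consequence of the $\hat{L}_j$ being linearly independent, assuming $|\K|$ is large enough to apply Schwartz--Zippel in the dual argument). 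I then take $p^r$-th roots of the entries of $\hat{A}$ to obtain $A$ and verify that $(\hat{B}, A)$ indeed satisfies $\g = \hat{B} \cdot \POW_{n,d}(A\cdot\x)$.

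The main obstacle is the black-box factorization step. The three bounds $2^{L+2}$, $e(n-1)\,2^{e(n-1)} + e^3(n-1)^3$ and $2(e(n-1)+1)^4$ in the field-size hypothesis are calibrated precisely so that the Schwartz--Zippel--DeMillo--Lipton lemma certifies the success of the various random choices inside Kaltofen's algorithm (evaluations that must remain nonzero, random shifts that must make all linear factors distinct in a well-chosen direction, hitting sets for the ``lift and recombine'' phase), and so that the $n$ linear factors can be cleanly extracted from the irreducible factorization of $J$ over $\K$. Once this factorization is in hand, the remaining steps reduce to linear algebra and inverting Frobenius on the entries of $\hat{A}$.
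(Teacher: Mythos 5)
Your proposal is correct and follows essentially the same route as the paper: reduce out the Frobenius part of the exponent, observe via the chain rule that $\det\rJ_{\tilde{\g}}$ splits as a scalar times a product of $n$ linear forms to the power $e-1$ (paper's Lemmas~\ref{lm:Jac} and~\ref{lm:det}), apply Kaltofen's black-box factorization (paper's Theorem~\ref{th:Kaltofen}), recover $A$ from the linear factors up to scaling/permutation, and solve for $B$ by linear algebra. Your handling of the scaling/permutation ambiguity being absorbed into $\hat{B}$ is a slightly more explicit presentation of the same argument.
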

This extends a similar result of~\cite[Section~5]{Kayal2011} who
considered \PE for a sum of $d$-power polynomials. We show that
solving \IP for $\POW_{n,d}$ reduces to factoring the determinant of
a Jacobian matrix (in~\cite{Kayal2011}, the Hessian matrix is
considered). This illustrates, how powerful partial derivatives can be
in equivalence
problems~(\cite{DBLP:journals/fttcs/ChenKW11,DBLP:conf/eurocrypt/Perret05}).
To go along with the proof of Theorem~\ref{th:ip}, we design
Algorithm~\ref{al:ip} at the end of Section~\ref{s:ip}.

\section{Normalization - Canonical form of \IPS}
\label{s:normalization}
\setcounter{secnumdepth}{4}
\renewcommand\theparagraph{\thesection.\roman{paragraph}}
In this
section, we prove Theorem~\ref{th:normal}. In other words, we explain
how to reduce, under Assumption~\ref{as:regular}, any quadratic homogeneous
instance $(\f,\g) \in
\K[x_1,\ldots,x_n]^m \times \K[x_1,\ldots,x_n]^m$ of \IPS to a
suitable canonical form, \emph{i.e.}\ an instance of \IPS where all
the Hessian matrices are invertible and the first two equal
the same diagonal invertible matrix. We emphasize that the reduction presented
is randomized.

\paragraph{Homogenization}\label{par:homo}
We show here that the equivalence problem over inhomogeneous
polynomials with affine transformation on the variables reduces to the
equivalence problem over homogeneous polynomials with linear
transformation on the variables.  To do so, we simply homogenize the
polynomials.
Let $x_0$ be a new variable. For any polynomial $p\in\K[\x]$ of degree
$2$, we denote by
$p^{\star}(x_0,x_1,\ldots,x_n)=x_0^2\,p(x_1/x_0,\ldots, x_n/x_0)$ its
\emph{homogenization}.
\begin{proposition} \label{prop:homo} $\IPS$ with quadratic
  polynomials and affine transformation on the variables can be reduced in polynomial-time to \IPS with homogeneous quadratic polynomials and linear
  transformation on the variables.
\end{proposition}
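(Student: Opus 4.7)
The plan is to introduce a fresh variable $x_0$ and homogenize each quadratic polynomial via $p^{\star}(x_0,\x)=x_0^2\,p(\x/x_0)$, as defined just above the statement. To ensure a clean correspondence between affine solutions of the original instance and linear solutions of the homogenized one, I would augment both tuples with the rigidifying polynomial $x_0^2$, producing $\tilde{\f}=(x_0^2,f_1^{\star},\ldots,f_m^{\star})$ and $\tilde{\g}=(x_0^2,g_1^{\star},\ldots,g_m^{\star})$, an instance of homogeneous quadratic \IPS in $n+1$ variables. This transformation is evidently computable in polynomial time.

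For the forward implication, given an affine equivalence $(A,b)\in\GL_n(\K)\times\K^n$ with $\g(\x)=\f(A\,\x+b)$, I would exhibit the explicit linear map $\tilde{A}=\begin{pmatrix}1&0\\b&A\end{pmatrix}\in\GL_{n+1}(\K)$, which preserves $x_0^2$ by construction and, by the substitution $\tilde{A}\cdot(x_0,\x)^{\T}=(x_0,\,b\,x_0+A\,\x)^{\T}$, sends $f_i^{\star}$ onto $g_i^{\star}$ for every $i$. Thus $\tilde{A}$ solves the augmented homogeneous \IPS.

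For the reverse implication, suppose $\tilde{A}\in\GL_{n+1}(\K)$ solves the augmented instance, and let $\tilde{x}_0$ be the linear form obtained as the first coordinate of $\tilde{A}\cdot(x_0,\x)^{\T}$. The preservation of $x_0^2$ reads $\tilde{x}_0^{\,2}=x_0^2$ in the UFD $\K[x_0,\x]$, which forces $\tilde{x}_0=\pm x_0$ (and just $\tilde{x}_0=x_0$ in characteristic $2$). Consequently $\tilde{A}$ has block shape $\begin{pmatrix}\pm 1&0\\ b&A\end{pmatrix}$ with $A\in\GL_n(\K)$, and specializing the remaining equalities $f_i^{\star}(\tilde{A}\cdot(x_0,\x)^{\T})=g_i^{\star}(x_0,\x)$ at $x_0=1$ produces the sought affine equivalence $f_i(A\,\x+b)=g_i(\x)$ (absorbing the sign into $(A,b)\mapsto(-A,-b)$ if $\tilde{x}_0=-x_0$).

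The delicate point is precisely the augmentation by $x_0^2$: without it, the homogenized problem can admit phantom linear equivalences that do not stabilize the hyperplane $\{x_0=0\}$ and therefore do not dehomogenize to affine maps of the original variables; a simple example is already produced by the presence of a nontrivial quadratic part that can be mixed with the $x_0$ direction by an $\tilde{A}$ not of the desired block shape. Pinning $x_0$ by adding its square rigidifies it up to sign and turns the correspondence into an essentially two-to-one bijection, yielding the polynomial-time reduction.
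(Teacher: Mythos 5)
Your proposal is correct and follows essentially the same route as the paper: homogenize, augment both tuples with the rigidifying polynomial $x_0^2$, embed affine maps as block-lower-triangular linear maps, and in the converse direction use the constraint $\tilde{x}_0^2=x_0^2$ to force the first row to be $(\pm 1,0,\ldots,0)$ before dehomogenizing. The only cosmetic difference is how the sign is resolved -- the paper rescales the solution $A'$ by $a'_{0,0}=\pm 1$ (using that degree-$2$ homogeneous forms are invariant under global negation), while you absorb the sign into $(A,b)\mapsto(-A,-b)$; the two are equivalent.
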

\begin{proof}
  Let $(\f,\g)\in\K[\x]^m \times \K[\x]^m$ be inhomogeneous
  polynomials of degree $2$. We consider the transformation which maps
  $(\f,\g)$ to
  $\big(\f^{\star}=(f_0^{\star}=x_0^2,f_1^{\star},\ldots,f_m^{\star}),
  \g^{\star}=(g_0^{\star}=x_0^2,g_1^{\star},\ldots,g_m^{\star})\big)$. This
  clearly transforms polynomials of degree $2$ to homogeneous
  quadratic polynomials.  We can write
  $f_i(\x)=\x^{\T}\,H_i\,\x+L_i\,\x+c_i$ with $H_i\in\K^{n\times n}$,
  $L_i\in\K^n$ and $c_i\in\K$, then
  $f_i(A\x+b)=(A\x+b)^{\T}\,H_i\,(A\x+b)+L_i\,(A\x+b)+c_i$ and its
  homogenization is
  $(A\x+bx_0)^{\T}\,H_i\,(A\x+bx_0)+L_i\,(A\x+bx_0)\,x_0+c_i\,x_0^2
  =f^{\star}_i(A'\x^{\star})$,
  with $\x^{\star}=(x_0,x_1,\ldots,x_n)^{\T}$.  If $(A,b) \in
  \GL_{n}(\K) \times \K^n$ is an affine transformation solution on the
  inhomogeneous instance then $A^{\prime}=\pare{\begin{smallmatrix}1 &\mbf{0}\\
      b &A\end{smallmatrix}}$ is a solution for the homogenized
  instance. Conversely, a solution $A' \in \GL_{n+1}(\K)$ of the
  homogeneous problem must stabilize the homogenization variable $x_0$
  in order to be a solution of the inhomogeneous problem. This is
  forced by adding $f_0=x_0^2$ and $g_0=x_0^2$ and setting
  $C'=A'/a_{0,0}'$, with $a_{0,0}'=\pm 1$. One can see that $C'$ is of the form
  $\pare{\begin{smallmatrix}1 &\mbf{0}\\d &C\end{smallmatrix}}$, and
  $(C,d)\in \GL_n(\K) \times \K^n$ is a solution for $(\f,\g)$.
\end{proof}

\paragraph{Redundant Variables}\label{par:variables}
As a first preliminary natural manipulation, we first want to
eliminate -- if any -- \emph{redundant variables} from the instances
considered.  Thanks to~\cite{Carlini05reducingthe} (and
reformulated in~\cite{Kayal2011}), this task can be done in
randomized polynomial time:
\begin{proposition}(\cite{Carlini05reducingthe,Kayal2011})
  \label{prop:carl}
  Let $f \in \K[x_1\ldots,x_n]$ be a polynomial. We shall say that $f$ has $s$
  \emph{essential variables} if $\exists \,M \in \GL_n(\K)$ such that $f(M
  \x)$ depends only on the first $s$ variables $x_1,\ldots,x_s$. The
  remaining $n-s$ variables $x_{s+1}\ldots,x_n$ will be called
  \emph{redundant} variables.
  If $\car\K=0$ or
  $\car\K>\deg f$, and $f$ has $s$ essential variables, then we can
  compute in randomized polynomial time $M \in \GL_n(\K)$ such that $f(M\,\x)$
  depends only on the first $s$ variables.
\end{proposition}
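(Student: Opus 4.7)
The strategy is to characterize the linear changes of variables that eliminate redundancy via the kernel of a certain linear map built from the first-order partial derivatives of $f$, and then to compute this kernel with a randomized linear-algebra procedure.

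Concretely, let $\phi:\K^{n}\to\K[x_{1},\ldots,x_{n}]$ be the linear map
\[\phi(v_{1},\ldots,v_{n})=\sum_{i=1}^{n}v_{i}\,\frac{\partial f}{\partial x_{i}}.\]
The first step is to show that if $M\in\GL_{n}(\K)$ makes $\tilde{f}(\y)=f(M\y)$ depend only on $y_{1},\ldots,y_{s}$, then the last $n-s$ columns of $M$ lie in $\ker\phi$, and conversely, any completion of a basis of $\ker\phi$ to a basis of $\K^{n}$ by $s$ independent vectors yields such an $M$. This is a direct chain-rule computation: $\partial\tilde{f}/\partial y_{j}=\sum_{i}M_{i,j}(\partial f/\partial x_{i})(M\y)$, and since $M$ is invertible, this vanishes as a polynomial in $\y$ iff $\sum_{i}M_{i,j}\partial f/\partial x_{i}=0$ in $\K[\x]$. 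The characteristic assumption ($\car\K=0$ or $\car\K>\deg f$) is used precisely here to guarantee that no partial derivative spuriously vanishes; equivalently, that $\dim\ker\phi=n-s$ exactly when $f$ has $s$ essential variables. This uses the fact, which I would quote from~\cite{Carlini05reducingthe}, that the number of essential variables equals $n-\dim\ker\phi$ under this characteristic hypothesis.

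The second step is algorithmic: compute $\ker\phi$ in randomized polynomial time. The partial derivatives $p_{i}=\partial f/\partial x_{i}\in\K[\x]$ are at most $|\K|$-sized polynomials of degree at most $\deg f-1$, and we need to find all linear dependencies among them. Pick $N$ points $\mbf{a}^{(1)},\ldots,\mbf{a}^{(N)}$ uniformly at random in a sufficiently large subset of $\K^{n}$ (possibly after extending $\K$ if needed; Assumption~\ref{as:regular} already enforces $|\K|>n$, which can be boosted), form the $N\times n$ evaluation matrix $E$ with $E_{k,i}=p_{i}(\mbf{a}^{(k)})$, and return $\ker E$. By the Schwartz--Zippel--DeMillo--Lipton Lemma applied to any polynomial dependence candidate, with $N$ polynomially large, the right kernel of $E$ coincides with $\ker\phi$ with probability at least $1-1/\mathrm{poly}(n)$.

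The third and final step is to assemble $M$: take a basis $(u_{s+1},\ldots,u_{n})$ of $\ker\phi$, complete it by standard Gaussian elimination to a basis $(u_{1},\ldots,u_{n})$ of $\K^{n}$, and set $M$ to be the matrix whose $j$-th column is $u_{j}$. Then $M\in\GL_{n}(\K)$ and by the first step $f(M\x)$ depends only on $x_{1},\ldots,x_{s}$. The main obstacle to watch for is the characteristic condition: the equality $\dim\ker\phi=n-s$ would fail in small characteristic (e.g.\ $f=x^{p}$ has vanishing derivative but $x$ is essential), and this is exactly what the hypothesis $\car\K=0$ or $\car\K>\deg f$ prevents. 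Everything else reduces to polynomial-time linear algebra over $\K$ together with one Schwartz--Zippel-style correctness check.
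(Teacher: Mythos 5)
The paper does not prove Proposition~\ref{prop:carl}; it is quoted directly from~\cite{Carlini05reducingthe,Kayal2011}, so there is no in-paper proof to compare against. Your reconstruction follows the standard Carlini argument: the kernel of the map $v\mapsto\sum_i v_i\,\partial f/\partial x_i$ identifies the directions that can be removed, the chain rule gives the equivalence with columns of $M$ lying in that kernel, and the characteristic hypothesis is exactly what makes $\partial g/\partial y_j=0$ imply that $y_j$ is absent from $g$. That is the right proof.

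One technical point to tighten: the claim that, ``by Schwartz--Zippel applied to any polynomial dependence candidate,'' the right kernel of the evaluation matrix $E$ equals $\ker\phi$ with high probability does not quite work as stated, because there are infinitely many (or exponentially many, over a finite field) candidate vectors $v\notin\ker\phi$, so you cannot union-bound over them. The clean version is a rank argument: one always has $\ker\phi\subseteq\ker E$, hence $\rank E\le r:=n-\dim\ker\phi$; it suffices to show $\rank E\ge r$ with high probability. Picking $r$ linearly independent $p_{i_1},\dots,p_{i_r}$ among the partials, the $r\times r$ alternant $\det\bigl(p_{i_j}(\mbf{a}^{(k)})\bigr)_{j,k}$ is a \emph{nonzero} polynomial in the $r$ sample points (an easy induction using linear independence of the $p_{i_j}$), and Schwartz--Zippel is applied to this single determinant. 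Thus $N=r\le n$ random points already suffice; no polynomially large $N$ is needed. With that fix, the argument is complete and matches the literature you cite.
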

For a quadratic form, $s$ is simply the rank of the associated Hessian
matrix. As such, for $m=1$, a quadratic instance is regular if and
only if the associated Hessian matrix is invertible.
For a set of equations, we extend the notion of essential variables as
follows.
\begin{definition}\label{def:ess}
  The number of \emph{essential variables} of
  $\f=(f_1,\ldots,f_m)\in\K[x_1,\ldots,x_n]^m$ is the smallest $s$
  such that $\f$ can be decomposed as:
  \[\f=\tilde{\f}(\ell_1,\ldots,\ell_{s})\]
  with $\ell_1,\ldots,\ell_{s}$ being linear forms in $x_1,\ldots,x_n$ of
  rank $s$ and $\tilde{\f}\in\K[y_1,\ldots,y_{s}]^m$.
\end{definition}
The linear forms $\ell_1,\ldots,\ell_s$ can be easily
computed thanks to Proposition~\ref{prop:carl} when the characteristic of
$\K$ is zero or greater than the degrees of $f_1,\ldots,f_m$. In
characteristic $2$, when $\K$ is perfect (which is always true if $\K$
is finite for instance) the linear forms can also be recovered in
polynomial time (see~\cite{BerthomieuHM2010,Giraud1972,Hironaka1970} for
instance).  Below, we show that we can restrict our attention to only
essential variables. Namely, solving \IPS on $(\f,\g)$ reduces to
solving \IPS on instances having only essential variables.
\begin{proposition} \label{lm:rank} Let
  $(\f,\g)\in\K[x_1,\ldots,x_n]^m \times \K[x_1,\ldots,x_n]^m $ be two
  sets of quadratic polynomials. If
  $\f \sim \g$, then their numbers of essential variables must be the
  same. Let $s$ be the number of essential variables of $\f$. Finally,
  let $(\tilde{\f},\tilde{\g})\in\K[y_1,\ldots,y_s]^m \times
  \K[y_1,\ldots,y_s]^m$ be such that:
  \[\f=\tilde{\f}(\ell_1,\ldots,\ell_{s}) \mbox{ and }
  \g=\tilde{\g}(\ell'_1,\ldots,\ell'_{s}),\] with
  $\ell_1,\ldots,\ell_{s}$ (resp. $\ell'_1,\ldots,\ell'_{s}$) linear
  forms in $\x$ of rank $s$ and $\tilde{\f},
  \tilde{\g}\in\K[y_1,\ldots,y_{s}]^m$.  It holds that:
  \[\f \sim \g \iff \tilde{\f} \sim \tilde{\g}.\]
\end{proposition}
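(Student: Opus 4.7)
The plan is to first show that $\sim$ preserves the number of essential variables, then prove the biconditional by changing coordinates so that both $\f$ and $\g$ depend only on their first $s$ variables, and finally extract the relevant $s \times s$ block. For the count preservation, I would observe that if $\g(\x) = \f(A\x)$ with $A \in \GL_n(\K)$ and $\f = \tilde{\f}(L\x)$ with $L \in \K^{s \times n}$ of rank $s$ (whose rows are $\ell_1,\ldots,\ell_s$), then $\g(\x) = \tilde{\f}\bigl((LA)\x\bigr)$ with $\rank(LA) = \rank(L) = s$, so $\g$ has at most $s$ essential variables; applying the same argument to $A^{-1}$ gives equality. Note that, by the same minimality argument, $\tilde{\f}$ and $\tilde{\g}$ themselves have \emph{exactly} $s$ essential variables, a fact that will be needed below.

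For the forward direction of the equivalence, I would extend $L$ and $L'$ to invertible matrices $M, M' \in \GL_n(\K)$ whose first $s$ rows are $L$ and $L'$ respectively (this is possible since both have rank $s$). Then $\f(M^{-1}\y) = \tilde{\f}(y_1,\ldots,y_s)$ and $\g(M'^{-1}\y) = \tilde{\g}(y_1,\ldots,y_s)$ depend only on the first $s$ coordinates. Setting $C = M\,A\,M'^{-1} \in \GL_n(\K)$, the relation $\g(\x) = \f(A\x)$ becomes
\[\tilde{\g}(\y_{1:s}) = \tilde{\f}\bigl(C_1\,\y_{1:s} + C_2\,\y_{s+1:n}\bigr)\quad\text{for all }\y \in \K^n,\]
where $C_1 \in \K^{s \times s}$ and $C_2 \in \K^{s \times (n-s)}$ denote the top-left and top-right blocks of $C$. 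Specializing $\y_{s+1:n} = 0$ yields $\tilde{\g}(\y_{1:s}) = \tilde{\f}(C_1\,\y_{1:s})$ on all of $\K^s$. The hard part will be to justify that $C_1$ is invertible: if $\rank(C_1) = r < s$, then $\tilde{\g}$ would factor through a linear map of rank $r$, giving a decomposition of $\tilde{\g}$ with only $r < s$ linear forms and contradicting the minimality in the definition of essential variables of $\tilde{\g}$. With $C_1 \in \GL_s(\K)$, the above reads precisely $\tilde{\f} \sim \tilde{\g}$ via $C_1$.

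For the converse, given $\tilde{A} \in \GL_s(\K)$ with $\tilde{\g}(\y) = \tilde{\f}(\tilde{A}\y)$, I would extend $L$ and $L'$ to $M$ and $M'$ as above, set the block-diagonal matrix $B = \left(\begin{smallmatrix}\tilde{A} & 0 \\ 0 & I_{n-s}\end{smallmatrix}\right)$, and take $A = M^{-1} B M' \in \GL_n(\K)$. Using $L = [I_s \mid 0]\,M$ and $L' = [I_s \mid 0]\,M'$, a direct block computation gives $LA = [I_s \mid 0]\,B\,M' = [\tilde{A} \mid 0]\,M' = \tilde{A}\,L'$, whence $\f(A\x) = \tilde{\f}(LA\x) = \tilde{\f}(\tilde{A}\,L'\x) = \tilde{\g}(L'\x) = \g(\x)$, so $\f \sim \g$.
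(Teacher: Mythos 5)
Your proof is correct, but it takes a genuinely different route from the paper's. The paper works at the level of Hessian matrices throughout: it writes $H_i = M^{\T}\left(\begin{smallmatrix}\tilde H_i & \mbf{0}\\ \mbf{0} & \mbf{0}\end{smallmatrix}\right)M$ (and similarly for the $H_i'$ via $N$), insists that $M,N$ be chosen of block upper-triangular form $\left(\begin{smallmatrix}M_1 & M_2\\ \mbf{0} & \id\end{smallmatrix}\right)$ with $M_1\in\GL_s(\K)$, and then reads off $\tilde A^{\T}\tilde H_i\tilde A=\tilde H_i'$ from the top-left $s\times s$ block of $(MAN^{-1})^{\T}\left(\begin{smallmatrix}\tilde H_i & \mbf{0}\\ \mbf{0} & \mbf{0}\end{smallmatrix}\right)(MAN^{-1})$. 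You instead manipulate the polynomials and linear maps directly: extend $L,L'$ to arbitrary invertible $M,M'$, change coordinates so $\f$ and $\g$ depend only on the first $s$ variables, and specialize the remaining variables to zero to isolate the top-left block $C_1$ of $C=MAM'^{-1}$. The two arguments agree at the crux: both deduce invertibility of the extracted $s\times s$ block from the fact that $\g$ (equivalently $\tilde\g$) has exactly $s$ essential variables, which would be violated if that block dropped rank. Your route has the advantage of being degree-agnostic -- it never invokes quadraticity, whereas the paper's proof is wedded to the Hessian transformation rule $A^{\T}H_iA=H_i'$, which only holds for quadratic forms. The paper's route, in turn, stays in the Hessian language on which the rest of Section~2 is built, and avoids the specialization step by exploiting the explicit zero block in $\left(\begin{smallmatrix}\tilde H_i & \mbf{0}\\ \mbf{0} & \mbf{0}\end{smallmatrix}\right)$. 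You also spell out the preliminary claim that $\sim$ preserves the number of essential variables (via $\rank(LA)=\rank(L)=s$ plus the same argument for $A^{-1}$), which the paper leaves implicit, and you correctly flag that $\tilde\f,\tilde\g$ themselves must have exactly $s$ essential variables, the fact that powers the invertibility argument.
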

\begin{proof}
  Let $H_1,\ldots,H_m$ be the Hessian matrices of
  $f_1,\ldots,f_m$ (resp. $H^{\prime}_1,\ldots,H^{\prime}_m$ be the
  Hessian matrices of $g_1,\ldots,g_m$). Similarly, we define the
  Hessian matrices $\tilde{H}_1,\ldots,\tilde{H}_m$
  (resp. $\tilde{H}^{\prime}_1,\ldots,\tilde{H}^{\prime}_m$)
  of $\tilde{f}_1,\ldots,\tilde{f}_m$
  (resp. $\tilde{g}_1,\ldots,\tilde{g}_m$). Let also $M$ and $N$ be
  matrices in $\GL_n(\K)$ such that
  $H_i=M^{\T}\pare{\begin{smallmatrix} \tilde{H}_i & \mbf{0}\\\mbf{0}
      &\mbf{0}\end{smallmatrix}}M$ and $H_i'=N^{\T}\pare{\begin{smallmatrix}
      \tilde{H}_i' &\mbf{0}\\\mbf{0} &\mbf{0}\end{smallmatrix}}N$ for all $i,1\leq i\leq
  m$. There exist such $M$ and $N$, as $\f$ and $\g$ have essentially
  $s$ variables.  Up to re-indexing the rows and columns of $H_i$ and $H_i'$, so
  that they remain symmetric, one can
  always choose $M$ and $N$ such that
  $M=\pare{\begin{smallmatrix}M_1&M_2\\\mbf{0}&\id\end{smallmatrix}}$ and
  $N=\pare{\begin{smallmatrix}N_1&N_2\\\mbf{0}&\id\end{smallmatrix}}$, with
  $M_1,N_1\in\GL_s(\K)$.

  If $\tilde{\f} \sim \tilde{\g}$, $\exists \, \tilde{A}\in\GL_s(\K)$
  such that $A^{\T}\tilde{H}_i\tilde{A}=\tilde{H}_i'$, for all
  $i,1\leq i\leq m$. Then, for all $B\in\K^{(n-s)\times s}$ and $C\in\GL_{n-s}(\K)$:
  \begin{align*}
    \pare{\begin{smallmatrix}\tilde{A}^{\T} &B^{\T}\\ \mbf{0}
        &C^{\T}\end{smallmatrix}}
  \pare{\begin{smallmatrix}\tilde{H}_i\vphantom{^{\T}}&\mbf{0}\\\mbf{0} &\mbf{0}\vphantom{^{\T}}
    \end{smallmatrix}}
  \pare{\begin{smallmatrix}\tilde{A}\vphantom{^{\T}} &\mbf{0}\\ B &C\vphantom{^{\T}}
    \end{smallmatrix}}
  &=
  \pare{\begin{smallmatrix}\tilde{H}_i^{\prime}\vphantom{^{\T}}&\mbf{0}\\\mbf{0}
      &\mbf{0}\vphantom{^{\T}}
    \end{smallmatrix}},\\
  N^{\T}\pare{\begin{smallmatrix}\tilde{A}^{\T} &B^{\T}\\ \mbf{0}
      &C^{\T}\end{smallmatrix}} M^{-\T}H_i\,M^{-1}
  \pare{\begin{smallmatrix}\tilde{A}\vphantom{^{\T}} &\mbf{0}\\ B
      &C\vphantom{^{\T}}
    \end{smallmatrix}}N &=H_i^{\prime}.
\end{align*}
Therefore, $\f$ and $\g$ are equivalent.

Conversely, we assume now that $\f \sim \g$, \emph{i.e.}\ there exists
$A\in\GL_n(\K)$ such that $A^{\T}\cdot H_i \cdot A=H^{\prime}_i$, for
all $i,1 \leq i \leq m$. This implies that:
\[
N^{-\T}A^{\T}M^{\T}\pare{\begin{smallmatrix}\tilde{H}_i &\mbf{0}\\\mbf{0} &\mbf{0}
  \end{smallmatrix}}M\,A\,N^{-1}=\pare{\begin{smallmatrix}\tilde{H}^{\prime}_i
    &\mbf{0}\\\mbf{0} &\mbf{0}
  \end{smallmatrix}}, \forall\, i,1 \leq i \leq m.
\]
We then define $\tilde{A}=((MAN^{-1})_{i,j})_{1\leq i,j\leq s}$, so that
$\tilde{\f}(\tilde{A}\x)=\tilde{\g}(\x)$. As $\g$ has $s$ essential
variables, then $\rank\tilde{A}$ cannot be smaller than $s$, hence
$\tilde{A} \in \GL_s(\K)$. We then get
$\tilde{A}^{\T}\tilde{H}_i\tilde{A}=\tilde{H}_i'$ for all $i,1 \leq i
\leq m$, \emph{i.e.}\ $\tilde{\f} \sim \tilde{\g}$.
\end{proof}
According to Proposition~\ref{lm:rank}, there is an efficient reduction
mapping an instance $(\f,\g)$
of \IPS to an instance $(\tilde{\f},\tilde{\g})$ of \IPS having
only essential variables. From now on, we will then assume that
we consider instances of \IPS with $n$ essential
variables for both $\f$ and $\g$.

\paragraph{Canonical Form}\label{par:identity}
We now assume that $\car\K\neq 2$. 
\begin{definition}
 Let $\f=(f_1,\ldots,f_m)\in\K[x_1,\ldots,x_n]^m$ be quadratic
  homogeneous forms with Hessian matrices $H_1,\ldots,H_m$.
We shall say that $\f$ is \emph{regular} if its number of essential variables is
$n$ and if $\exists\,\lambda_1,\dots,\lambda_m\in\K$ such
that $\det\pare{\sum_{i=1}^m\lambda_i\,H_i}\neq 0$.
\end{definition}
\begin{remark}
Our algorithm requires   
that amongst all the Hessian matrices, one at least is invertible, the so-called
regular case.
It is not sufficient to only assume that the number of essential variables is
$n$.  Indeed, Ivanyos's irregular example $\f=(x_1x_3,x_2x_3)$ has $3$ essential
variables, but any nonzero linear combination $\lambda_1 f_1+\lambda_2 f_2$ has
only $2$ essential variables $\lambda_1 x_1+\lambda_2 x_2$ and $x_3$.
Similarly,
$\f=(x_1^2+x_2^2+x_3^2,x_2^2+2x_3^2+x_4^2)$ has $4$ essential variables but any
nonzero linear combination $\lambda_1f_1+\lambda_2f_2$ over $\F_3$ has only
$3$ essential variables. This explains the additional condition on the previous
definition, and our Assumption~\ref{as:regular}.  
\end{remark}
We are now in a position to reduce quadratic homogeneous instances of
\IPS to a first simplified form.
\begin{proposition} \label{prop:first_reduction} Let
  $(\f,\g)\in\K[x_1,\ldots,x_n]^m \times \K[x_1,\ldots,x_n]^m $ be regular
  quadratic homogeneous polynomials.
  There is a randomized polynomial-time algorithm which returns
  ``\NoSol'' only if $\f \not \sim \g$, or a new instance
  \[
  (\tilde{\f},\tilde{\g})=\pare{\big(\sum_{i=1}^{n}d_ix_{i}^{2},\tilde{f}_2,
    \ldots,\tilde{f}_m\big),\big(\sum_{i=1}^{n}d_ix_{i}^{2},\tilde{g}_2,\ldots,
    \tilde{g}_m\big)} \in \K[\x]^m \times \K[\x]^m,\]
  with
  $d_1,\ldots,d_n$ being $1$ or nonsquares in $\K$, such that
  $\f\sim\g \iff \tilde{\f} \sim \tilde{\g}$.
  If $\K$ is finite, the
  output of this algorithm is correct with probability at least
  $1-n/|\K|$.
  If $\tilde{\f} \sim \tilde{\g}$, invertible matrices $P, Q$ and
  $A' \in \GL_n(\K)$
  are returned such that $\f(P\x)=\tilde{\f}(\x)$,
  $\g(Q\x)=\tilde{\g}(\x)$ and
  $\tilde{\f}(A^{\prime}\x)=\tilde{\g}(\x)$. It then holds that
  $\f(PA'Q^{-1}\x)=\g(\x)$.
\end{proposition}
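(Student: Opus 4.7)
The plan is to proceed in three stages: regularize the first Hessian by a random linear combination, diagonalize it over $\K$, and normalize and align the resulting diagonals. By Assumption~\ref{as:regular}, $D_{\mathrm{red}}(\lambda_1,\ldots,\lambda_m) := \det\pare{\sum_{i=1}^m \lambda_i H_i}$ is a nonzero polynomial of total degree $n$ in the $\lambda_i$. Sampling $(\lambda_1,\ldots,\lambda_m)$ uniformly at random in $\K^m$ and invoking the Schwartz-Zippel-DeMillo-Lipton lemma yields $D_{\mathrm{red}}(\lambda_1,\ldots,\lambda_m)\neq 0$ with probability at least $1-n/|\K|$. Crucially, if $\f\sim\g$ via some $A\in\GL_n(\K)$, then $\sum_i\lambda_i H_i' = A^{\T}\pare{\sum_i\lambda_i H_i}A$, so invertibility transfers to the primed side with no additional probability loss. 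Choose $B\in\GL_m(\K)$ whose first row is $(\lambda_1,\ldots,\lambda_m)$ (completed into an invertible matrix) and replace $(\f,\g)$ by $(B\f,B\g)$; applying $B$ on both sides commutes with the variable change, so $\f\sim\g$ iff $B\f\sim B\g$. We may thus assume $H_1$ and $H_1'$ are both invertible.

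Next, since $\car\K\neq 2$ and $H_1$ is a nondegenerate symmetric matrix, a standard Gauss--Lagrange orthogonalization runs in polynomial time and produces $P\in\GL_n(\K)$ such that $P^{\T}H_1 P$ is diagonal with nonzero entries; similarly we obtain $Q\in\GL_n(\K)$ for which $Q^{\T}H_1' Q$ is diagonal. Scaling each column of $P$ (resp.~$Q$) by a suitable nonzero scalar in $\K$ then normalizes every diagonal entry to lie in a fixed set of square-class representatives of $\K^*$: for a finite field of odd order this is simply $\{1,\alpha\}$ for some fixed nonsquare $\alpha$. This step relies on polynomial-time square-root testing and extraction, which is available for the fields under consideration.

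Finally, if $\f\sim\g$ then $H_1$ and $H_1'$ are congruent symmetric bilinear forms, so their diagonal normal forms agree as multisets and can be aligned entry-by-entry by permuting the columns of $Q$; if no such alignment exists, we output \NoSol, soundly certifying $\f\not\sim\g$. After alignment, $P^{\T}H_1 P = Q^{\T}H_1' Q = \Diag(d_1,\ldots,d_n)$ with each $d_i$ equal to $1$ or a nonsquare, so that $\tilde\f(\x):=(B\f)(P\x)$ and $\tilde\g(\x):=(B\g)(Q\x)$ both have first component $\sum_{i=1}^n d_i x_i^2$. Any solution $A'\in\GL_n(\K)$ of $\tilde\g(\x)=\tilde\f(A'\x)$ then satisfies $B\g(\x)=B\f(PA'Q^{-1}\x)$, and since $B$ is invertible this gives $\g(\x)=\f(PA'Q^{-1}\x)$, completing the reconstruction. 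The main obstacle is the normalization step: over a general field the set of square classes may be infinite or difficult to enumerate, making it genuinely delicate to force the two diagonals into the same canonical form---precisely the caveat flagged by the authors immediately after Theorem~\ref{th:normal}---but over finite fields this is routine and the probability bound $1-n/|\K|$ arises solely from the random sampling in Step~1.
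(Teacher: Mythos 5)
Your proof is correct and follows essentially the same route as the paper: a Schwartz--Zippel argument to make the first form regular, Gauss--Lagrange diagonalization with square-class normalization on both sides, comparison of the resulting canonical diagonals (returning \NoSol{} on mismatch), and the $\f(PA'Q^{-1}\x)=\g(\x)$ chaining for reconstruction. Your packaging of the linear combination as an invertible $B\in\GL_m(\K)$ is a mild generalization of the paper's in-place replacement of $f_1$ by $\vphi=\sum_i\lambda_i f_i$ (equivalent to taking $B$ with first row $(\lambda_1,\ldots,\lambda_m)$ and the remaining rows $e_2,\ldots,e_m$), and you make explicit the column-scaling and permutation-alignment that the paper compresses into ``apply Gauss's reduction'', including the caveat over $\Q$ that the paper also flags.
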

\begin{proof}
  Let $H_1,\ldots,H_m$ be the Hessian matrices associated to $f_1,\ldots,f_m$.
  According to Schwartz-Zippel-DeMillo-Lipton Lemma~(\cite{DL78,Z79}),
  we can compute
  in randomized polynomial time $\lambda_1,\ldots,\lambda_m \in \K$
  such that $\vphi=\sum_{i=1}^{m}\lambda_i \cdot f_i$ is regular,
  \emph{i.e.}\ $\det\pare{\sum_{i=1}^m\lambda_i\,H_i}\neq 0$.
  The probability to pick $(\lambda_1,\ldots,\lambda_m)\in\K^m$ on which
  $\vphi$ is regular is bounded from above by $n/|\K|$. 
  We define $\gamma=\sum_{i=1}^{m}\lambda_i \cdot g_i$.  Should one reorder
  the equations, we can assume w.lo.g.\ that $\lambda_1\neq 0$. We
  have then:
  \[\f \sim \g \iff (\vphi,f_2,\ldots,f_m) \sim (\gamma,g_2,\ldots,g_m).\]
  Now, applying \Gauss's reduction algorithm to $\vphi$, there exists
  $d_1,\ldots,d_n\in\K$, each being $1$ or a nonsquare, such that
  $\vphi=\sum_{i=1}^n{d_i\,\ell_i^2}$, where
  $\ell_1,\ldots,\ell_n$ are independent linear forms in
  $x_1,\ldots,x_n$.
  This gives a $P \in\GL_n(\LL)$ such that
  $\tilde{\f}=(\tilde{\vphi}=\sum_{i=1}^nd_i\,x_i^2,\tilde{f_2},\ldots,
  \tilde{f_m})=(\vphi(P\x),f_2(P\x),\ldots,f_m(P\x))$.
  Clearly, $\f\sim\tilde{\f}$, hence, $\tilde{\f}\sim\g$.

  After that, we can apply once again \Gauss's reduction algorithm to
  $\gamma$. If the reduced polynomial  is different from
  $\sum_{i=1}^{n}d_i\,x_i^2$, then $\f \not \sim \g$ and we return
  ``\NoSol''. Otherwise, the reduction is given by a matrix $Q \in
  \GL_n(\LL)$ such that
  $\tilde{\g}=(\tilde{\gamma}=\sum_{i=1}^nd_i\,x_i^2,\tilde{g}_2,
  \ldots\tilde{g}_m)=
  (\gamma(Q\x),g_2(Q\x),\ldots,g_m(Q\x))$ and
  $\g\sim\tilde{\g}$. Thus, $\tilde{\f}\sim\tilde{\g}$ if
  and only if $\f\sim\g$.

  Now, assume that $ \exists\,A^{\prime} \in
  \GL_n(\K)$ such that $\tilde{\f}(A^{\prime}\x)=\tilde{\g}(\x)$. Then,
  $\f(PA^{\prime}\x)=\g(Q\x)$, \emph{i.e.}\ 
  $\f(PA^{\prime}Q^{-1}\x)=\g(\x)$.
\end{proof}
Let us recall that whenever $\K=\Q$, computing the exact same sum of
squares for $\tilde{f}_1$ and $\tilde{g}_1$ is
difficult, see~\cite[Chapter~3]{Saxena2006},
\cite[Chapter~1]{Wallenborn2013}. As such, we could only assume that
our canonical form is $\tilde{g}_1=\sum_{i=1}^nd_i'x_i^2$. This would
merely change the formulation of following
Theorem~\ref{th:second_reduction}.

\paragraph{Invertible Hessian Matrices}\label{par:invertible}
We are now in a position to reduce any regular homogeneous quadratic instances
$(\f,\g)$ of \IPS to a new form of the instances where all the
polynomials are themselves regular
assuming we could find one. From
Proposition~\ref{prop:first_reduction},
this is already the case -- under randomized reduction -- for $f_1$
and thus $g_1$.  For the other polynomials, we proceed as
follows. For $i,2\leq i\leq m$, if the Hessian matrix $H_i$ of $f_i$
is invertible, then we do nothing. Otherwise, we change $H_i$ into
$H_i-\nu_i\,H_1$, with $\nu_i$ not an eigenvalue of $H_i\,H_1^{-1}$.
As $\K$ has at least $n+1$ elements, there exists
such a $\nu_i$ in $\K$. This gives the following result:
\begin{theorem}\label{th:second_reduction}
  Let $(\f,\g)\in\K[x_1,\ldots,x_n]^m \times \K[x_1,\ldots,x_n]^m $ be regular
  quadratic homogeneous polynomials. There is a randomized
  polynomial-time algorithm which returns ``\NoSol'' only if $\f \not
  \sim \g$. Otherwise, the algorithm returns two sets of
  $n\times n$
  invertible symmetric matrices
  $\{D,\tilde{H}_2\ldots,\tilde{H}_m\}$ and
  $\{D,\tilde{H}^{\prime}_2,\ldots,\tilde{H}^{\prime}_m\}$, with $D$ diagonal,
  defined
  over $\K$
  such that:
  \[\g(\x)=\f(A\,\x), \text{ for } A \in \GL_n(\K)
  \iff
  \begin{array}{l}
    A'^{-1}\,D^{-1}\,\tilde{H}_i\,A^{\prime}=D^{-1}\,\tilde{H}_i',
    \forall \,i, 1\leq i\leq m,\\
    \text{ for } A^{\prime} \in \mcal{O}_n(\K,D),
  \end{array}
  \] with
  $\mcal{O}_n(\K,D)$ denoting the set of $n \times n$ $D$-orthogonal
  matrices over $\K$.
\end{theorem}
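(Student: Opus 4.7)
The plan is to chain Proposition~\ref{prop:first_reduction} with an invertibilization step on the remaining Hessian matrices, and then rewrite the resulting equations as a $D$-orthogonal conjugacy problem. I would start by applying Proposition~\ref{prop:first_reduction} to $(\f,\g)$. Either the algorithm returns \NoSol{} (which is correct, since that proposition does so only when $\f\not\sim\g$), or it produces matrices $P,Q\in\GL_n(\K)$ together with reduced instances $(\tilde\f,\tilde\g)$ whose first Hessian matrix on both sides is the common invertible diagonal matrix $D=2\Diag(d_1,\ldots,d_n)$; moreover any $A'\in\GL_n(\K)$ realising $\tilde\g(\x)=\tilde\f(A'\x)$ yields $A=PA'Q^{-1}\in\GL_n(\K)$ realising $\g(\x)=\f(A\x)$, and conversely. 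Denote by $H_1=D,H_2,\ldots,H_m$ and $H_1'=D,H_2',\ldots,H_m'$ the Hessians of the $\tilde{f}_i$ and $\tilde{g}_i$; they are symmetric.

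Next, for each $i\in\{2,\ldots,m\}$, if $H_i$ is already invertible I keep $\tilde{H}_i=H_i$ and $\tilde{H}_i'=H_i'$. Otherwise I pick $\nu_i\in\K$ that is not an eigenvalue of $H_iD^{-1}$; this is possible because $H_iD^{-1}$ has at most $n$ eigenvalues and $|\K|>n$ by Assumption~\ref{as:regular}. Then $\tilde{H}_i=H_i-\nu_iD=(H_iD^{-1}-\nu_i\id)D$ is invertible and symmetric. I set $\tilde{H}_i'=H_i'-\nu_iD$ and return \NoSol{} should it happen to be singular. This is sound because whenever $\tilde\f\sim\tilde\g$ via some $A'$, the equalities $A'^{\T}DA'=D$ and $A'^{\T}H_iA'=H_i'$ give $A'^{\T}\tilde{H}_iA'=\tilde{H}_i'$, so invertibility of $\tilde{H}_i$ forces that of $\tilde{H}_i'$.

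Finally, the modifications preserve equivalence with the same $A'$: substituting $\tilde{f}_i\mapsto\tilde{f}_i-\nu_i\tilde{f}_1$ and $\tilde{g}_i\mapsto\tilde{g}_i-\nu_i\tilde{g}_1$ yields an \IPS{} instance whose Hessians are precisely $D,\tilde{H}_2,\ldots,\tilde{H}_m$ and $D,\tilde{H}_2',\ldots,\tilde{H}_m'$ and which admits $A'$ as a solution if and only if the original reduced one does. Hence the existence of $A$ with $\g(\x)=\f(A\x)$ is equivalent to the existence of $A'$ satisfying $A'^{\T}DA'=D$ and $A'^{\T}\tilde{H}_iA'=\tilde{H}_i'$ for all $i$. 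The first condition is exactly $A'\in\mcal{O}_n(\K,D)$ and yields $A'^{\T}=DA'^{-1}D^{-1}$; substituting into the second produces $A'^{-1}(D^{-1}\tilde{H}_i)A'=D^{-1}\tilde{H}_i'$, which is the form required by the statement. The main subtlety in this plan is the coherence between the $\f$- and $\g$-sides: a single shift $\nu_i$ must produce an invertible matrix on both sides whenever the instances are equivalent, while any mismatch must legitimately certify non-equivalence. The congruence argument above handles exactly this point and shows that the mild lower bound $|\K|>n$ coming from Assumption~\ref{as:regular} is sufficient for the whole procedure to go through.
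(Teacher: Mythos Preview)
Your proof is correct and follows essentially the same approach as the paper: apply Proposition~\ref{prop:first_reduction}, then the shift of paragraph~\ref{par:invertible}, then rewrite the congruence equations as a $D$-orthogonal conjugacy. You add a point the paper leaves implicit, namely the coherence between the $\f$- and $\g$-sides in the invertibilization step (the same $\nu_i$ must be used on both sides, and a resulting singular $\tilde H_i'$ legitimately certifies $\f\not\sim\g$); this is a welcome clarification rather than a deviation.
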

\begin{proof}
  Combining Proposition~\ref{prop:first_reduction} and
  paragraph~\ref{par:invertible} any regular quadratic homogeneous instance of
  \IPS can be reduced in randomized polynomial time to ``\NoSol'',
  only if the two systems are not equivalent, or to a
  \[(\tilde{\f},\tilde{\g})=\pare{\big(\sum_{i=1}^{n}d_ix_{i}^{2},\tilde{f}_2,
    \ldots,\tilde{f}_m\big),\big(\sum_{i=1}^{n}d_ix_{i}^{2},\tilde{g}_2,\ldots,
    \tilde{g}_m\big)},\]
  where all the polynomials are \emph{nondegenerate} homogeneous
  quadratic polynomials in \(\K[\x]\).
  It follows that
  $\tilde{\f}\sim\tilde{\g} \iff \exists\, A^{\prime} \in\GL_n(\K)$
  such that $\forall\,i, 1\leq i\leq m$,
  ${A^{\prime}}^{\T}\,\tilde{H}_i\,A^{\prime}=\tilde{H}_i'$.  In particular
  ${A^{\prime}}^{\T}\, D \, A^{\prime}=D$ and $A'$ is $D$-orthogonal. Hence,
  ${A^{\prime}}^{\T}\,\tilde{H}_i\,A^{\prime}=D\,A'^{-1}\,D^{-1}\,\tilde{H}_i\,A'=
  \tilde{H}_i',\ \forall \,i, 1\leq i\leq m$.
\end{proof}
The proof of this result implies Theorem~\ref{th:normal}.

\paragraph{Field Extensions and Jordan Normal Form}\label{par:jordan}
To amplify the success probability of our results, it will be
convenient to embed a field $\F$ in some finite extension $\F'$ of $\F$. This
is motivated by the fact that matrices in $\F^{n\times n}$ are similar
if and only if they are similar in $\F'^{n\times n}$,
see~\cite{Pazzis2010}. In this paper, we will need to compute
the \emph{Jordan normal form} $J$ of some matrix $H$ in several
situations. The computation of the Jordan normal form is done in two
steps.  First, we factor the characteristic polynomial, using for
instance Berlekamp's algorithm over
$\F=\F_q$ in $O(n\,\msf{M}(n)\log (q\,n))$ operations in $\F$, where
$\msf{M}(n)$ is a bound on the number of operations in $\F$ to
multiply two polynomials in $\F[x]$ of degree at most $n-1$,
see~\cite[Theorem~14.14]{GaGe1999}. Then, we
use~\cite{Storjohann1998}'s algorithm to compute the
generalized eigenvectors in $O(n^{\omega})$ operations in $\F$, with
$\omega$ being the exponent of time complexity of matrix
multiplication, $2\leq\omega\leq 3$.
\setcounter{secnumdepth}{3}

\section{Quadratic IP1S}\label{s:IP1S}
In this section, we present efficient algorithms for solving
regular quadratic-\IPS.  According to Proposition~\ref{prop:homo}, we can
w.l.o.g.\ restrict our attention on linear changes of variables and
homogeneous quadratic instances. 
Let $D$ be a diagonal
invertible matrix with $1$ or nonsquare elements on the diagonal. Let
$\mathcal{H}=\{D,H_2,\ldots,H_m\}$ and
$\mathcal{H}'=\{D,H^{\prime}_2,\ldots,H^{\prime}_m\}$ be two
families of invertible symmetric matrices in $\K^{n\times n}$. As
explained in Theorem~\ref{th:second_reduction}, our task reduces --
under a randomized process -- to finding a $D$-orthogonal matrix
$A^{\prime} \in \mcal{O}_n(\K,D)$ such that:
\begin{equation}\label{eq:main}
  A'^{-1}\,D^{-1}\,H_i\,A^{\prime}=D^{-1}\,H_i',\  \forall \,i, 1\leq i\leq m.
\end{equation}
Case $D=\id$ was studied in~\cite[Theorem~$4$]{DBLP:conf/issac/ChistovIK97}.
The authors prove
that there is an orthogonal solution $A$,
such that $H_i\,A=A\,H_i'$ if
and only if there is an invertible matrix $Y$
such that $H_i\,Y=Y\,H_i'$ and
$H_i^{\T}\,Y=Y\,H_i^{\T}$. In our case, whenever $D=\id$, the matrices
are symmetric.  So,
the added conditions -- with the transpose -- are automatically
fulfilled. In~\cite{DBLP:conf/issac/ChistovIK97}, the authors suggest
then to use the polar decomposition of $Y=A\,W$, with $W$ symmetric and
$A$ orthogonal. Then, $A$ is an orthogonal solution of~\eqref{eq:main}.

The main idea to compute $A$ is to compute $W$ as the square root of
$Z=Y^{\T}\,Y$ as stated
in~\cite[Section~3]{DBLP:conf/issac/ChistovIK97}. However, in general
$W$ and $A$ are not defined over $\K$ but over
$\LL=\K(\sqrt{\zeta_1},\ldots,\sqrt{\zeta_r})$, where
$\zeta_1,\ldots,\zeta_r$ are the
eigenvalues of $Z$. Assuming $\zeta_1$ is the root of an irreducible
polynomial $P$ of degree $d$, then $\zeta_2,\ldots,\zeta_d$ are also
roots of the same polynomial. However, there is no reason for them to
be in $\K[x]/(P)=\K(\zeta_1)$. But they will be the roots of a
polynomial of degree $d-1$, in general, over the field
$\K(\zeta_1)$. Then, doing another extension might only add one
eigenvalue in the field. Repeating this process yields a field of
degree $d!$ over $\K$. As a consequence, in the worst case, we can
have to work over an extension field of degree $n!$. Therefore,
computing $W$ could be the bottleneck of the method.

\cite{DBLP:conf/issac/ChistovIK97} emphasize
that constructing such a square root $W$ in polynomial time is the
only serious algorithmic problem.  As presented, it is not completely
clear that the method proposed is efficient.
They propose to compute $W=\sqrt{Y^{\T}\,Y}$ and then to set
$A=W^{-1}\,Y$. According to Cai's work~(\cite{Cai1994}), some coefficients of
matrix $A$ may lie in an extension of exponential
degree. \emph{Blockwise computation} (see the proof of
Proposition~\ref{prop:comput_ortho}) can allow us to compute such a
matrix. Chistov, Ivanyos and Karpinski set $y_i$ as the
restriction of $Y$ to the $i$th eigenspace, associated to $\zeta_i$,
of $Y^{\T}\,Y$. Then, $x_i=\sqrt{\zeta_i}^{-1}y_i$ and they
return the block diagonal matrix constructed from the
$x_i$'s. However, this construction gives the impression that the
$i$th eigenspace of $Y^{\T}\,Y$ is stable by $Y$, as
$W^{-1}$ would act as a multiplication by $\sqrt{\zeta_i}^{-1}$. As
a consequence, the blockwise computation was not ensured.

However, this issue does not happen if one uses the same proof
on $W=\sqrt{Y\,Y^{\T}}$ and
$A=Y\,W^{-1}$. In the following subsection,
we extend their proof to any invertible diagonal matrix $D$.

\subsection{Existence of a $D$-Orthogonal Solution}
The classical polar decomposition is used
in~\cite[Theorem~4]{DBLP:conf/issac/ChistovIK97} to determine an
orthogonal solution. Using the analogous decomposition, the so-called
Generalized Polar Decomposition (\GPD), which depends on $D$,
yields a $D$-orthogonal solution, see~\cite{MaMaTi2005}.  The \GPD
of an invertible matrix $Y$ is the factorization $Y=A\,W$, with $A$
$D$-orthogonal and $W$ in the associated Jordan algebra,
\emph{i.e.}\ $W^{\T}=D\,W\,D^{-1}$. Let us notice that $A$ and $W$
might be defined only over $\K'$ an algebraic extension of $\K$ of
some degree.

\begin{proposition}\label{prop:pseudoortho_solution}
  Let $\cK=\{K_1,\ldots,K_m\}$ and $\cK'=\{K_1',\ldots,K_m'\}$
  be two subsets of $m$ matrices in $\K^{n\times n}$.
  Let $D$ be an invertible diagonal matrix. There is a
  $D$-orthogonal solution $A\in\K'^{n\times n}$ to the conjugacy problem
  $K_i\,A=A\,K_i'$ for all $1\leq i\leq m$, if and only if there is
  an invertible solution $Y\in\K'^{n\times n}$ to the conjugacy
  problem $K_i\,Y=Y\,K_i'$ and
  $K_i^{\T}\,D\,Y\,D^{-1}=D\,Y\,D^{-1}\,K_i'^{\T}$ for all
  $1\leq i\leq m$.
  Furthermore, if $Y=A\,W$ is the \GPD of $Y$ with respect to
  $D$, then $A$ suits.
\end{proposition}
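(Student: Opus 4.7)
The plan is to handle the two directions of the equivalence separately; the forward implication is a direct computation exploiting $D$-orthogonality, while the reverse rests on extracting a polynomial square root of a certain matrix, and this polynomial-in-$M$ description is what carries the commutation through.

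For the forward direction, given a $D$-orthogonal $A$ with $K_i A = A K_i'$, I would simply take $Y := A$. The first relation is immediate. For the second, $D$-orthogonality gives $A^{\T} = D A^{-1} D^{-1}$, hence $A^{-\T} = D A D^{-1}$; transposing $K_i A = A K_i'$ into $A^{\T} K_i^{\T} A^{-\T} = K_i'^{\T}$ and substituting these expressions yields exactly $K_i^{\T} D A D^{-1} = D A D^{-1} K_i'^{\T}$.

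For the reverse direction, assume $Y$ is invertible and satisfies both hypotheses. I would introduce $M := D^{-1} Y^{\T} D\, Y$; a short calculation (using $D^{\T}=D$) gives $M^{\T} = D M D^{-1}$. Transposing the second hypothesis turns it into $D^{-1} Y^{\T} D \cdot K_i = K_i' \cdot D^{-1} Y^{\T} D$; combined with $K_i Y = Y K_i'$ this yields $M K_i' = K_i' M$ for every $i$. The key step is now to produce a square root $W$ of $M$ of the form $W = p(M)$ for some $p \in \K'[t]$: since $Y$ is invertible, so is $M$, all its eigenvalues in a suitable algebraic extension $\K' \supseteq \K$ are nonzero, and Hermite interpolation of $t \mapsto \sqrt{t}$ on the Jordan structure of $M$ provides such a $p$. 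Because $W$ is a polynomial in $M$, it inherits the symmetry $W^{\T} = p(M^{\T}) = p(D M D^{-1}) = D W D^{-1}$, so $W$ lies in the Jordan algebra associated with $D$, and moreover $W$ commutes with every matrix that commutes with $M$, in particular $W K_i' = K_i' W$ for every $i$. Setting $A := Y W^{-1}$, a short calculation using $D W^{-1} = W^{-\T} D$ gives $A^{\T} D A = W^{-\T} D M W^{-1} = D W^{-1} M W^{-1} = D$, so $A$ is $D$-orthogonal, and then
\[
K_i A = K_i Y W^{-1} = Y K_i' W^{-1} = Y W^{-1} K_i' = A K_i',
\]
which is the desired conclusion; this simultaneously establishes the equivalence and shows that the $A$ given by the GPD of $Y$ works.

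The main obstacle is the polynomial square-root extraction: one needs both $\car \K \neq 2$ and the invertibility of $M$ in order for the Hermite interpolation producing $p$ to exist, and the eigenvalues of $M$ live, in general, only in an extension $\K'$ of $\K$. Controlling the size of this extension is the delicate algorithmic point treated in Sections~\ref{ss:computation} and~\ref{ss:sqrt_char_p}; for the present \emph{existence} statement, however, working over an algebraic closure of $\K$ already suffices.
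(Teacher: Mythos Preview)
Your proof is correct and follows essentially the same route as the paper: both directions are handled identically, with the reverse direction hinging on the matrix $M=D^{-1}Y^{\T}DY$ (the paper calls it $Z$), its commutation with each $K_i'$, and the extraction of a square root $W$ that is a polynomial in $M$ and hence lies in the Jordan algebra associated to $D$. Your derivation of $W^{\T}=DWD^{-1}$ directly from $W=p(M)$ is a slightly more self-contained variant of the paper's appeal to~\cite[Theorem~6.2]{MaMaTi2005} together with Section~\ref{ss:sqrt}, but the underlying argument is the same.
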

\begin{proof}
  This proof is a generalization
  of~\cite[Section~3]{DBLP:conf/issac/ChistovIK97}. If $A$ is a
  $D$-orthogonal solution to the first problem, then as
  $A^{\T}=D\,A^{-1}\,D^{-1}$, it is clear that $A$ is a solution to the
  second problem. Conversely, let $Y$ be a solution to the second
  problem, then $Z=D^{-1}\,Y^{\T}\,D\,Y$ commutes with $K_i'$. As
  $Y$ is invertible, so is $Z$, therefore, given a determination of the
  square roots of the eigenvalues of $Z$, there is a unique matrix $W$
  with these eigenvalues such that $W^2=Z$ and $W$ is in the Jordan
  algebra associated to $D$, that is $W^{\T}=D\,W\,D^{-1}$,
  see~\cite[Theorem~6.2]{MaMaTi2005}. As such, $W$ is a polynomial
  in $Z$ as proven in Section~\ref{ss:sqrt} and commutes with $K_i'$.

  Finally, $A=Y\,W^{-1}$ is an $D$-orthogonal solution of the first
  problem. As $W$ commutes with $K_i'$,
  $A^{-1}K_i\,A=W\,Y^{-1}K_i\,Y\,W^{-1}=W\,K_i'\,W^{-1}=K_i'$ and
  \[A^{\T}\,D\,A=W^{-\T}\,Y^{\T}\,D\,Y\,W^{-1}=
  D\,W^{-1}\,D^{-1}\,Y^{\T}\,D\,Y\,W^{-1}=
  D\,W^{-1}\,Z\,W^{-1}=D.\qedhere\]
\end{proof}

For the sake of completeness,
we present several efficient algorithms for performing the square root
computation.
\subsection{Computing the $D$-Orthogonal
  Solution} \label{ss:computation} The goal of this part is to
``$D$-orthogonalize'' an invertible solution $Y \in\GL_n(\K)$ of
equation~\eqref{eq:main}.  Instead of computing exactly
$A \in \mcal{O}_n(\LL,D)$,
we compute in polynomial time two matrices whose product is $A$. These
matrices allow us to verify in polynomial time that $H_i$ and $H_i'$ are
equivalent for all $i, 1\leq i\leq m$. To be more precise, we prove the
following proposition.
\begin{proposition}\label{prop:comput_ortho}
  Let $\cH=\{H_1=D,H_2,\ldots,H_m\}$ and $\cH'=\{H_1'=D,H_2',\ldots,H_m'\}$ be
  two sets of invertible matrices in $\K^{n\times n}$. We can compute in
  polynomial time two matrices representations of matrices
  $S$ and $T$ defined over an algebraic
  extension $\LL$ such that $S\,T^{-1}$ is $D$-orthogonal and for all
  $1\leq i\leq m$, $D^{-1}\,H_i(S\,T^{-1})=(S\,T^{-1})\,D^{-1}\,H_i'$.
  In the worst case,
  product $S\,T^{-1}$ cannot be computable in polynomial time over
  $\LL$.  However, matrices $S^{\T}\,H_i\,S$ and $T^{\T}\,H_i'\,T$ can be
  computed and tested for equality in polynomial time.
\end{proposition}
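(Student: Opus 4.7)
The plan is to follow the strategy of Proposition~\ref{prop:pseudoortho_solution} and recover the $D$-orthogonal solution $A = YW^{-1}$ as a formal quotient $S\,T^{-1}$, thereby bypassing any explicit computation of the square root $W$ of $Z = D^{-1}Y^{\T}DY$ (whose entries can, in the naive approach, lie in an extension of $\K$ of degree as large as $n!$, as recalled in the discussion preceding the proposition).

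First, we produce an invertible $Y \in \GL_n(\K)$ satisfying~\eqref{eq:smc}: since the $H_i$'s and $H_i'$'s are symmetric, these constraints reduce to the linear system~\eqref{eq:lin} together with $\det Y \neq 0$, so Schwartz-Zippel-DeMillo-Lipton provides an invertible $Y$ in randomized polynomial time (after possibly passing to a small algebraic extension of $\K$). We then form $Z = D^{-1}Y^{\T}DY \in \K^{n\times n}$ and factor its characteristic polynomial in $\K[t]$ into irreducible factors $p_1,\ldots,p_r$ of degrees $d_1,\ldots,d_r$. For each $j$, fix a root $\zeta_j$ of $p_j$ inside the residue field $\K_j = \K[t]/(p_j)$, of degree $d_j \leq n$ over $\K$. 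Applying Storjohann's algorithm (paragraph~\ref{par:jordan}) over $\K_j$ yields the corresponding generalized eigenvectors of $Z$ as a column block $P_j$ with entries in $\K_j$; concatenating the $P_j$'s produces a Jordan basis $P$ with $P^{-1}ZP$ block-diagonal with blocks $J_j$.

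Since $Z$ is invertible, each $\zeta_j$ is nonzero and each Jordan block $J_j$ admits a canonical square root $\Sigma_j$ (a polynomial in $J_j$) with entries in $\LL_j := \K_j(\sqrt{\zeta_j})$, an extension of $\K$ of degree at most $2d_j \leq 2n$. Set
\[
\Sigma = \Diag(\Sigma_1,\ldots,\Sigma_r),\qquad S := Y P,\qquad T := P\,\Sigma,
\]
so that $W = P\,\Sigma\, P^{-1}$ satisfies $W^{2} = Z$ and $W^{\T} = DWD^{-1}$, and $A = YW^{-1} = ST^{-1}$ as matrices over the compositum $\LL$ of the $\LL_j$'s. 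The key observation is that the $j$-th column block of $S$ lives in $\K_j$ and the $j$-th column block of $T$ in $\LL_j$: both $S$ and $T$ can thus be assembled column-block by column-block in polynomial time, even though $\LL$ itself may have degree as large as $n!$ over $\K$ and the explicit product $ST^{-1}$ need not be feasible in polynomial time.

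For the verification, the $D$-orthogonality $A^{\T}DA = D$ together with $D^{-1}H_iA = A\,D^{-1}H_i'$ for all $i$ is equivalent to the single family of identities $A^{\T}H_iA = H_i'$ for $i = 1,\ldots,m$ (using $H_1 = H_1' = D$ to recover the orthogonality from $i=1$), and hence, after clearing $T^{-1}$, to $S^{\T}H_iS = T^{\T}H_i'T$. These identities split into $r^{2} = O(n^2)$ blockwise equalities $S_j^{\T}H_iS_k = T_j^{\T}H_i'T_k$, whose two sides live respectively in the compositum $\K_j\K_k$ and in $\LL_j\LL_k$, each of degree at most $4\,d_jd_k \leq 4n^2$ over $\K$. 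Each block is therefore of polynomial size and can be computed and compared in polynomial time. The main obstacle throughout is precisely this control of algebraic extension degrees: a direct assembly of all square roots $\sqrt{\zeta_j}$ into a common extension may blow up to degree $n!$, so the argument must treat each irreducible factor of the characteristic polynomial of $Z$ independently, both when constructing $(S,T)$ and when verifying the candidate solution.
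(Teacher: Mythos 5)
Your argument is essentially the paper's: compute $Y$ by linear algebra, form $Z = D^{-1}Y^{\T}DY$, pass to a Jordan basis $P$ of $Z$ whose column blocks lie in extensions of degree $\leq n$, take a blockwise square root $\Sigma$ of $P^{-1}ZP$ over the slightly larger per-block extensions, and return the two factors of $A = Y(P\Sigma P^{-1})^{-1}$ without forming the product, verifying $S^{\T}H_iS = T^{\T}H_i'T$ block by block over the pairwise composita. The only (cosmetic) deviation is where you place the square-root factor: you output $(S,T)=(YP,\,P\Sigma)$, while the paper outputs $(S,T)=(YP\Sigma^{-1},\,P)$ — both give the same $A=ST^{-1}$ and the same polynomial degree control on the blockwise comparisons.
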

\begin{proof}
  Let $Y\in\GL_n(\K)$ such that $D^{-1}\,H_i\,Y=Y\,D^{-1}\,H'_i,\ \forall \,i,
  1\leq i\leq m$.  We set $Z=D^{-1}\,Y^{\T}\,D\,Y$.  Let us denote by
  $T$, the change of basis matrix such that $J=T^{-1}\,Z\,T$ is the
  Jordan normal form of $Z$. According to~\cite{Cai1994}, $T$,
  $T^{-1}$ and $J$ can be computed in polynomial time. Because of the
  issue of mixing all the
  eigenvalues of $Z$, we cannot compute efficiently $A$ in one piece.
  We will then compute $A\,T$ and $T^{-1}$ separately. Indeed, $A\,T$
  (resp. $T^{-1}$) is such that each of its columns (resp. each of its
  rows) is defined over an extension field $\K(\zeta_i)$, where
  $\zeta_1,\ldots,\zeta_r$ are the eigenvalues of $Z$.

  We shall say that a matrix is \emph{block-wise} (resp.
  \emph{columnblock-wise}, \emph{rowblock-wise}) \emph{defined over
    $\K(\zeta)$} if for all $1\leq i\leq r$, its $i$th block
  (resp. block of columns, block of rows) is defined over
  $\K(\zeta_i)$. The size of the $i$th block being the size of the
  $i$th Jordan block of $J$.

  As $J=T^{-1}\,Z\,T$ is a Jordan normal form, it is block-wise defined
  over $\K(\zeta)$. Using the closed formula of
  Section~\ref{ss:sqrt}, one can compute in polynomial time a square
  root $G$ of $J$. This matrix is a block diagonal matrix, block-wise
  defined over $\K(\sqrt{\zeta})$, hence it can be inverted in
  polynomial time. Should one want $W$, one would have to compute
  $W=T\,G\,T^{-1}$. Let us recall that matrices $T$ and $T^{-1}$ are
  respectively columnblock-wise and rowblock-wise defined over
  $\K(\zeta)$, see~\cite[Section~4]{Cai1994}. Since $Y$ is defined
  over $\K$, then $Y\,T$ is columnblock-wise defined over
  $\K(\zeta)$. Thus $S=A\,T=Y\,W^{-1}\,T=Y\,T\,G^{-1}$ is columnblock-wise
  defined over $\K(\sqrt{\zeta})$. We recall that product $A\,T\cdot
  T^{-1}$ mangles the eigenvalues and make each coefficient defined
  over $\K(\sqrt{\zeta_1},\ldots,\sqrt{\zeta_r})$ and thus must
  be avoided.

  Now, to verify that $A^{\T}\,H\,A=H'$, for any $H\in\cH$ and the
  corresponding $H'\in\cH'$, we compute separately
  $S^{\T}\,H\,S=T^{\T}\,A^{\T}\,H\,A\,T$ and $T^{\T}\,H'\,T$.
  For the former, $S=A\,T$
  (resp. $S^{\T}=(A\,T)^{\T}$) is columnblock-wise (resp. rowblock-wise)
  defined over $\K(\sqrt{\zeta})$ and $H$ is defined over
  $\K$. Therefore, the product matrix makes each of the coefficients
  which are
  on both the $i$th block of rows and the $j$th block of columns
  defined over $\K(\sqrt{\zeta_i},\sqrt{\zeta_j})$ and so can be
  computed in polynomial time. For the latter, the same behaviour
  occurs on the resulting matrix as $T$ is columnblock-wise defined
  over $\K(\zeta)$.
\end{proof}

  Let us assume that the characteristic polynomial of $Z$, of degree
  $n$, can be factored as $P_1^{e_1}\cdots P_s^{e_s}$ with $P_i$
  and $P_j$ coprime whenever $i\neq j$, $\deg P_i=d_i$ and $e_i\geq
  1$. From a computation point of view, one needs to introduce a
  variable $\alpha_{i,j}$ for each root of $P_i$ and then a variable
  $\beta_{i,j}$ for the square root of $\alpha_{i,j}$. This yields a
  total number of $2\sum_{i=1}^sd_i$ variables.  In
  Section~\ref{ss:sqrt_char_p}, we present another method which
  manages to introduce only $2s$ variables in characteristic $p>2$.
\subsection{Probabilistic and Deterministic
  Algorithms} \label{ss:proba_algo} We first describe a simple
probabilistic algorithm summarizing the method of
Section~\ref{ss:computation}.
\begin{algo}\label{al:one_sol} Probabilistic algorithm.
  \begin{description}
  \item[Input] Two sets of invertible symmetric matrices
    $\cH=\{H_1=D,\ldots,H_m\}\subseteq\K^{n\times n}$ and
    $\cH'=\{H_1'=D,\ldots,H_m'\}\subseteq\K^{n\times
      n}$.
  \item[Output] A description of the matrix $A \in \GL_n(\LL)$ such
    that $H_i'=A^{\T}\,H_i\,A$ for all $1\leq i\leq m$ or ``\NoSol''.
  \end{description}
  \begin{enumerate}
    \item Compute the vector subspace $\mcal{Y}=\{Y\ \mid
    \ D^{-1}\,H_i\,Y=Y\,D^{-1}\,H_i',\
    \forall\,1\leq i\leq m\}\subseteq\K^{n \times n}$.
    \item \textbf{If} $\mcal{Y}$ is reduced to the null matrix
    \textbf{then return} ``\NoSol''.
    \item Pick at random $Y\in\mcal{Y}$.
    \item Compute $Z=D^{-1}\,Y^{\T}\,D\,Y$ and 
    $J=T^{-1}Z\,T\in\LL^{n\times n}$, the Jordan normal
    form of $Z$ together with \(T\).
    \item Compute $G^{-1}$ the inverse of a square root of
    $J$.
    \item \textbf{Return} $Y\,T\,G^{-1}$ and $T$.
  \end{enumerate}
\end{algo}

\begin{theorem}\label{proba:algo}
  Algorithm~\ref{al:one_sol} is correct with probability at least
  $1-n/|\K|$ and runs in polynomial time.
\end{theorem}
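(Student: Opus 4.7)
The plan is to verify correctness in both output branches, bound the failure probability by $n/|\K|$ via Schwartz-Zippel-DeMillo-Lipton, and then check step by step that the running time is polynomial.

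First, the space $\mcal{Y}\subseteq\K^{n\times n}$ is carved out by the linear constraints $D^{-1}H_i\, Y=Y\, D^{-1}H_i'$, so a basis is obtained in polynomial time by \Gauss's elimination. Any $D$-orthogonal solution of~\eqref{eq:main} belongs to $\mcal{Y}$ over the corresponding extension, and since $\mcal{Y}$ is defined over $\K$ its dimension is preserved under extension; hence $\mcal{Y}=\{0\}$ forces $\f\not\sim\g$ and the output \NoSol is correct. Otherwise, Proposition~\ref{prop:pseudoortho_solution} tells us that a $D$-orthogonal solution exists if and only if there is an invertible $Y$ satisfying the first conjugacy relation together with the auxiliary transpose condition $K_i^{\T} D Y D^{-1}=D Y D^{-1} K_i'^{\T}$ with $K_i=D^{-1}H_i$. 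A short computation using the symmetry of $H_i,H_i'$ and the diagonality of $D$ reveals that the auxiliary condition is equivalent to the first: both reduce to $H_i Y=D Y D^{-1} H_i'$.

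Consequently, when $\f\sim\g$, the polynomial $\det$ does not vanish identically on $\mcal{Y}$; being of degree $n$ in the free parameters of $\mcal{Y}$, Schwartz-Zippel-DeMillo-Lipton~(\cite{DL78,Z79}) guarantees that a uniformly random $Y\in\mcal{Y}$ is invertible with probability at least $1-n/|\K|$. Conditional on this event, Proposition~\ref{prop:pseudoortho_solution} yields $W$, a square root of $Z=D^{-1}Y^{\T}D Y$, lying in the Jordan algebra of $D$ and commuting with each $K_i'$, so that $A=Y W^{-1}$ is the sought $D$-orthogonal conjugator. The algorithm realizes $W$ through the Jordan form $J=T^{-1}Z T$ and a block-diagonal square root $G$ of $J$ (closed-form formula of Section~\ref{ss:sqrt}), returning the pair $(Y T G^{-1},T)$, which by Proposition~\ref{prop:comput_ortho} is a valid split representation of $A=(Y T G^{-1})\, T^{-1}$ and allows the verifications $S^{\T}H_i\,S=T^{\T}H_i'\,T$ to be carried out in polynomial time.

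Every individual step---computing $\mcal{Y}$, forming $Z$, performing the Jordan decomposition (paragraph~\ref{par:jordan}), and extracting the block-wise square root---is polynomial-time, and all intermediate matrices have entries in algebraic extensions of $\K$ of polynomial degree in $n$. The main technical obstacle, already anticipated in Proposition~\ref{prop:comput_ortho}, is to \emph{never} form the explicit product $A=(Y T G^{-1})\, T^{-1}$: doing so would mix the different eigenvalues of $Z$ as in~\cite{Cai1994} and push entries into an extension of possibly exponential degree. Returning $Y T G^{-1}$ and $T$ separately---the first columnblock-wise and the second rowblock-wise over extensions of polynomial degree---sidesteps exactly this issue and delivers the claimed polynomial running time.
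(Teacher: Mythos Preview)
Your proof is correct and follows essentially the same approach as the paper: correctness is inherited from Section~\ref{ss:computation} (Propositions~\ref{prop:pseudoortho_solution} and~\ref{prop:comput_ortho}), and the $1-n/|\K|$ success bound comes from Schwartz--Zippel--DeMillo--Lipton applied to $\det Y$ on the free parameters of $\mcal{Y}$. You supply more detail than the paper does---in particular the explicit check that the transpose condition in Proposition~\ref{prop:pseudoortho_solution} is automatic for symmetric $H_i,H_i'$ and diagonal $D$, and the reminder not to form $A$ explicitly---but the underlying argument is the same.
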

\begin{proof}
  The correctness and the polynomial-time complexity of the algorithm
  come from Section~\ref{ss:computation}.
  After computing $\mcal{Y}$ and putting the equations defining its
  matrices in triangular form, one has to pick at random one matrix $Y
  \in \mcal{Y}$. By sampling the whole field $\K$ on these free
  variables, the probability that $\det Y=0$ is upper bounded by
  $n/|\K|$ thanks to Schwartz-Zippel-DeMillo-Lipton
  Lemma~(\cite{DL78,Z79}).
\end{proof}

\begin{remark}
  Let us recall that the conjugacy problem does not depend on the
  ground field (see~\cite{Pazzis2010}), \emph{i.e.}\ if there
  exists $Y\in\GL_n(\K')$, such that $H_i\,Y=Y\,H_i'$, then there exists
  $Y'\in\GL_n(\K)$ such that $H_iY'=Y'H_i'$. This allows us to extend
  $\K$ to a finite extension in order to decrease the probability of
  getting a singular matrix $Y$. Thus the success probability of
  Algorithm~\ref{al:one_sol} can be amplified to $1-n/|\K'|$ for any
  extension $\K'\supseteq\K$. The probability can be
  then made overwhelming large by considering extension of degree $O(n)$. In
  this case, the algorithm returns the description of a solution on
  $\K'(\sqrt{\zeta_1},\ldots,\sqrt{\zeta_r})$.
  Notice also that this algorithm can be turned into a
  deterministic algorithm using~\cite[Theorem~2]{DBLP:conf/issac/ChistovIK97}. 
  That is, there is
  a polynomial-time algorithm allowing to compute an invertible element in
  $\mcal{Y}$.  Furthermore, if one of the original Hessian matrices is
  already invertible, the computations of the essential variables of
  paragraph~\ref{par:variables} and the search of an equation with
  $n$ essential variables in
  paragraph~\ref{par:identity} can be done in a deterministic
  way. Whence, the whole algorithm is deterministic.
\end{remark}
The main Theorem~\ref{th:main} summarizes this remark together with
Theorem~\ref{proba:algo}.

\subsection{The binary Case}\label{ss:char_2}
In this section, we investigate fields of characteristic $2$.  Let
$\K=\F_q$ and $(\f,\g)\in \K[\x]^m \times \K[\x]^m$.
Instead of Hessian matrices, we consider equivalently upper triangular
matrices $H_1,\ldots,H_m$ and $H_1',\ldots,H_m'$ such that:
\[f_i(\x)=\x^{\T}\,H_i\,\x,\ g_i(\x)=\x^{\T}\,H_i'\,\x,\quad\forall\,1\leq
i\leq m.\]

For any matrix $M\in\K^{n\times n}$, let us denote
$\Delta(M)=\Diag(m_{11},\ldots,m_{nn})$ and $\Sigma(M)=M+M^{\T}$.
It is classical that if there exists \(A \in \GL_n(\K)\) such that
$\g(\x)=\f(A\cdot\x)$, then we also have
\begin{align}
  \Sigma(H_i')&=A^{\T}\,\Sigma(H_i)\,A,
  \label{eq:char2sigma}\\
  \Delta(H_i')&=\Delta(A^{\T}\,H_i\,A),\quad\forall\,i,\ 1\leq i\leq m.
  \label{eq:char2delta}
\end{align}
It suffices for this to expand $\f(A\cdot\x)$ and to consider the upper
triangular matrices. In a sense, $\Sigma(H_i)$ is the Hessian matrix
of $f_i$ and $\Delta(H_i)$ allow us to remember the $x_j^2$ terms
in $f_i$.
Combining two equations of~\eqref{eq:char2sigma} yields
$\Sigma(H_j')^{-1}\Sigma(H_i')=A^{-1}\Sigma(H_j)^{-1}\Sigma(H_i)A$
as long as $\Sigma(H_j)$ is invertible.
Let us notice that $\Sigma(H_i)$'s are symmetric matrices with
a zero diagonal, thus antisymmetric matrices with a zero diagonal.
We would like to stress out that in odd dimension, the determinant of a
symmetric matrix $S$ with a zero diagonal is always zero. Indeed, expanding
formula $\sum_{\sigma\in\mfk{S}_n}\prod_{i=1}^ns_{i,\sigma(i)}$ yields, for each
nonzero term $\prod_{i=1}^ns_{i,\sigma(i)}$, the term
$\prod_{i=1}^ns_{i,\sigma^{-1}(i)}=\prod_{i=1}^ns_{\sigma(i),i}=
\prod_{i=1}^ns_{i,\sigma(i)}$. Dimension $n$ being odd,
$\prod_{i=1}^ns_{i,\sigma(i)}$ cannot be the same term
as $\prod_{i=1}^ns_{i,\sigma^{-1}(i)}$. Hence they cancel each other.
One can also see
these matrices as projections of antisymmetric matrices over a ring of
characteristic $0$, namely $\Z_q$ the
unramified extension of the ring of dyadic integers of degree $\log_2 q$.
Let $\tilde{S}\in\Z_q^{n\times n}$ be antisymmetric such that
$\tilde{S}\mapsto S$.
Then $\det\tilde{S}=\det \tilde{S}^{\T}=\det(-\tilde{S})=(-1)^n\det\tilde{S}$,
hence $\det\tilde{S}=0$ and $\det S=0$.

Therefore, if $n$ is odd, then a linear combination of the
$\Sigma(H_i)$'s will always be singular. This can be related to
the \emph{irregular case} of the introduction.
\paragraph{Reduction to canonical representations in even dimension}
In this setting, we also rely on Assumption~\ref{as:regular}
to assume that a linear combination $\sum_{i=1}^m\lambda_i\,f_i$ is not
degenerate, and
$\lambda_1,\ldots,\lambda_m$ can be found
in randomized polynomial time.
Assuming $\lambda_1\neq 0$, we substitute the linear
combinations $\sum_{i=1}^m\lambda_i\,H_i$ and $\sum_{i=1}^m\lambda_i\,H_i'$
to $H_1$ and $H_1'$.

As a
consequence, we can find linear forms $\ell_1,\ldots,\ell_n$ in $\x$
such that, see~\cite[Theorem~6.30]{finite_fields}:
$f_1(\x)=\ell_1\ell_2+\ell_3\ell_4+\cdots+\ell_{n-1}\ell_n$
or
$f_1(\x)=\ell_1\ell_2+\ell_3\ell_4+\cdots+\ell_{n-1}\ell_n+\ell_{n-1}^2
+d\ell_n^2$, where $\tr_{\K}(d)=d+d^2+\cdots+d^{q/2}=1$.
After applying this change of variables,
$\Sigma(H_1)$ is always the following invertible block diagonal matrix:
\[\Sigma(H_1)=\Diag\left(\begin{pmatrix}0 &1\\1 &0\end{pmatrix},
  \ldots,\begin{pmatrix}0 &1\\1 &0\end{pmatrix}\right).\]
Following
paragraph~\ref{par:invertible}, we once again choose $\nu_i$ such that
$\Sigma(H_i+\nu_i\,H_1)$ is invertible and replace $H_i$ by $H_i+\nu_i\,H_1$.
Thus,
Proposition~\ref{prop:first_reduction} and
Theorem~\ref{th:second_reduction} become:
\begin{proposition}\label{prop:char_2}
  Let $n$ be an even integer and $\K$ be a field of characteristic $2$. Let 
  $(\f,\g)\in\K[x_1,\ldots,x_n]^m \times \K[x_1,\ldots,x_n]^m $ be regular
  quadratic homogeneous polynomials. There is a randomized polynomial-time
  algorithm which returns ``\NoSol'' only if $\f\not\sim\g$ or a new
  instance
  \[(\tilde{\f},\tilde{\g})=((\delta,\tilde{f}_2,\ldots,\tilde{f}_m),
  (\delta,\tilde{g}_2,\ldots,\tilde{g}_m))\K[x_1,\ldots,x_n]^m\times
  \K[x_1,\ldots,x_n]^m\] such that
  $\f\sim\g\iff\tilde{\f}\sim\tilde{\g}$.
  Furthermore, denoting $D$ the
  upper triangular matrix of
  $\tilde{f}_1=\tilde{g}_1=\delta$, \IPS comes down to a
  $\Sigma(D)$-Orthogonal
  Simultaneous Matrix Conjugacy problem, \emph{i.e.}\ conjugacy by an
  $\Sigma(D)$-orthogonal matrix under some constraints:
  \begin{align*}
    A^{\T}\,\Sigma(D)\,A&=\Sigma(D)\text{ and }\forall\,i,2\leq i\leq m,
    \Sigma(D)^{-1}\,\Sigma(H_i')=A^{-1}\,\Sigma(D)^{-1}\,\Sigma(H_i)\,A,\\
    \Delta(A^{\T}\,H_i\,A)&=\Delta(H_i').
  \end{align*}
\end{proposition}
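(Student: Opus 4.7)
The plan is to mirror the normalization steps of Proposition~\ref{prop:first_reduction} and Theorem~\ref{th:second_reduction}, adapted to characteristic~$2$ with $n$ even. First, under Assumption~\ref{as:regular}, I would invoke Schwartz-Zippel-DeMillo-Lipton to find in randomized polynomial time coefficients $\lambda_1,\ldots,\lambda_m\in\K$ such that the upper triangular matrix of $\varphi=\sum_i\lambda_i f_i$ has an invertible $\Sigma$. This is possible precisely because $n$ is even (the observation preceding the statement shows that $\Sigma(M)$ has determinant zero in odd dimension), and the success probability bound $1-n/|\K|$ follows from Assumption~\ref{as:regular}. After re-indexing so that $\lambda_1\neq 0$, I substitute $\varphi$ and $\gamma=\sum_i\lambda_i g_i$ for $f_1,g_1$ without altering the equivalence class.

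Next, I would apply the classification of nondegenerate quadratic forms in characteristic~$2$ over $\F_q$ (\cite[Theorem~6.30]{finite_fields}) to compute $P\in\GL_n(\K)$ with $\varphi(P\x)$ in canonical form, and independently $Q\in\GL_n(\K)$ with $\gamma(Q\x)$ in canonical form. In characteristic~$2$ there are exactly two isomorphism types (the hyperbolic form and the one involving a trace-$1$ element); if the canonical types of $\varphi$ and $\gamma$ disagree I output ``\NoSol'' since $\f\not\sim\g$. Otherwise let $\delta$ denote the common form and set $\tilde{\f}=(\delta,f_2(P\x),\ldots,f_m(P\x))$, $\tilde{\g}=(\delta,g_2(Q\x),\ldots,g_m(Q\x))$; these satisfy $\f\sim\g\iff\tilde{\f}\sim\tilde{\g}$. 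For each $i\geq 2$ such that $\Sigma(\tilde{H}_i)$ is singular, I apply the trick of paragraph~\ref{par:invertible}: replace $\tilde{H}_i$ by $\tilde{H}_i+\nu_i D$ for $\nu_i\in\K$ not an eigenvalue of $\Sigma(D)^{-1}\Sigma(\tilde{H}_i)$, which exists because $|\K|>n$. This preserves the instance while ensuring that every $\Sigma(\tilde{H}_i)$ is invertible.

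Finally, the conjugacy reformulation is read off from equations~\eqref{eq:char2sigma} and~\eqref{eq:char2delta}: applied to $i=1$, \eqref{eq:char2sigma} becomes $A^{\T}\Sigma(D)A=\Sigma(D)$, asserting that $A$ is $\Sigma(D)$-orthogonal. For $i\geq 2$, combining with the $i=1$ equation yields the conjugacy relation $\Sigma(D)^{-1}\Sigma(\tilde{H}_i')=A^{-1}\Sigma(D)^{-1}\Sigma(\tilde{H}_i)A$, and the identities $\Delta(\tilde{H}_i')=\Delta(A^{\T}\tilde{H}_i A)$ come directly from~\eqref{eq:char2delta}. The main obstacle, and the genuine difference with the odd characteristic setting, is that $\Delta(A^{\T}HA)$ does not simplify to a linear or conjugation-type expression in $A$: the diagonal is cut out \emph{after} the transformation, so the $\Delta$-equations must be carried along as separate constraints on top of the $\Sigma(D)$-orthogonal simultaneous conjugacy. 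A secondary delicate point is recognising efficiently which of the two characteristic-$2$ normal forms $\varphi$ and $\gamma$ belong to; this is handled by computing (for instance) the Arf invariant, which is polynomial-time.
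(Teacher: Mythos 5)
Your proposal is correct and follows essentially the same route as the paper's proof: random linear combination via Schwartz-Zippel--DeMillo--Lipton to obtain $\varphi$ with $\Sigma(\varphi)$ invertible (possible only in even dimension), reduction of $\varphi$ and $\gamma$ to a common canonical form $\delta$ by the characteristic-$2$ classification of nondegenerate quadratic forms, returning \NoSol{} when the canonical types differ, the $\nu_i$-shift trick of paragraph~\ref{par:invertible} to make all $\Sigma(\tilde{H}_i)$ invertible, and the direct reading of the $\Sigma(D)$-orthogonal conjugacy constraints together with the retained $\Delta$-constraints from equations~\eqref{eq:char2sigma} and~\eqref{eq:char2delta}. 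The only cosmetic additions are your explicit mention of the Arf invariant as the efficient classifier and your remark that the $\Delta$-equations cannot be linearized --- both consistent with, but more explicit than, the paper's terser exposition.
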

\begin{proof}
  We mimic the proof of Proposition~\ref{prop:first_reduction}. We compute
  in randomized polynomial time $\lambda_1,\ldots,\lambda_m \in \K$
  such that $\vphi=\sum_{i=1}^{m}\lambda_i \cdot f_i$ is regular and
  we define $\gamma=\sum_{i=1}^{m}\lambda_i \cdot g_i$.  Assuming
  w.lo.g.\ $\lambda_1\neq 0$. We have then:
  \[\f \sim \g \iff (\vphi,f_2,\ldots,f_m) \sim (\gamma,g_2,\ldots,g_m).\]
  Computing $\delta$ the canonical quadratic form equivalent to $\vphi$
  yields a $P \in\GL_n(\K)$ such that
  $\tilde{\f}=(\tilde{\vphi}=\delta,\tilde{f_2},\ldots,
  \tilde{f_m})=(\vphi(P\x),f_2(P\x),\ldots,f_m(P\x))$.

  Then computing the canonical quadratic form
  equivalent to $\gamma$ allows us to compare it with $\delta$.
  If they are different, then $\f \not \sim \g$ and we return
  ``\NoSol''. Otherwise, the reduction is given by a matrix $Q \in
  \GL_n(\K)$ such that
  $\tilde{\g}=(\tilde{\gamma}=\delta,\tilde{g}_2,\ldots\tilde{g}_m)=
  (\gamma(Q\x),g_2(Q\x),\ldots,g_m(Q\x))$. Thus,
  $\tilde{\f}\sim\tilde{\g}$ if  and only if $\f\sim\g$.

  Finally, equations~\eqref{eq:char2sigma}
  $A^{\T}\,\Sigma(H_i)\,A=\Sigma(H_i')$ for all $i$, $1\leq i\leq m$ of
  can be rewritten as $A^{\T}\,\Sigma(D)\,A=\Sigma(D)$ and
  $\Sigma(D)^{-1}\,\Sigma(H_i')=A^{-1}\,\Sigma(D)^{-1}\,\Sigma(H_i)A$ for all $i$,
  $2\leq i\leq m$, while equations~\eqref{eq:char2delta}
  $\Delta(A^{\T}\,H_i\,A)=\Delta(H_i')$ for all $i$, $1\leq i\leq m$ remain.
\end{proof}
As a consequence, we designed the following algorithm to solve regular
instances of
quadratic-\IPS in even dimension over a field of characteristic $2$.
\begin{algo}\label{al:char_2} Probabilistic algorithm in
  characteristic $2$.
  \begin{description}
  \item[Input] Two sets of triangular matrices
    $\cH=\{H_1=D,\ldots,H_m\}\subseteq\K^{n\times n}$ and
    $\cH'=\{H_1'=D,\ldots,H_m'\}\subseteq\K^{n\times n}$ such that
    $H_1+H_1^{\T}$ and $H_1'+H_1'^{\T}$ are invertible.
  \item[Output] A description of the matrix $A \in \GL_n(\LL)$ such
    that $H_i'=A^{\T}\,H_i\,A$ for all $1\leq i\leq m$ or ``\NoSol''.
  \end{description}
  \begin{enumerate}
    \item Compute the vector subspace $\mcal{Y}=\{Y\ \mid
    \ \Sigma(D)^{-1}\,\Sigma(H_i)\,Y=Y\,\Sigma(D)^{-1}\,\Sigma(H_i'),\
    \forall\,1\leq i\leq m\}\subseteq\K^{n \times n}$.
    \item \textbf{If} $\mcal{Y}$ is reduced to the
    null matrix
    \textbf{then return} ``\NoSol''.
    \item Pick at random $Y\in\mcal{Y}$.
    \item Compute
    $Z=\Sigma(D)^{-1}\,\Sigma(Y)^{\T}\,\Sigma(D)\,\Sigma(Y)$ and 
    $J=T^{-1}Z\,T\in\LL^{n\times n}$, the Jordan normal
    form of $Z$ together with \(T\).
    \item \textbf{While} $J$ is not diagonal
      \begin{enumerate}
      \item Pick at random $Y\in\mcal{Y}$.
      \item Compute $Z$, $J$ and $T$ as above.
      \end{enumerate}
    \item Compute $G^{-1}$ the inverse of a square root of $J$.
    \item \textbf{Return} $Y\,T\,G^{-1}$ and $T$.
  \end{enumerate}
\end{algo}
The while loop comes from the fact that unlike other characteristics,
even if $Z$ is invertible, it might not have any square roots which
are polynomials in $Z$. In Section~\ref{ss:sqrt_char_2}, we prove that
there exists a square root of $Z$, which is a polynomial in $Z$ if and
only if $Z$ is diagonalizable.

\paragraph*{Open Question: The Irregular Case}
As stated above, in characteristic $2$, the irregular case seems to
cover more cases than in other characteristics. Indeed, what is called
usually a regular quadratic form in odd dimension falls in the
irregular case. However, from the Hessian matrix point
of view, an instance is irregular if all linear combinations of the Hessian
matrices are singular over the ground field. This allows us to unify
our statement about irregularity to all characteristics.

It seems to be an intriguing challenge to solve the binary case on
instances with regular quadratic forms, in particular in odd dimension.

\subsection{Benchmarks}\label{ss:bench}
We present in this section some timings of our algorithms over instances
of \IPS. We created instances $\mcal{H}=\{H_1,\ldots,H_m\}$ and
$\mcal{H}'=\{H_1',\ldots,H_m'\}$ which are randomly alternatively equivalent
over $\F_p$, equivalent over $\F_{p^2}$ but not $\F_p$ or not equivalent at all
over $\bar{\F}_p$, the algebraic closure of $\F_p$, for an odd $p$.
We report our timings in the following Table~\ref{tab:char_odd}
obtained using one core of
an \textsc{Intel Core i7} at $2.6 \si{\giga\hertz}$ running
\textsc{Magma~2.19}, \cite{magma}, on
\textsc{Linux} with $16\si{\giga}\mrm{B}$
of RAM. These timings corresponds to solving the linear system
which is the dominant part in our algorithm with complexity $O(n^{2\,\omega})$.
The code is
accessible on the first author's
webpage~\url{http://www-polsys.lip6.fr/~berthomieu/IP1S.html}.
To simplify the presentation,
we only considered the case when $m=n$. That is, we only considered $n$ matrices
of size $n$.

Since our matrices are randomly chosen, we apply the following
strategy.
We first solve the linear system
$H_1^{-1}\,H_i\,A=A\,H_1'^{-1}\,H_i'$, for all $i$, $2\leq i\leq
m$. In fact, in practice, $i=2,3$ give enough equations to retrieve
$A$ up to one
free parameter if $\mcal{H}$ and $\mcal{H}'$ are indeed equivalent. If
the matrices are not equivalent, this linear system will return the
zero matrix only.

Then, to determine $A$, we solve one quadratic equation amongst the ones given
by $A^{\T}\,H_1\,A=H_1'$. Let us notice that either all these equations can be
solved over $\F_p$ or none of them can. If they can, then $\mcal{H}$
and $\mcal{H}'$ are equivalent over
$\F_p$ and we have determined $A$ up to a sign,
otherwise $\mcal{H}$ and $\mcal{H}'$ are
only equivalent over $\F_{p^2}$ but not $\F_p$ and we also have computed such
an $A$. This yields Algorithm~\ref{al:crypto}.

\begin{algo}\label{al:crypto} Simplified  Algorithm.
  \begin{description}
  \item[Input] Two sets of generic invertible symmetric matrices
    $\cH=\{H_1,\ldots,H_m\}\subseteq\K^{n\times n}$ and
    $\cH'=\{H_1',\ldots,H_m'\}\subseteq\K^{n\times n}$.
  \item[Output] A matrix $A \in \GL_n(\K)$ such
    that $H_i'=A^{\T}\,H_i\,A$ for all $1\leq i\leq m$ or ``\NoSol''.
  \end{description}
  \begin{enumerate}
    \item Compute the vector subspace $\mcal{Y}=\{Y\ \mid
    \ H_1^{-1}\,H_i\,Y=Y\,{H_1'}^{-1}\,H_i',\
    \forall\,2\leq i\leq 3\}\subseteq\K^{n \times n}$.
    \item \textbf{If} $\mcal{Y}$ is reduced to a space of
    singular matrices \textbf{then return} ``\NoSol''.
    \item Determine $Y_0$ such that
      $\mcal{Y}=\acc{\lambda Y_0 \mid \lambda\in\K}$.
    \item Solve in $\lambda$ one equation
    $\lambda^2(Y_0^{\T}\,H_1\,Y_0)_{i,j}=(H_1')_{i,j}$ for
    a suitable pair $(i,j)$.
    \item Set $A=\lambda\,Y_0$.
    \item Pick at random $r\in\K^n$.
    \item Check that $A^{\T}\,H_i\,A\,r=H_i'\,r$ for all $i$,
    $1\leq i\leq m$.
    \item \textbf{Return} $A$.
  \end{enumerate}
\end{algo}

\paragraph{Complexity estimate} Taking the first $3$ matrix equations
$H_1^{-1}\,H_i\,Y=Y\,{H_1'}^{-1}H_i'$, in the $n^2$ unknowns, one
can solve this system in $O(n^{2\,\omega})$ operations in $\K$. Then,
one needs to determine
$\lambda$ by extracting one square root in $\K$ which can be done in
$O((\log q)^3)$ operations in $\K=\F_q$ with Tonelli--Shanks's
algorithm, \cite{Shanks1973}. Finally, one can check that $A^{\T}\,H_i\,A=H_i'$
for all $i$, with high probability, by picking up at random a vector
and checking 
that the products of this vector with both sets of matrices coincides. This
can be done in $O(m\,n^2)$ operations in $\K$.

Recall that, in practice, the best
matrix multiplication algorithm is due to~\cite{Strassen1969} whose
complexity is in $O(n^{\log_2 7})\subseteq O(n^{2.807})$. Thus, our complexity is
in $O(n^{5.615})$. This complexity is well
confirmed since multiplying by $2$ the sizes and the number of matrices
multiplies our timings roughly by at most $50$.

\begin{table}[t]
  \centering
  \begin{tabular}{|c|r|r|r|r|r|r|r|r|r|r|}
    \hline
    $n$     
    &$20$ &$30$ &$40$ &$50$ &$60$ &$70$ &$80$ &$90$ &$100$\\
    \hline
    Timings 
    &$0.040$ &$0.20$ &$0.84$ &$2.7$ &$7.5$ &$17$
    &$40$ &$79$ &$130$\\
    \hline
  \end{tabular}
  \caption{Timings for solving \IPS over $\F_{65521}$ in $\si{s}$.}
  \label{tab:char_odd}
\end{table}

In Table~\ref{tab:char_2}, we report our timings for solving the linear
system of our algorithm in
characteristic~$2$ presented in Section~\ref{ss:char_2}. Our method does not
differ much from the one in odd characteristic. We pick at random two sets of
$m$ upper triangular matrices over $\F_2$ which are either equivalent
over $\F_2$ or not equivalent at all over $\bar{\F}_2$, the algebraic closure of
$\F_2$. We first solve the linear system
$\Sigma(H_1)^{-1}\,\Sigma(H_i)\,A=A\,\Sigma(H_1')^{-1}\,\Sigma(H_i')$,
for all $i$, $2\leq i\leq m$. Let us notice that in dimension $2$, the
linear system does not
yield any information on $A$. In dimensions $8$ or more (resp. $4$ and $6$),
if $\cH$ and $\cH'$ are
equivalent, the linear system yields in general $A$ up to one free parameter
if $m\geq 3$ (resp. $m\geq 5$). Otherwise, it yields the zero matrix.
Then, it suffices to solve one of the quadratic equations amongst the
one given by
$A^{\T}\,\Sigma(H_1)\,A=\Sigma(H_1')$ and $\Delta(A^{\T}\,H_i\,A)=\Delta(H_i')$,
for all $i$, $1\leq i\leq m$  (see Proposition~\ref{prop:char_2}).

We compare the timings of both \textsc{MAGMA} and the C library
\textsc{M4RI}, due to~\cite{m4ri}.

Once again, our complexity in $O(n^{2\,\omega})$ is well confirmed by our timings.
Thanks to the linear system which totally determines $A$ up
to one free parameter, we just need to set this parameter to $1$ to obtain $A$.
This also explains why our timings are better than over $\F_{65521}$ although
it would seem a lot of quadratic equations must be solved.

\begin{table}[t]
  \centering
  \begin{tabular}{|c|r|r|r|r|r|r|r|r|r|r|}
    \hline
    $n$     &$20$ &$30$ &$40$ &$50$ &$60$ &$70$ &$80$ &$90$ &$100$\\
    \hline
    Timings (\textsc{MAGMA})
    &$0.010$ &$0.030$ &$0.080$ &$0.25\phantom{0}$ &$0.68$ &$1.4\phantom{0}$
    &$3.2\phantom{0}$ &$6.25$ &$16\phantom{.00}$\\
    \hline
    Timings (\textsc{M4RI})
    & & &$0.010$ &$0.030$ &$0.06$ &$0.14$
    &$0.27$ &$0.51$ &$0.91$\\
    \hline
  \end{tabular}
  \caption{Timings for solving \IPS over $\F_2$ in $\si{s}$.}
  \label{tab:char_2}
\end{table}

\section{Counting the Solutions: \#\IPS}
\label{s:enumeration}
In this part, we present a method for counting the number of solutions to
quadratic-\IPS. The main result is a consequence
of~\cite[Lemma~4.11]{Singla2010}. According to
Proposition~\ref{prop:homo}, this is
equivalent to enumerating all the invertible linear transformations on
the variables between two sets of quadratic homogeneous
polynomials. We provide here an upper bound on the number of
solutions.  We consider in this part regular quadratic homogeneous instances
$(\f,\g) \in \K[\x]^m \times \K[\x]^m$.

Let $\cH=\{H_1=D,\ldots,H_m\}$ and
$\cH^{\prime}=\{H^{\prime}_1=D,\ldots,H^{\prime}_m\}$ be the Hessian
matrices in $\K^{n\times n}$ of $\f$ and $\g$ respectively. Our
counting problem is equivalent to enumerating the number of
$D$-orthogonal matrices $X$ satisfying:
\begin{eqnarray}\label{eq:enum}
  X^{-1}\,D^{-1}\,H_i \, X=D^{-1}\,H^{\prime}_i,  \quad \forall\,i, 1 \leq i
  \leq m.
\end{eqnarray}
In~\cite[Section~4]{Singla2010}, the author computes the set of all
matrices commuting with a given matrix. In particular, from Lemma~4.11 of the
aforementioned paper, we can determine the size of this set and thus
our upper bound on the number of solutions to quadratic-\IPS. In order
to be self-contained, the proofs of the following lemmas shall be found
in~\ref{ap:enum_proof}.

Let us notice that if $X$ and $X'$ are both orthogonal solutions of
\eqref{eq:enum}, then $XX'^{-1}$ commutes with $D^{-1}\cH$ (resp.  $X^{-1}X'$
commutes with $D^{-1}\cH^{\prime}$). Therefore, the size of
the set of solutions is upper bounded by the number of invertible
elements in the centralizer $\Cent(D^{-1}\cH)$ of $D^{-1}\cH$.

Let $\alpha$ be an algebraic element of degree $m$ over $\K$ and let
$\K'=\K(\alpha)$.  We consider the matrix
$H=D^{-1}\pare{H_1+\cdots+\alpha^{m-1}H_m}\in\K'^{n\times n}$.  It is clear that a
matrix $X\in\K^{n\times n}$ is such that $X^{-1}\,D^{-1}\,H_i\,X=D^{-1}\,H_i$ for all
$i,1\leq i\leq m$ if and only if $X^{-1}HX=H$. Hence, the problem
again reduces itself to the computation of the centralizer $\Cent(H)$
of $H$ intersected with $\GL_n(\K)$.  To ease the analysis, we
consider the subspace $\mcal{V}=\Cent(H)\cap\K^{n\times n}$ of
matrices in $\K^{n\times n}$ commuting with $H$. This provides an
upper bound on the number of solutions.  The dimension of $\mcal{V}$
as a $\K$-vector space is upper bounded by the dimension of $\Cent(H)$
as a $\K'$-vector space. Indeed,
$\mcal{V}\otimes\K'\subseteq\Cent(H)$, hence
$\dim_{\K}\mcal{V}=\dim_{\K'}(\mcal{V}\otimes\K')\leq
\dim_{\K'}\Cent(H)$. Since we only want the size of the
centralizer of $H$, we can restrict our attention to the centralizer
of the Jordan normal form $J$ of $H$ defined over a field $\LL$.

Let us denote $\zeta_1,\ldots,\zeta_r$ the eigenvalues of $J$.
According to~\cite[Lemma~4.11]{Singla2010} and Lemma~\ref{lm:dim_cent}
in~\ref{ap:enum_proof}, if $J$ is made of Jordan
blocks of size $s_{i,1}\leq\cdots\leq s_{i,d_i}$ for $i,1\leq i\leq
r$, then centralizer of $H$ has dimension at most
\[\sum_{1\leq i\leq r}\sum_{1\leq j\leq d_i}(2d_i-2j+1)s_{i,j}.\]

As a consequence, if $q$ is an odd prime power, then the following
corollary gives an upper bound on the number of solutions of
quadratic-\IPS in $\F_q^{n\times n}$.
\begin{corollary}\label{cor:enum}
  Let $H_1,\ldots,H_m\in\F_q^{n\times n}$ be symmetric matrices. Let
  $\alpha$ be algebraic over $\F_q$ of degree $m$.
  Let $H=\sum_{i=1}^m\alpha^{i-1}\,H_i\in\F_{q^m}^{n\times n}$ and let
  $J$ be its normal Jordan form with eigenvalues
  $\zeta_1,\ldots,\zeta_r$. Assuming the blocks of $J$ associated
  to $\zeta_i$ are $J_{\zeta_i,s_{i,1}},\ldots,J_{\zeta_i,s_{i,d_1}}$ with
  $s_{i,1}\leq\cdots\leq s_{i,d_i}$ for $i,1\leq i\leq r$, then the
  number of solutions of quadratic-\IPS in $\F_q^{n\times n}$
  on the instance $(H_1,\ldots,H_m)$ is at most
  \[q^{\pare{\sum_{1\leq i\leq r}\sum_{1\leq j\leq d_i}(2\,d_i-2\,j+1)\,s_{i,j}}}-1.\]
\end{corollary}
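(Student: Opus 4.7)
The plan is to reduce the count of quadratic-\IPS solutions in $\F_q^{n\times n}$ to a count of invertible elements in the centralizer of the family $\{H_1,\ldots,H_m\}$, then to bound the dimension of that centralizer by repackaging the whole family into a single matrix whose Jordan form I can analyze.

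First, following the remark preceding the corollary, if $X$ and $X_0$ are two solutions of the conjugacy system~\eqref{eq:enum} for any instance equivalent to $(H_1,\ldots,H_m)$, then $X_0X^{-1}$ lies in $\mcal{V}=\Cent(H_1,\ldots,H_m)\cap\F_q^{n\times n}$. Fixing any single solution $X_0$, the map $X\mapsto X_0X^{-1}$ is an injection from the set of solutions into $\mcal{V}\cap\GL_n(\F_q)$; since $\mcal{V}$ is an $\F_q$-linear subspace of $\F_q^{n\times n}$ containing the singular zero matrix, I get the crude but sufficient estimate $|\mcal{V}\cap\GL_n(\F_q)|\leq q^{\dim_{\F_q}\mcal{V}}-1$.

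Second, I collapse the simultaneous commutation conditions into a single one over $\F_{q^m}$. Because $\{1,\alpha,\ldots,\alpha^{m-1}\}$ is an $\F_q$-basis of $\F_{q^m}$, a matrix $X\in\F_q^{n\times n}$ commutes with each $H_i$ if and only if it commutes with $H=\sum_{i=1}^{m}\alpha^{i-1}H_i\in\F_{q^m}^{n\times n}$. The $\F_{q^m}$-span of $\mcal{V}$ inside $\F_{q^m}^{n\times n}$ has $\F_{q^m}$-dimension equal to $\dim_{\F_q}\mcal{V}$ and is contained in $\Cent_{\F_{q^m}}(H)$, yielding $\dim_{\F_q}\mcal{V}\leq\dim_{\F_{q^m}}\Cent_{\F_{q^m}}(H)$. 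This base-change inequality is the key step that makes a single-matrix Jordan analysis applicable to the original $m$-fold problem, and it is where I expect most of the care in a careful write-up to go.

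Third, the dimension of a matrix centralizer is a similarity invariant and is preserved under further scalar extension (it equals the dimension of the kernel of the $\F_{q^m}$-linear operator $X\mapsto XH-HX$). Passing to an extension $\LL\supseteq\F_{q^m}$ over which $H$ admits a Jordan normal form $J$, I therefore have $\dim_{\F_{q^m}}\Cent_{\F_{q^m}}(H)=\dim_{\LL}\Cent_{\LL}(J)$. At this point I invoke Lemma~4.11 of~\cite{Singla2010}, which asserts that for $J$ with Jordan blocks of sizes $s_{i,1}\leq\cdots\leq s_{i,d_i}$ at eigenvalue $\zeta_i$, the centralizer has dimension $\sum_{i=1}^{r}\sum_{j=1}^{d_i}(2d_i-2j+1)s_{i,j}$. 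Chaining this formula with the two previous inequalities gives exactly the announced bound; everything beyond Singla's formula is the base-change manoeuvre and routine cardinality counting.
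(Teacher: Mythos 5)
Your proof follows the same line of reasoning as the paper: reduce the solution count to the number of invertible elements of the centralizer of the family, encode the $m$ matrices into the single matrix $H=\sum_i\alpha^{i-1}H_i$ over $\F_{q^m}$ using linear independence of $\{1,\alpha,\ldots,\alpha^{m-1}\}$ over $\F_q$, compare $\F_q$-dimension of the base-field centralizer with the $\F_{q^m}$-dimension via base change, pass to an extension $\LL$ where the Jordan form $J$ lives, and invoke Singla's Lemma~4.11 for the dimension of $\Cent(J)$. The only cosmetic difference is that you assert equality $\dim_{\F_{q^m}}\Cent_{\F_{q^m}}(H)=\dim_{\LL}\Cent_{\LL}(J)$ (via rank preservation under scalar extension), where the paper's Lemma~\ref{lm:dim_cent} only uses the one-sided inclusion $\Cent(H)\otimes\LL\subseteq\Cent_\LL(J)$, which suffices for the bound; both are correct.
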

As mentioned in the introduction,
the counting problem considered here is related to cryptographic
concerns. It corresponds to evaluating the number of equivalent
secret keys in \MPKC (see~\cite{DFPW11,WP11}). In particular,
in~\cite{DFPW11}, the authors propose an ``ad-hoc''
method for solving a particular
instance of \#\IPS. An interesting open question would be to revisit
the results from~\cite{DFPW11} with our approach.

\section{Special Case of the general IP Problem}\label{s:ip}
In this part, we present a randomized polynomial-time algorithm
for the following task:\\
\textbf{Input:} $\g=(g_1,\ldots,g_n) \in \K[x_1,\ldots,x_n]^n$,
and
$\POW_{n,d}=(x_1^d,\ldots,x_n^d) \in \K[x_1,\ldots,x_n]^n$ for some $d>0$.\\
\textbf{Question:} Find -- if any -- $(A,B) \in\GL_n(\K) \times
\GL_n(\K)$ such that:
\[\g=B \cdot \POW_{n,d}(A\cdot \x),
\mbox{ with $\x=(x_1,\ldots,x_n)^{\T}$.}\]
In~\cite{Kayal2011}, the author
proposes a randomized polynomial-time algorithm for solving the
problem when $B$ is the identity matrix and $m=1$ of finding $A$ such
that $g(\x)=f(A\cdot\x)$ with $f(\x)=\sum_{i=1}^nx_i^d$.

We generalize this result to $m=n$ with an additional transformation on
the polynomials. The main tool of our method is the following theorem.
\begin{theorem}\label{th:Kaltofen}
  Let  $\g=(g_1,\ldots,g_n)$ be polynomials of degree $d$ over
  $\K[x_1,\ldots,x_n]$ given in dense representation. 
  Let $L$ be a polynomial size of an
  arithmetic circuit to evaluate the determinant
  of the Jacobian matrix of $\g$. If the size
  of $\K$ is at least $12\,\max
  (2^{L+2},d\,(n-1)\,2^{d\,(n-1)}+d^3\,(n-1)^3,2\,(d\,(n-1)+1)^4)$,
  then one can factor the determinant of the Jacobian matrix of $\g$
  in randomized polynomial time.
\end{theorem}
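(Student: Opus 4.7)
The plan is to reduce the statement to Kaltofen's theorem on factorization of polynomials given by straight-line programs / arithmetic circuits, and to check that the field size hypothesis matches the one required by that theorem. First I would observe that the Jacobian matrix of $\g$ is an $n \times n$ matrix whose entries are partial derivatives $\partial g_i / \partial x_j$, each of total degree at most $d-1$. Consequently its determinant is a polynomial in $\K[x_1,\ldots,x_n]$ of total degree at most $n(d-1) \leq d(n-1) + (d-n)$; in the regime where the bound $d(n-1)$ is used, this gives the degree estimate $D = d(n-1)$ governing the field-size requirement. By hypothesis, this determinant is computable by an arithmetic circuit of size $L$ (and a generic circuit of polynomial size in $n$ and $d$ can be written down from the dense representation anyway).

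Next I would invoke Kaltofen's randomized polynomial-time algorithm for computing the irreducible factorization of a polynomial presented by an arithmetic circuit: given such a circuit of size $L$ computing a polynomial of total degree $D$ in $n$ variables over a sufficiently large field, the algorithm outputs arithmetic circuits for each irreducible factor (together with multiplicities) in time polynomial in $L$, $D$, and $n$, with success probability at least a constant. Applied to the Jacobian determinant, this yields the claimed randomized polynomial-time factorization.

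The three summands in the $\max$ correspond to the three probabilistic ingredients of Kaltofen's algorithm that must each succeed. The $2^{L+2}$ term governs the Schwartz--Zippel--DeMillo--Lipton zero test applied to intermediate polynomials of size controlled by the circuit. The term $d(n-1)\, 2^{d(n-1)} + d^{3}(n-1)^{3}$ controls the random evaluation points used in the Hensel lifting step, where the set of ``bad'' specializations (those at which the resultant vanishes or the lifted factors collide) is bounded in cardinality by an expression of this form. Finally, the term $2\,(d(n-1)+1)^{4}$ bounds the probability that a random linear change of variables fails to put the input in a generic position in which one variable is monic of the right degree. Multiplying by the factor $12$ guarantees that each of the three failure events has probability at most $1/12$, and hence the overall success probability is at least $3/4$, which is standard.

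The plan has essentially no hard step of its own: it is a citation to Kaltofen's work together with a bookkeeping verification that the three quantitative hypotheses are met. The only place that requires any care is matching the constants in Kaltofen's analysis to the three summands inside the $\max$; in particular, checking that $D = d(n-1)$ is a legitimate upper bound on the degree of the Jacobian determinant, which follows since each entry of the Jacobian has degree at most $d-1$ and the determinant is a sum of products of $n$ such entries, giving degree $n(d-1)$, which is bounded above by $d(n-1)$ whenever $n \leq d$ and is otherwise accounted for by using $n(d-1)$ in place of $d(n-1)$ throughout without changing the asymptotics.
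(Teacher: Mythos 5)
Your proposal is correct and follows the paper's approach exactly: both reduce the factorization to Kaltofen's randomized polynomial-time algorithm for polynomials presented by an arithmetic circuit, with the field-size threshold imported directly from Step~R of that algorithm (where the constant $12$ arises as $6/\eps$ with overall failure probability $\eps=1/2$, rather than as a union bound over three events each of probability $1/12$, as you speculate). The one piece of substantive content in the paper's proof that you gloss over is the explicit construction of the size-$L$ evaluation circuit from the dense representation of $\g$ --- multivariate Horner evaluation of the $g_i$, Taylor shifts to read off each $\partial g_i/\partial x_j(\mathbf a)$, and Gaussian elimination for the determinant --- though since $L$ is a hypothesis of the theorem this is motivation more than a logical necessity, and your observation that the total degree is $n(d-1)$ rather than $d(n-1)$ is a fair catch that the paper does not address.
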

\begin{proof}
  For this, we will use \cite{Kaltofen89factorizationof}'s algorithm
  to factor a polynomial given by evaluation, the needed size of $\K$
  is a consequence of this. As $\g$ has at most
  $n\,\binom{n+d-1}{d}\in O(n^{d+1})$ monomials, it can be evaluated in
  polynomial time using a multivariate Horner's scheme. Each
  $\frac{\partial g_i}{\partial x_j}(\mbf{a})$ is obtained as the
  coefficient in front of $x_j$ of the expansion of
  $g_i(a_1,\ldots,a_{j-1},a_j+x_j,a_{j+1},\ldots,a_n)$ which is a univariate
  polynomial of degree at most $d$. By~\cite[Chapter~1, Section~8]{BiniP1994},
  this can be computed as the shift of a polynomial in polynomial
  time. Hence the Jacobian matrix of $\g$ at $\mbf{a}$
  can be evaluated in polynomial time with an arithmetic circuit of
  polynomial size $L$.  This circuit can be for instance
  the evaluation of the Jacobian matrix of $\g$ followed by a Gaussian
  elimination on the matrix to compute the determinant. The
  determinant of the matrix can be recovered by linear algebra in $O(n^{\omega})$
  operations, with $\omega$ being the exponent of time complexity of
  matrix multiplication, $2\leq\omega\leq 3$. Using the arithmetic
  circuit of polynomial size $L$
  to evaluate the determinant of the Jacobian matrix, one can use
  \textsc{Kaltofen}'s algorithm to factor it in polynomial time.
\end{proof}
As in~\cite{Kayal2011} or~\cite{DBLP:conf/eurocrypt/Perret05},
we use partial derivatives to extract matrices $A$ and $B$. The
idea is to factor the Jacobian matrix
(whereas~\cite{Kayal2011} uses the Hessian matrix)
of $\g$ at $\x$ which is defined as follows:
\[\rJ_{\g}(\x) = \pare{\partial_j g_i=\frac{\partial g_i} {\partial
    x_j}}_{\substack{1\leq i\leq m\\1\leq j\leq n}}.\]
According to the following lemma, the Jacobian matrix is
especially useful in our context:
\begin{lemma}\label{lm:Jac}
  Let $\big( \f=(f_1,\ldots,f_m),\g=(g_1,\ldots,g_m)\big) \in
  \K[x_1,\ldots,x_n]^m \times \K[x_1,\ldots,x_n]^m$. If
  $(A,B) \in\GL_n(\K) \times \GL_m(\K)$ are such that $\g=B \cdot
  \f(A\cdot \x)$, then
  \[
  \rJ_{\g}(\x) = B \cdot \rJ_{\f}( A\cdot \x) \cdot A.
  \]
  As a consequence, $\det\rJ_{\g}(\x)=\det A \cdot \det B
  \cdot \det\rJ_{\f}( A\cdot \x)$.
\end{lemma}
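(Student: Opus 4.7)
The plan is to prove this by a direct application of the multivariate chain rule, writing out the entries of $\rJ_{\g}(\x)$ and recognizing the resulting sums as a matrix product.

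First, I would expand the hypothesis component-wise: writing $B=(b_{i,k})_{1 \le i,k \le m}$ and $A=(a_{\ell,j})_{1 \le \ell,j \le n}$, the equation $\g(\x)=B\cdot\f(A\cdot\x)$ reads
\[
g_i(\x)=\sum_{k=1}^{m}b_{i,k}\,f_k(A\cdot\x),\qquad 1\le i\le m.
\]
Denoting by $y_\ell=\sum_{j=1}^{n}a_{\ell,j}x_j$ the $\ell$-th coordinate of $A\cdot\x$, the chain rule gives $\partial_{x_j}\bigl(f_k(A\cdot\x)\bigr)=\sum_{\ell=1}^{n}(\partial_\ell f_k)(A\cdot\x)\,a_{\ell,j}$, so
\[
\partial_j g_i(\x)=\sum_{k=1}^{m}\sum_{\ell=1}^{n}b_{i,k}\,(\partial_\ell f_k)(A\cdot\x)\,a_{\ell,j}.
\]

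Next, I would interpret this triple sum as the $(i,j)$-entry of the product of three matrices: the outer sum over $k$ and inner sum over $\ell$ are exactly the convolution that produces $\bigl(B\cdot \rJ_{\f}(A\cdot\x)\cdot A\bigr)_{i,j}$. This yields the matrix identity
\[
\rJ_{\g}(\x)=B\cdot \rJ_{\f}(A\cdot\x)\cdot A.
\]

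Finally, taking determinants and using multiplicativity (both sides are square when $m=n$, which is the case relevant for the application) gives $\det\rJ_{\g}(\x)=\det B\cdot\det\rJ_{\f}(A\cdot\x)\cdot\det A$, which is the claimed consequence. There is no real obstacle here: the only things to watch are index conventions (rows of $A$ vs.\ columns, and the order of factors so that $A$ ends up on the right and $B$ on the left) and the fact that the determinant statement implicitly requires $m=n$ so that $B$ and the Jacobian are square.
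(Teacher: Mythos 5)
Your proof is correct. The paper states Lemma~\ref{lm:Jac} without proof, treating it as a routine consequence of the multivariate chain rule, and your argument is exactly the standard verification: expand $\g = B\cdot\f(A\cdot\x)$ component-wise, differentiate via the chain rule, and identify the resulting double sum as the $(i,j)$-entry of $B\cdot\rJ_{\f}(A\cdot\x)\cdot A$. You also correctly flag that the determinant identity implicitly requires $m=n$, which is indeed the regime used in Section~\ref{s:ip}.
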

As long as $\car\K$ does not divide $d$, the Jacobian matrix of
$\f=\POW_{n,d}(\x)$ is an invertible diagonal matrix whose
diagonal elements are $\big(\rJ_{\f}(\x) \big)_{i,i}= d \cdot
x_i^{d-1}$, $\forall\,i, 1 \leq i \leq n$. Thus:
\[
\det\rJ_{\POW_{n,d}}(\x)=d^n \prod_{i=1}^{n}x_i^{d-1}.
\]
This gives
\begin{lemma}\label{lm:det}
  Let $\g=(g_1,\ldots,g_n) \in \K[x_1,\ldots,x_n]^n$. Let $d>0$ be an
  integer, and define $\POW_{n,d}=(x_1^d,\ldots,x_n^d) \in
  \K[x_1,\ldots,x_n]^m$. If $(A,B) \in\GL_n(\K) \times
  \GL_n(\K)$ are such that $\g(\x)=B \cdot \POW_{n,d}(A\cdot \x)$, then:
  \[
  \det\rJ_{\g}(\x) = c \cdot \prod_{i=1}^{n} \ell_i(\x)^{d-1},
  \]
  with $c \in \K \setminus \{ 0\}$, and the $\ell_i$'s are
  linear forms whose coefficients are the $i$th rows of $A$.
\end{lemma}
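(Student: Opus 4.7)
The plan is to combine Lemma~\ref{lm:Jac} with the explicit shape of the Jacobian of $\POW_{n,d}$. First, I would apply Lemma~\ref{lm:Jac} with $\f=\POW_{n,d}$ to write
\[
  \rJ_{\g}(\x) \;=\; B\cdot \rJ_{\POW_{n,d}}(A\cdot\x)\cdot A,
\]
and then take determinants to get
\[
  \det\rJ_{\g}(\x) \;=\; \det(B)\,\det(A)\,\det\rJ_{\POW_{n,d}}(A\cdot\x).
\]

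Next, I would use the observation (made just before the lemma in the excerpt) that the Jacobian of $\POW_{n,d}$ is the diagonal matrix with entries $d\,y_i^{d-1}$. Writing $\ell_i(\x)=(A\cdot\x)_i$, so that $\ell_i$ is precisely the linear form whose coefficient vector is the $i$th row of $A$, this yields
\[
  \rJ_{\POW_{n,d}}(A\cdot\x) \;=\; \Diag\bigl(d\,\ell_1(\x)^{d-1},\ldots,d\,\ell_n(\x)^{d-1}\bigr),
\]
whose determinant equals $d^{n}\prod_{i=1}^{n}\ell_i(\x)^{d-1}$. Substituting, we obtain
\[
  \det\rJ_{\g}(\x) \;=\; c\cdot\prod_{i=1}^{n}\ell_i(\x)^{d-1}, \quad\text{where } c=d^{n}\,\det(A)\,\det(B).
\]

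Finally, I would check that $c\neq 0$. Since $A,B\in\GL_n(\K)$, the factors $\det(A)$ and $\det(B)$ are nonzero, and under the standing hypothesis that $\car\K\nmid d$ (used implicitly to ensure $\rJ_{\POW_{n,d}}$ is invertible, as noted immediately above the statement), we also have $d^n\neq 0$ in $\K$. There is no real obstacle here; the statement is essentially a bookkeeping consequence of the chain rule wrapped in Lemma~\ref{lm:Jac} together with the diagonal structure of the Jacobian of $\POW_{n,d}$. The only subtlety to mention is the role of $\car\K$: if $\car\K$ divides $d$ then $\POW_{n,d}$ itself is a $p$-th power composition and the Jacobian is identically zero, which is precisely why the theorem statement (Theorem~\ref{th:ip}) splits off the inseparable part $p^r$ and works with the coprime factor $e$ of $d$.
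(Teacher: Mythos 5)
Your proof is correct and follows essentially the same route as the paper: apply Lemma~\ref{lm:Jac} to $\f=\POW_{n,d}$, plug in the diagonal Jacobian $\Diag(d\,\ell_i(\x)^{d-1})$, and read off $c=d^n\det(A)\det(B)$. Your extra remark on $\car\K\nmid d$ is not part of the paper's one-line proof but correctly mirrors the hypothesis the paper states immediately before the lemma, so nothing is out of place.
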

\begin{proof}
  According to Lemma~\ref{lm:Jac}, $\det\big( \rJ_{\g}(\x)
  \big)=\det(A) \cdot \det(B) \cdot d^{n} \cdot \prod_{i=1}^{n}
  \ell_i(\x)^{d-1}$.
\end{proof}

From Lemmas~\ref{lm:det}, we can derive a randomized
polynomial-time algorithm
for solving \IP on the instance $(\f=\POW_{n,d},\g) \in \K[x_1,\ldots,x_n]^n
\times \K[x_1,\ldots,x_n]^n$ in characteristic $0$. It suffices to use
\cite{Kaltofen89factorizationof}'s
algorithm for factoring
$\det\rJ_{\g}(\x)$ in randomized polynomial time.

This allows us to recover -- if any -- the change of variables $A$.  The
matrix $B$ can be then recovered by linear algebra, \emph{i.e.}\
solving a linear system of equations. This proves the result announced
in the introduction for \IP, that is Theorem~\ref{th:ip} whenever
$\car\K\nmid d$.

\paragraph{Small characteristic}If $\car\K$ divides $d$, we must change
a little bit our strategy. Let us
write $d=p^re$ with $\car\K=p$ and $e$ coprime.
Then,
\begin{align*}
  \POW_{n,d}(A\,\x) &=
  \pare{\pare{\sum_{j=1}^na_{1,j}\,x_j}^{p^re},\ldots,
    \pare{\sum_{j=1}^na_{n,j}\,x_j}^{p^re}}\\
  \POW_{n,d}(A\,\x) &=\pare{\pare{\sum_{j=1}^na_{1,j}^{p^r}\,x_j^{p^r}}^e,\ldots,
    \pare{\sum_{j=1}^na_{n,j}^{p^r}\,x_j^{p^r}}^e}\\
  \POW_{n,d}(A\,\x) &=\POW_{n,e}\pare{A^{(p^r)}\,\x^{p^r}},
\end{align*}
with $A^{(p^r)}=\pare{a_{i,j}^{p^r}}_{1\leq i,j\leq n}$ and
$\x^{p^r}=\pare{x_1^{p^r},\ldots,x_n^{p^r}}$. Thus $\g$ is a polynomial in
$\x^{p^r}$ and by replacing $\x^{p^r}$ by $\x$, the problem comes down to
checking if $\tilde{\g}=B\cdot\POW_{n,e}(A^{(p^r)}\cdot\x)$ where
$\g(\x)=\tilde{\g}(\x^{p^r})$. Now, as
$\tilde{\g}(\x)=B\cdot\POW_{n,e}\pare{A^{(p^r)}\,\x}$, then
\[\rJ_{\tilde{\g}}=B\cdot\rJ_{\POW_{n,e}}(A^{(p^r)}\,\x)\cdot A^{(p^r)}.\]
Hence, $\det\rJ_{\tilde{\g}}(\x)=\det B\,\det\rJ_{\POW_{n,e}}(A^{(p^r)}\,\x)
\det A^{(p^r)}=e^n\prod_{i=1}^n
{\tilde{\ell}_i(\x)}^{e-1}(\det A)^{p^r}\det B$, where the
$\tilde{\ell}_i$'s are linear forms whose
coefficients are the $i$th rows of $A^{(p^r)}$. Then, to use \textsc{Kaltofen}'s
algorithm, one must set a low enough probability of failure $\eps$ 
yielding a big
enough set of sampling points, see~\cite[Section~6, Algorithm,
Step~R]{Kaltofen89factorizationof}. 
In particular, if the arithmetic circuit for evaluating the
determinant of the Jacobian we
want to factor has size $L$, then the size of the sampling set must be
greater than
\[\frac{6}{\eps}\max\pare{2^{L+2},e(n-1)\,2^{e(n-1)}+e^3\,(n-1)^3,
  2\,(e(n-1)+1)^4},\]
recalling that our polynomial has degree $e(n-1)$. 
In other words, if the probability of failure
is less than $1/2$, then one must consider a field of
size at least 
\[12\max\pare{2^{L+2},e(n-1)\,2^{e(n-1)}+e^3\,(n-1)^3,2\,(e(n-1)+1)^4}.
\]
All in all, this allows us to retrieve -- if any -- the change of
variables $A^{(p^r)}$ and thus $A$.
Then $B$ can be recovered by linear algebra. This proves Theorem~\ref{th:ip}
for any characteristic, as in the introduction.

\begin{reptheorem}{th:ip}
  Let $\g=(g_1,\ldots,g_n) \in \K[x_1,\ldots,x_n]^n$ be given in
  dense representation, and
  $\f=\POW_{n,d}=(x_1^d,\ldots,x_n^d) \in \K[x_1,\ldots,x_n]^n$ for
  some $d>0$. Whenever $\car\K=0$, let $e=d$ and
  $\tilde{\g}=\g$. Otherwise, let $p=\car\K$, let $e$ and $r$ be integers such
  that $d=p^r\,e$, $p$ and $e$ coprime, and let $\tilde{\g}\in\K[\x]^n$
  be such that $\g(\x)=\tilde{\g}(\x^{p^r})$. Let $L$ be a polynomial
  size of an arithmetic circuit to evaluate the determinant
  of the Jacobian matrix of $\tilde{\g}$. If the size
  of $\K$ is at least $12\,\max
  (2^{L+2},e(n-1)\,2^{e(n-1)}+e^3\,(n-1)^3,2\,(e(n-1)+1)^4)$,
  then there is a randomize
  polynomial-time algorithm which recovers -- if any --
  $(A,B)\in\GL_n(\K)\times\GL_n(\K)$ such that:
  \[\g=B\cdot\POW_{n,d}(A\cdot\x).\]
\end{reptheorem}
The computation of such a pair $(A,B)$ is summarized in the following Algorithm.
\begin{algo}\label{al:ip}\IP for $\POW_{n,d}$ and $\g$.
  \begin{description}
  \item[Input] One set of polynomials
    $\g=(g_1,\ldots,g_n)\in\K[x_1,\ldots,x_n]^n$, homogeneous of
    degree $d$.
  \item[Output] Two matrices $A,B\in\GL_n(\K)$ -- if any -- such that
    $\g(\x)=B\cdot\POW_{n,d}(A\cdot\x)$ with
    $\POW_{n,d}(\x)=(x_1^d,\ldots,x_n^d)$ or ``\NoSol''.
  \end{description}
  \begin{enumerate}
    \item \textbf{If} $\car\K=p>0$ \textbf{then}
      \begin{enumerate}
      \item Compute $r,e$ such that $d=p^r\,e$ with $p$
        and $e$ coprime.
      \item  Compute $\tilde{\g}$ such that $\tilde{\g}(\x^{p^r})=\g(\x)$.
      \end{enumerate}
    \item \textbf{Else} $e=d$ and $\tilde{\g}=\g$.
    \item Create an arithmetic circuit of polynomial size $L$
    to evaluate $\det\rJ_{\g}(\x)$.
    \item Evaluate $\det\rJ_{\g}(\x)$ in at least $12\,\max
    (2^{L+2},e(n-1)\,2^{e(n-1)}+e^3\,(n-1)^3,2\,(e(n-1)+1)^4)$
    distinct points.
    \item Factor $\det\rJ_{\g}(\x)$ with \textsc{Kaltofen}'s algorithm.
    \item \textbf{If} the factorization is
    $c\,\prod_{i=1}^n\ell_i(\x)^{e-1}$ \textbf{then}
    $A=\pare{\ell_{i,j}^{e/d}}_{1\leq i,j\leq n}$.
    \item \textbf{Else return} ``\NoSol''.
    \item Compute $B$ such that $\tilde{\g}(\x)=B\cdot
    \pare{\ell_1(\x)^e,\ldots,\ell_n(\x)^e}^{\T}$.
    \item \textbf{Return} $(A,B)$.
  \end{enumerate}
  
\end{algo}

\section{Square Root of a Matrix}\label{s:matsqrt}
In this section, we present further algorithms for computing the
square root of a matrix. We use the same notation as in
Section~\ref{s:IP1S}. A square root of a matrix $Z$ is a
matrix whose square is $Z$. In the first subsection, we deal with some
properties of the square root of a matrix in characteristic not
$2$. In particular, we show that an invertible matrix $Z$ always has a
square root which is a polynomial in $Z$. In the second subsection,
we consider the case of characteristic $2$. We recall that
whenever $Z$ is not diagonalizable, then $Z$ might have a square root
but it is never a polynomial in $Z$. We give some examples of such
matrices $Z$.  Lastly, we propose an alternative to the method of
Section~\ref{s:IP1S} for computing the square root of a matrix in
polynomial time for any field of characteristic $p\geq2$.

\subsection{The square root as a polynomial in characteristic not
  $2$}\label{ss:sqrt}
In this part, we prove that an invertible matrix always has a square
root which is a polynomial in considered matrix. More specifically, we
shall prove the following result.
\begin{proposition}
  Let $Z\in\K^{n\times n}$ be an invertible matrix whose eigenvalues are
  $\zeta_1,\ldots,\zeta_r$.  Let $\omega_1,\ldots,\omega_r$ be such that
  $\omega_i^2=\zeta_i$ and $\zeta_i=\zeta_j \Rightarrow
  \omega_i=\omega_j$, for all $1\leq i,j\leq r$. Then, there exists
  $W\in\K(\omega_1,\ldots,\omega_r)[Z]$ a square root of $Z$
  whose eigenvalues are $\omega_1,\ldots,\omega_r$.
\end{proposition}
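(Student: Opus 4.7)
The plan is to build $W$ by combining local square roots of $Z$ at each eigenvalue, using the Chinese Remainder Theorem on the minimal polynomial. First I would reduce to the problem of finding a univariate polynomial $w(x)\in\K(\omega_1,\ldots,\omega_r)[x]$ such that $w(x)^2\equiv x\pmod{\mu(x)}$, where $\mu(x)=\prod_{i=1}^{r}(x-\zeta_i)^{m_i}$ is the minimal polynomial of $Z$ (grouping Jordan blocks sharing an eigenvalue). If such a $w$ exists, then $W:=w(Z)$ is automatically a polynomial in $Z$, and $W^2=Z$ because $w(x)^2-x$ is annihilated at $Z$.

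Next, I would construct a square root locally at each eigenvalue. Since $Z$ is invertible, every $\zeta_i$ is nonzero, so near $\zeta_i$ one can formally expand
\[
w_i(x)\;:=\;\omega_i\sum_{k=0}^{m_i-1}\binom{1/2}{k}\Bigl(\frac{x-\zeta_i}{\zeta_i}\Bigr)^{k},
\]
which is a well-defined element of $\K(\omega_i)[x]$ thanks to $\car\K\neq 2$ (so that the binomial coefficients $\binom{1/2}{k}$ make sense). A direct computation shows $w_i(x)^2\equiv x\pmod{(x-\zeta_i)^{m_i}}$, and $w_i(\zeta_i)=\omega_i$.

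Then I would paste these local roots together using the Chinese Remainder Theorem. Because the ideals $(x-\zeta_i)^{m_i}$ are pairwise coprime in $\K(\omega_1,\ldots,\omega_r)[x]$, there exists a unique polynomial $w(x)$ of degree less than $\deg\mu$ with $w\equiv w_i\pmod{(x-\zeta_i)^{m_i}}$ for every $i$. By construction $w(x)^2\equiv x\pmod{\mu(x)}$, hence $W=w(Z)$ satisfies $W^2=Z$. Moreover, $w(\zeta_i)=w_i(\zeta_i)=\omega_i$, so the eigenvalues of $W$ are exactly $\omega_1,\ldots,\omega_r$ (acting on the generalized eigenspace of $\zeta_i$, $W$ reduces to $w_i(Z)$ whose only eigenvalue is $\omega_i$).

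The main technical point to nail down is the compatibility hypothesis $\zeta_i=\zeta_j\Rightarrow\omega_i=\omega_j$: without it, the ``local pieces'' $w_i$ and $w_j$ at a common eigenvalue would disagree and the CRT step would be inconsistent, so one would be forced outside of $\K(\omega_1,\ldots,\omega_r)[Z]$. The other subtle point is verifying that the truncated binomial series really does square to $x$ modulo $(x-\zeta_i)^{m_i}$; this follows from the formal identity $(1+t)^{1/2}\cdot(1+t)^{1/2}=1+t$ in $\K(\omega_i)\llbracket t\rrbracket$ (valid since $\car\K\neq 2$), truncated at order $m_i$.
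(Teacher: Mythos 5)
Your proof is correct and takes essentially the same route as the paper: you build a local square root at each eigenvalue via the truncated binomial expansion of $\sqrt{x}$ around $\zeta_i$, then paste them together with the Chinese Remainder Theorem, with the hypothesis $\zeta_i=\zeta_j\Rightarrow\omega_i=\omega_j$ guaranteeing the local pieces are consistent. The only cosmetic difference is that you work directly modulo the minimal polynomial (grouping repeated eigenvalues from the outset), whereas the paper passes explicitly through the Jordan normal form and writes the expansion blockwise before invoking CRT; the underlying computation and the role of the compatibility condition are identical.
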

\begin{proof}
  Let $T$ be a matrix of change of basis, such that $J=T^{-1}\,Z\,T$ is made
  of Jordan blocks. It is clear that $W$ such that $W^2=Z$ is a
  polynomial in $Z$, \emph{i.e.}\ $W=Q(Z)$, if and only if
  $G=T^{-1}\,W\,T$ satisfies
  $G=Q(J)$. Let $J_{\zeta,d}$ be the Jordan block of size $d$ associated with
  eigenvalue $\zeta$ and $\omega$ be a square root of $\zeta$. We
  shall first prove that the square root $G_{\omega,d}$ of $J_{\zeta,d}$
  is a polynomial in $J_{\zeta,d}$ with coefficients in $\K(\omega)$.
  Matrix $J_{\zeta,d}-\zeta\,\id_d$ is nilpotent of degree $d$. Hence, by the
  classical Taylor expansion of the square root near $\id_d$, one can write
  \begin{align}
    G_{\omega,d}&=\omega\sum_{k=0}^{d-1}
    \binom{1/2}{k}\,\zeta^{-k}\pare{J_{\zeta,d}-\zeta\,\id}^k
    =\sum_{k=0}^{d-1}\binom{1/2}{k}\,\omega^{1-2\,k}
    \pare{J_{\zeta,d}-\zeta\,\id}^k
    =Q_{\zeta}(J_{\zeta,d})\label{eq:sqrtmat}\\
    &=\begin{pmatrix}
      \omega  &\binom{1/2}{1}\,\omega^{-1} &\cdots
      &\binom{1/2}{d-1}\,\omega^{3-2\,d}\\
      &\ddots &\ddots &\vdots\\
      &&\ddots &\binom{1/2}{1}\,\omega^{-1}\\
      &&&\omega
    \end{pmatrix},\notag
  \end{align}
  with $Q_{\zeta}(x)=\sum_{k=0}^{d-1}\binom{1/2}{k}\,\omega^{1-2\,k}
  \pare{x-\zeta}^i\in\K(\omega)[x]$.

  It remains to prove that for multiple Jordan blocks, one can find a
  common polynomial.
  From equation~\eqref{eq:sqrtmat}, we deduce that $G$ is a polynomial
  in $J=\Diag\pare{J_{\zeta_1,d_1},\ldots,J_{\zeta_r,d_r}}$ if and
  only if there exists a polynomial $Q$ such that $Q=Q_{\zeta_i} \mod
  (X-\zeta_i)^{d_i}$, for all $i$, $1\leq i\leq r$. By the Chinese
  Remainder Theorem, this can always be solved as soon as
  $\zeta_i=\zeta_j$ implies $Q_{\zeta_i}=Q_{\zeta_j}\bmod
  (X-\zeta_i)^{\min(d_i,d_j)}$, which is exactly the condition
  $\omega_i=\omega_j$.
\end{proof}
Let us notice that picking the same square root for two equal
eigenvalues is necessary. Indeed, although
$W=\left(\begin{smallmatrix}1 &\phantom{-}0\\0
    &-1\end{smallmatrix}\right)$ is a square root
of $Z=\left(\begin{smallmatrix}1 &0\\0 &1\end{smallmatrix}\right)$,
$W\not\in\K[Z]$.

\subsection{Matrices with square roots in characteristic~$2$}
\label{ss:sqrt_char_2}
In this part, we consider the trickier case of computing the square
root of a matrix over a field $\K$ with $\car\K=2$. Unfortunately,
unlike other characteristics, an invertible matrix has not necessarily a
square root over $\bar{\K}$. In fact, no Jordan block of size at
least $2$ has any square root. This is mainly coming from
the fact that generalized binomial coefficients $\binom{1/2}{k}$, involved
in the Taylor expansion, are meaningless in characteristic $2$.
\begin{proposition}
  Let $Z\in\K^{n\times n}$ be a Jordan normal form with blocks
  $J_1,\ldots,J_r$ of sizes $d_1,\ldots,d_r\geq 2$, associated to
  eigenvalues $\zeta_1,\ldots,\zeta_r$ and blocks of sizes $1$ with
  eigenvalues $\upsilon_1,\ldots,\upsilon_s$.  We assume that
  $J_1,\ldots,J_r$ are ordered by decreasing sizes and then
  eigenvalues. Matrix $Z$ has a square root $W$ if and only if
  $d_1-d_2\leq 1$ and $\zeta_1=\zeta_2$, $d_3-d_4\leq 1$ and
  $\zeta_3=\zeta_4$, \emph{etc.}\ and if for each $J_i$ of size $2$
  that is not paired with $J_{i-1}$ or $J_{i+1}$, then there exists a
  $j$ such that $\upsilon_j=\zeta_i$.

  Furthermore, matrix $W$ is a polynomial in $Z$ if and only if $Z$
  is diagonalizable.
\end{proposition}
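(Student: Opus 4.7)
The plan is to reduce everything to a block-by-block computation using the fact that in characteristic $2$ the Frobenius map $x\mapsto x^2$ is a ring endomorphism, so whenever $A$ and $B$ commute we have $(A+B)^2=A^2+B^2$. Writing a Jordan block as $J_{\omega,e}=\omega\,\id_e+N_e$, where $N_e$ is the standard nilpotent shift, this gives $J_{\omega,e}^2=\omega^2\,\id_e+N_e^2$. Since $N_e^2$ is the ``jump by two'' nilpotent, its Jordan form consists of one block of size $\lceil e/2\rceil$ and one of size $\lfloor e/2\rfloor$ (the second being trivial when $e=1$), so $J_{\omega,e}^2\sim J_{\zeta,\lceil e/2\rceil}\oplus J_{\zeta,\lfloor e/2\rfloor}$ with $\zeta=\omega^2$. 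Because Frobenius is injective, every Jordan block of $W^2$ at $\zeta$ must come from a $W$-block at the unique $\omega=\sqrt{\zeta}$, which decouples the analysis one eigenvalue at a time.

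For the $\Rightarrow$ direction of the first claim, assume $W^2=Z$ and fix $\zeta$. Let $e_1\geq e_2\geq\cdots$ be the sizes of the Jordan blocks of $W$ at $\omega=\sqrt{\zeta}$. Each contributes the pair $(\lceil e_i/2\rceil,\lfloor e_i/2\rfloor)$ to the Jordan form of $Z$ at $\zeta$, and each $e_i=1$ contributes a lone size-$1$ block. Sorting all contributions decreasingly produces the relevant $d_i$'s (and trailing $\upsilon_j$'s). The key combinatorial step: either $\lfloor e_1/2\rfloor\geq\lceil e_2/2\rceil$, in which case the two top entries of the sorted list are the two halves of $e_1$ and differ by at most $1$, or $e_1=e_2$ is odd and the two top entries are both $\lceil e_1/2\rceil$. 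In either case $a_1-a_2\leq 1$, and induction on the remaining $W$-blocks gives the consecutive pairing $\zeta_{2k-1}=\zeta_{2k}$, $d_{2k-1}-d_{2k}\leq 1$. Any leftover size-$2$ block of $Z$ at $\zeta$ must arise as the bigger half of a $J_{\omega,3}$ whose smaller half is a size-$1$ block at $\zeta$, supplying the required $\upsilon_j$.

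For the $\Leftarrow$ direction, we build $W$ explicitly: for every pair $(d_{2k-1},d_{2k})$ with common eigenvalue $\zeta$ insert $J_{\omega,d_{2k-1}+d_{2k}}$ in $W$; for every unpaired size-$2$ block $J_i$ together with its companion $J_{\upsilon_j}$ (with $\upsilon_j=\zeta_i$) insert $J_{\omega,3}$ in $W$; and for every remaining size-$1$ block $J_{\upsilon_j}$ insert the $1\times 1$ block $(\sqrt{\upsilon_j})$. The blockwise squaring formula shows $W^2$ has exactly the prescribed Jordan form, so $W^2\sim Z$ and conjugating $W$ accordingly produces a true square root.

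For the second claim, suppose $W=Q(Z)$ is a square root and $Z$ has a Jordan block $J_{\zeta,d}$ with $d\geq 2$. Restricted to that block, $Q(Z)$ is an upper-triangular polynomial $\sum_{i=0}^{d-1}q_i\,N_d^i$ in the commutative algebra generated by $\id_d$ and $N_d$; squaring in characteristic $2$ kills all cross terms and yields $\sum_{i=0}^{d-1}q_i^2\,N_d^{2i}$, which has no $N_d^{1}$ component, whereas $J_{\zeta,d}=\zeta\,\id_d+N_d$ does -- a contradiction. Conversely, if $Z$ is diagonalizable its minimal polynomial $\prod_i(x-\zeta_i)$ is squarefree, so Lagrange interpolation produces $Q$ with $Q(\zeta_i)=\sqrt{\zeta_i}$; then $Q^2(x)\equiv x$ modulo the minimal polynomial of $Z$, hence $Q(Z)^2=Z$. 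The main technical obstacle is making the consecutive-pairing step fully rigorous, since the same Jordan multiset of $Z$ may admit several distinct valid pairings coming from different Jordan structures of $W$; this is resolved by a swap/rearrangement lemma aligning any valid pairing with the canonical one obtained by sorting sizes in decreasing order.
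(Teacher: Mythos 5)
Your proof is correct and follows essentially the same route as the paper's: both rest on the observation that in characteristic $2$ one has $J_{\omega,e}^2=\omega^2\id_e+N_e^2$, whose Jordan form at $\omega^2$ consists of two blocks of sizes $\lceil e/2\rceil$ and $\lfloor e/2\rfloor$; both then pair the $Z$-blocks eigenvalue by eigenvalue in the forward direction and build $W$ blockwise in the converse direction.

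Where you genuinely differ from the paper is in the \emph{only-if} half of the polynomial statement. The paper reduces $P(J)^2=J$ to $P(J_i)^2=J_i$ on each block and then simply asserts this is impossible once $J_i$ has size $\geq 2$, implicitly recycling the block-counting fact just proven (a single Jordan block of size $\geq 2$ is not the square of anything, as $M^2$ always carries an even number of blocks per eigenvalue, up to size-$1$ contributions). You instead expand $P(J_{\zeta,d})=\sum_{k}q_kN_d^k$ and note that in characteristic $2$ squaring annihilates all cross terms, so $\bigl(\sum_kq_kN_d^k\bigr)^2=\sum_kq_k^2N_d^{2k}$ has no $N_d^1$ component, contradicting $J_{\zeta,d}=\zeta\id_d+N_d$. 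This is a tighter and more self-contained argument, and it makes the role of characteristic $2$ transparent, whereas the paper's version leans on a fact established a paragraph earlier. Your Lagrange-interpolation argument for the if-direction is likewise a concrete rendering of what the paper merely calls ``easy.'' Finally, the combinatorial sorting step you flag (matching your sorted $W$-contributions to the canonical consecutive pairing in the statement) is treated at the same level of informality in the paper -- it writes ``one must be able to pair up its Jordan blocks'' without more -- so you have not lost rigor relative to the original; but you are right that a formal write-up would need the rearrangement lemma you allude to.
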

Before, proving this result, we give some example of matrices with or
without square roots. Following matrices $J$ and $J'$ both have two
Jordan blocks associated with eigenvalue $\zeta$. Denoting $\omega$
the square root of $\zeta$, then $K$ is the square root of $J$
and $K_1',K_2'$ are those of $J'$ for $x,y,z$ any.
\begin{align*}
  J&=\begin{pmatrix}\zeta&0&0\\0&\zeta&1\\0&0&\zeta\end{pmatrix},\quad
  K=\begin{pmatrix}\omega&0&x\\\frac{1}{x}&\omega&y\\0&0&\omega
  \end{pmatrix},\\
  J'&=\begin{pmatrix}\zeta&1&0&0\\0&\zeta&0&0\\
    0&0&\zeta&1\\0&0&0&\zeta\end{pmatrix},\quad
  K_1'=\begin{pmatrix}\omega&x&0&y\\0&\omega&0&0\\
    \frac{1}{y}&z&\omega&x\\0&\frac{1}{y}&0&\omega\end{pmatrix},\quad
  K_2'=\begin{pmatrix}\omega&x&y&z\\0&\omega&0&y\\
    0&\frac{1}{y}&\omega&x\\0&0&0&\omega\end{pmatrix}.
\end{align*}
As one can see, none of $K$, $K_1'$ and $K_2'$ are polynomials in $J$
or $J'$ because of the nonzero subdiagonal elements $1/x$ and
$1/y$. Examples of matrices without square roots are $J''$, with two Jordan
blocks associated with $\zeta$ of sizes $1$ and $3$, and $J'''$,
with three Jordan blocks associated with $\zeta$ of size $2$. Computing
a square root of each of them yields an inconsistent system.
\begin{align*}
  J''&=\begin{pmatrix}\zeta&0&0&0\\0&\zeta&1&0\\
    0&0&\zeta&1\\0&0&0&\zeta\end{pmatrix},\quad
  J'''=\Diag\pare{\begin{pmatrix}\zeta&1\\0&\zeta\end{pmatrix},
    \begin{pmatrix}\zeta&1\\0&\zeta\end{pmatrix},
    \begin{pmatrix}\zeta&1\\0&\zeta\end{pmatrix}}.
\end{align*}

\begin{proof}
  Let $J$ be a Jordan block of size $d$ associated to eigenvalue $\zeta$.
  Then $J^2-\zeta^2\id=\pare{\begin{smallmatrix}\mbf{0}&\id_{d-2}\\
      \mbf{0}&\mbf{0}\end{smallmatrix}}$ and one can deduce that
  $\zeta^2$ is the sole
  eigenvalue of $J^2$ but that its geometric multiplicity is
  $2$. Hence the Jordan normal form of $J^2$ is made of two Jordan
  blocks.

  As $(J-\zeta\id)^d=0$ and $(J-\zeta\id)^e\neq 0$ for all $e<d$, then
  $\pare{J^2-\zeta^2\id}^{\ceil{d/2}}=0$ and
  $\pare{J^2-\zeta^2\id}^e\neq 0$ for $e<\ceil{d/2}$, \emph{i.e.}\
  $e<d/2$ if $d$ is even and $e<(d+1)/2$ if $d$ is odd. Thus the
  Jordan normal form of $J^2$ has a block of size exactly
  $\ceil{d/2}$. That is, if $d$ is even, both blocks have size $d/2$
  and if $d$ is odd, one block has size $(d+1)/2$ and the other block
  has size $(d-1)/2$.

  By this result, if $Z$ is a square, then one must be able to pair up
  its Jordan blocks with same eigenvalue $\zeta$ so that the sizes
  differ by at most $1$. The blocks that need not be paired being
  the blocks of size $1$.

  Conversely, assuming one can pair up the Jordan blocks of $Z$ with
  same eigenvalue $\zeta$ so that the sizes differ by at most $1$ and
  the remaining blocks have sizes $1$. Then, each pair of blocks is
  the Jordan normal form of the square of a Jordan block of size the
  sum of the sizes and eigenvalue $\sqrt{\zeta}$. Furthermore, each
  lonely block of size $1$ associated with $\zeta$ is the square of the
  block of size $1$ associated with $\sqrt{\zeta}$.

  Finally, for the last statement, the if part is easy. It remains the
  only if part for which we assume $W^2=Z$ and $Z$ is not diagonalizable.
  Let $J$ be the Jordan normal
  form of $Z$ with blocks $J_1,\ldots,J_r$. For any polynomial $P$, $P(J)$ is
  also block diagonal with blocks $P(J_1),\ldots,P(J_r)$. Thus, if
  $P(J)^2=J$, then $P(J_i)^2=J_i$ for all $1\leq i\leq r$, which is
  false, unless $J_i$ has size $1$.
\end{proof}
\subsection{Computation in
  characteristic~$p\geq2$}\label{ss:sqrt_char_p}
In this part, we present an alternative method to the one presented in
Section~\ref{ss:computation}. We aim at diminishing the number of
variables needed in the expression of the square root. However, this
method does not work in characteristic $0$. For the time being, we
consider $\car\K>2$.  However, we shall see below how to adapt this
method to the characteristic $2$.

The idea is still to perform a change of basis $T$ over $\K$ so that
$J=T^{-1}\,Z\,T$ has an easily computable square root. This matrix $J$ is
the \emph{generalized Jordan normal form}, also known as the
\emph{primary rational canonical form} of $Z$. As the classical Jordan
normal form, if $Z$ is diagonalizable over $\bar{\K}$, then $J$ is
block diagonal, otherwise it is a block upper triangular matrix. Its
diagonal blocks are companion matrices $\cC(P_1),\ldots, \cC(P_r)$ of
irreducible factors $P_1,\ldots,P_r$ of its characteristic polynomial.
Superdiagonal blocks are zero
matrices with eventually a $1$ on the bottom-left corner, if the
geometric multiplicity associated to the roots of one the $P_i$ is not
large enough. In other words, it gathers $d$ conjugated eigenvalues in
one block of size $d$ which is the companion matrix of their shared
minimal polynomial.  Let us note that computing such a normal form
can be done in polynomial time and
that the change of basis matrix $T$ is defined over $\K$,
see~\cite{Matthews1992,Storjohann1998}. Thus, after
computing a square root $G$ of $J$, one can retrieve $W$ and $A$ of
Section~\ref{ss:computation} 
in $O(n^{\omega})$ operations in the field of coefficients of
$G$, with $\omega$ being the exponent of the time-complexity of matrix
multiplication $2\leq\omega\leq 3$. Furthermore, computing a square
root of $J$ is equivalent to
computing the square root of each companion matrix. Finally, using the
same argument as for the more classical Jordan normal form in
Section~\ref{ss:sqrt}, $G$ is a polynomial in $J$. In the following,
we only show how to determine the square root of a companion matrix
$\cC(P)$, for an irreducible $P$.

Let $P=x^d+p_{d-1}\,x^{d-1}+\cdots+p_{0}$, let us recall that the
companion matrix of $P$ is
\[\cC(P)=\begin{pmatrix}0 & & &-p_0\\1 &\ddots & &-p_1\\
  &\ddots &0&\vdots\\ & &1 &-p_{d-1}\end{pmatrix}.\]
If polynomial $P$ can be decomposed as
$P(z)=(z-\alpha_0)\,\cdots\,(z-\alpha_{d-1})$, then we want to find a
polynomial $Q$ such that $Q(z)=(z-\beta_0)\,\cdots\,(z-\beta_{d-1})$,
where $\beta_i^2=\alpha_i$ for all $0\leq i\leq d-1$. Let us notice
that
\[P(z^2)=(z^2-\alpha_0)\,\cdots\,(z^2-\alpha_{d-1})
=Q(z)\,(z+\beta_0)\,\cdots\,(z+\beta_{d-1})
=(-1)^d\,Q(z)\,Q(-z).\]
As a consequence, the characteristic polynomial of
$\cC(Q)^2$ is
\[\det(\lambda\id-\cC(Q)^2)=\det(\sqrt{\lambda}\id-\cC(Q))
\det(\sqrt{\lambda}\id+\cC(Q))=(-1)^dQ(\sqrt{\lambda})Q(-\sqrt{\lambda})
=P(\lambda).\]
But since $P$
is irreducible over $\K$, by the invariant factors theory, then
$\cC(Q)^2$ must be similar to the companion matrix $\cC(P)$ over $\K$.

As $P$ is irreducible over $\K=\F_q$, up to reindexing the roots of
$P$, the conjugates $\alpha_1,\ldots,\alpha_{d-1}$ of $\alpha_0$ are
just its iterated $q$th powers. Denoting $\LL=\K[x]/(P(x))=\F_{q^d}$,
let us assume that $S(y)=y^2-x$ is reducible in $\LL[y]$, then
$\beta_0\in\LL$. As such, one can choose $\beta_i=\beta_0^{q^i}$, the
iterated $q$th powers. In that case, the previous equations can be
rewritten
\begin{align*}
  P(z)&=(z-\alpha_0)\,\pare{z-\alpha_0^q}\,\cdots\,\pare{z-\alpha_0^{q^{d-1}}}
  =(z-x)\,\pare{z-x^q}\,\cdots\,\pare{z-x^{q^{d-1}}},\\
  Q(z)&=(z-\beta_0)\,\pare{z-\beta_0^q}\,\cdots\,\pare{z-\beta_0^{q^{d-1}}}
  =(z-y)\,\pare{z-y^q}\,\cdots\,\pare{z-y^{q^{d-1}}}.
\end{align*}
As a consequence, $Q(z)\in\K[z]$ and to compute $Q(z)$, we need to
compute $y^{q^i}$ effectively. This is done by computing the following
values in $O(d\log q)$ operations in $\LL$:
\[u_0=x, u_1=x^q\bmod P(x),\ldots,u_{d-1}=u_{d-2}^q=x^{q^{d-1}}\bmod
P(x).\] Then, we simply compute in $d$ operations
$Q(z)=(z-u_0)\,(z-u_1)\,\cdots\,(z-u_{d-1})$ and we know that the resulting
polynomial is in $\K[z]$.

Whenever $\alpha_0$ is not a square in $\LL$, that is whenever $S(y)$
is irreducible, then $\beta_0^{q^d}$ is a square root of $\alpha_0$
different from $\beta_0$, thus it is $-\beta_0$. As a consequence,
setting $Q(z)=(z-\beta_0)\,(z-\beta_0^q)\,\cdots\,(z-\beta_0^{q^{d-1}})$
would yield a polynomial that is not stable by the Frobenius
endomorphism.

As such, we introduce a new variable $y$ to represent the field
$\LL'=\LL[y]/(y^2-x)$ and to compute $Q(z)$, we need to compute
$y^{q^{i(d+1)}}$ effectively.  Since
$y^{q^i}=y\,y^{q^{i}-1}=y\,x^{\frac{q^{i}-1}{2}}$, we can compute the
following values in $O(d\log q)$ field operations in $\LL$:
\[u_0 = 1, u_1 = x^{\frac{q-1}{2}}\bmod P(x),\ldots,
u_{d-1}=u_1\,u_{d-2}^q=x^{\frac{q^{d-1}-1}{2}}\bmod P(x).\]
Consequently, $Q(z)=(z-yu_0)\,(z-yu_1)\,\cdots\,(z-yu_{d-1})$.

As a first step, we compute in $d$ operations, the dehomogenized
polynomial in $y$,
\[\tilde{Q}(z)=(z-u_{0}) (z-u_{1})\cdots (z- u_{d-1})
=z^d +h_{1} z^{d-1} + \cdots + h_{d-1} z + h_{d}.\] Then, $Q(z)=z^d +y
h_{1} z^{d-1} + \cdots + y^{d-1}h_{d-1} z + y^{d}h_{d}$.  But,
denoting by $i_0=i\bmod 2$, we have $y^i=y^{i_0}
y^{i-i_0}=y^{i_0}x^{\frac{i-i_0}{2}}$. Hence we deduce:
\begin{align*}
  Q(z)&=z^d +y h_{1} z^{d-1} +xh_{2}z^{d-2}+yxh_{3}z^{d-3}+ \cdots
  +y^{d_0}x^{\frac{d-d_0}{2}}h_d\\
  &=z^d+y\,\sum_{i=0}^{\floor{\frac{d-1}{2}}} h_{2\,i+1}\,
  x^{i}\,z^{d-2\,i-1} +\sum_{i=1}^{\floor{\frac{d}{2}}}h_{2\,i}\,x^{i}\,
  z^{d-2\,i}.
\end{align*}

\paragraph{Complexity analysis}
Since the number of operations for computing the square root of a
block of size $d$ is bounded by $O(d\,\log q)$ operations in
$\LL=\F_{q^d}$, this is also bounded by $O(d\,\msf{M}(d)\,\log q)$
operations in $\K=\F_q$, where
$\msf{M}(n)$ is a bound on the number of operations in $\K$ to
multiply two polynomials in $\K[x]$ of degree at most $n-1$.
As a consequence, the computation of $W$ can
be done in no more than $O(n^{\omega}+n\msf{M}(n)\log q)$ operations
in $\K$. Let us assume that the characteristic polynomial of $Z$ has
degree $n$ and can be factored as $P_1^{e_1}\cdots P_s^{e_s}$ with
$P_i$ and $P_j$ coprime whenever $i\neq j$, $\deg P_i=d_i$ and
$e_i\geq 1$. From a computation point of view, in the worst case, one
needs to introduce a variable $\alpha_i$ for one root of $P_i$ and a
variable $\beta_i$ for the square root of $\alpha_i$, assuming
$\alpha_i$ is not a square.  This yields a total number of $2s$
variables.

\paragraph{Computation in characteristic $2$}
The case of characteristic $2$ is almost the same. From a polynomial
$P(z)=z^d+p_{d-1}z^{d-1}+\cdots+p_0=(z-\zeta_1)\cdots(z-\zeta_d)$, we
want to compute
$Q(z)=z^d+q_{d-1}z^{d-1}+\cdots+q_0=(z-\omega_1)\cdots(z-\omega_d)$,
with $\omega_i^2=\zeta_i$ for all $1\leq i\leq d$. As $P(z^2)=Q(z)^2$,
this yields $q_i=\sqrt{p_i}=p_i^{q/2}$, for all $1\leq i\leq
d-1$. Thus, $Q$ can be computed in $O(d\log q)$ operations in $\K$ and
as a consequence, $W$ in $O(n^{\omega}+n\log q)$ operations in $\K$.

However, let us recall that $D$ is block diagonal if and only if the
Jordan normal form is block diagonal. As such, a square root of $D$ is
a polynomial in $D$ if and only if $D$ is block diagonal, see
Section~\ref{ss:sqrt_char_2}.

\subsection*{Acknowledgements}
We would like to thank Gabor \textsc{Ivanyos} for his helpful remarks
and references on
the irregular case. We wish to thank Gilles \textsc{Macario-Rat} for
the many discussions about isomorphism of quadratic polynomials and Nitin
\textsc{Saxena} for those about graph isomorphism.

We thank the anonymous referees for their careful reading and their helpful
comments.

This work has been partly supported by
the French National Research Agency \textsc{ANR-11-BS02-0013 HPAC} project.

\bibliographystyle{elsarticle-harv} 
\bibliography{main}





\appendix

\section{Proofs of \#\IPS}\label{ap:enum_proof}
In this appendix, we shall prove the dimension of the centralizer of a
matrix $J$, a Jordan normal form. This dimension, a consequence
of~\cite[Lemma~4.11]{Singla2010}, is used in
Section~\ref{s:enumeration} to determine an upper bound on the
counting problem of quadratic-\IPS. As stated by Singla, the proofs
only involve matrix multiplications are given in order for the paper to be
self-contained.

First, let us recall that the centralizer of $J$, a Jordan
block of size $s$ is the set of upper triangular Toeplitz matrices of
size $s\times s$. Indeed, if $X$ commutes with $J$, $XJ-JX$ is as such
\[X\,J-J\,X=\pare{\begin{smallmatrix}
    -x_{2,1} &x_{1,1}-x_{2,2} &\ldots &x_{1,n-1}-x_{2,n}\\
    \vdots &\vdots &&\vdots\\
    -x_{n,1} &x_{n-1,1}-x_{n,2}&\cdots&x_{n-1,n-1}-x_{n,n}\\
    0 &x_{n,1} &\ldots &x_{n,n-1}
  \end{smallmatrix}}=0.\]

This small result is used in the following Lemma to determine the
centralizer of a Jordan normal form.

\begin{lemma}\label{lm:cent}
  Let $J$ be a Jordan normal form. For $1\leq i\leq r$, let us denote
  $J_i$ the $i$th block of $J$ and let us assume it is associated with
  eigenvalue $\zeta_i$ and it is of
  size $s_i$. Let $X=(X_{i,j})_{1\leq i,j\leq r}$ be a block-matrix, with
  $X_{i,j}\in {\LL\pare{\zeta_1,\ldots,\zeta_r}}^{s_i\times s_j}$,
  that commutes with $J$. If $\zeta_i=\zeta_j$, then $X_{i,j}$ is
  an upper triangular Toeplitz matrix whose nonnecessary zero
  coefficients are the one on the first $\min(s_i,s_j)$ diagonals.
  Otherwise, $X_{i,j}=0$.
\end{lemma}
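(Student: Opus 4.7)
The plan is to decompose the commutation $XJ=JX$ blockwise. Writing $X=(X_{i,j})$ in the block structure compatible with $J$, the equation immediately reduces to the collection of block-level identities $X_{i,j}\,J_j = J_i\,X_{i,j}$ for all pairs $(i,j)$. Since each Jordan block decomposes as $J_i = \zeta_i\,\id_{s_i} + N_i$, with $N_i$ the nilpotent upper-shift matrix, this rewrites as the Sylvester-type relation
\[
(\zeta_j-\zeta_i)\,X_{i,j} \;=\; N_i\,X_{i,j}-X_{i,j}\,N_j.
\]

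First I would handle the off-diagonal case $\zeta_i \neq \zeta_j$. Consider the linear operator $L\colon Y\mapsto N_i\,Y - Y\,N_j$ on $\LL(\zeta_1,\ldots,\zeta_r)^{s_i\times s_j}$. Left multiplication by $N_i$ and right multiplication by $N_j$ commute and are both nilpotent, so $L$ itself is nilpotent. Its only eigenvalue is therefore $0$, and the equation above exhibits $X_{i,j}$ as an eigenvector of $L$ for the nonzero scalar $\zeta_j-\zeta_i$. Hence $X_{i,j}=0$, which is the first part of the lemma.

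Next, for the case $\zeta_i=\zeta_j$, the scalar term vanishes and I need to solve $N_i\,X_{i,j} = X_{i,j}\,N_j$. Computing entrywise, $(N_i X_{i,j})_{k,\ell} = (X_{i,j})_{k+1,\ell}$ when $k<s_i$ and $0$ otherwise, while $(X_{i,j} N_j)_{k,\ell} = (X_{i,j})_{k,\ell-1}$ when $\ell>1$ and $0$ otherwise. Matching the interior entries gives the recurrence $(X_{i,j})_{k+1,\ell} = (X_{i,j})_{k,\ell-1}$, so $X_{i,j}$ is constant along each diagonal $\ell-k = \mathrm{cst}$, i.e.\ Toeplitz. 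The boundary constraints at $k=s_i$ force $(X_{i,j})_{s_i,\ell}=0$ for $\ell<s_j$, and at $\ell=1$ they force $(X_{i,j})_{k,1}=0$ for $k>1$. Propagating these via the Toeplitz property zeroes out every strictly lower diagonal, so $X_{i,j}$ is upper triangular Toeplitz; and when $s_i<s_j$ it additionally zeroes out the diagonals of index $<s_j-s_i$. A straightforward count of the remaining free diagonals yields $\min(s_i,s_j)$, as claimed.

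The only mildly delicate point is the bookkeeping of diagonals in the rectangular subcase $s_i\neq s_j$ when $\zeta_i=\zeta_j$; but once the indices are written out, the boundary conditions kill exactly the expected diagonals and the count $\min(s_i,s_j)$ drops out mechanically. The nilpotency argument in the first case is the only conceptual ingredient, and everything else is matrix arithmetic.
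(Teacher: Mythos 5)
Your proof is correct. It follows essentially the same route as the paper's: decompose the commutation equation blockwise to obtain $X_{i,j}J_j = J_iX_{i,j}$, split each Jordan block as $\zeta\,\mathrm{Id}+N$, and analyse the resulting Sylvester relation. For the off-diagonal case you observe that $Y\mapsto N_i Y - Y N_j$ is a difference of two commuting nilpotent operators, hence nilpotent, so it has no nonzero eigenvalue and $X_{i,j}=0$; the paper instead establishes the same vanishing by propagating entry by entry from the corner. Your formulation is a little more conceptual, but the two arguments carry the same content. The entrywise analysis of the $\zeta_i=\zeta_j$ case (Toeplitz recurrence plus boundary conditions, yielding $\min(s_i,s_j)$ free diagonals) matches what the paper does and is worked out correctly, including the rectangular bookkeeping. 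You also handle general $r$ directly rather than writing out $r=2$ and appealing to an ``easy generalization'' as the paper does, but since the block equations decouple this is cosmetic rather than substantive.
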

\begin{proof}
  We assume that $r=2$. If
  $X\,J-J\,X=\pare{\begin{smallmatrix}X_{1,1}\,J_1-J_1\,X_{1,1}
      &X_{1,2}\,J_2-J_1\,X_{1,2}\\
      X_{2,1}\,J_1-J_2\,X_{2,1}
      &X_{2,2}\,J_1-J_1\,X_{2,2}\end{smallmatrix}}=0$, then $X_{1,1}$
  commutes with $J_1=J_{\zeta_1,s_1}$ and $X_{2,2}$ with
  $J_2=J_{\zeta_2,s_2}$. Thus they are upper triangular Toeplitz
  matrices.
  
  From $X_{2,1}\,J_2-J_1\,X_{2,2}$, one deduces that
  $(\zeta_1-\zeta_2)\,x_{s_1+s_2,1}=0$, hence either
  $\zeta_1=\zeta_2$ or $x_{s_1+s_2,1}=0$.  If
  $\zeta_1\neq\zeta_2$, then step by step, one has
  $X_{1,2}=0$. Assuming $\zeta_1=\zeta_2$, then step by step, one
  has $x_{s_1+i,1}=0$ for $i>1$ and since
  $x_{s_1+i+1,j+1}-x_{s_1+i,j}=0$ for all $i,j$, one has in fact that
  $X_{1,2}$ is a upper triangular Toeplitz matrix with potential
  nonzero coefficients on the first $\min(s_1,s_2)$ diagonals. The
  same argument applies to $X_{2,1}$.

  The case $r>2$ is an easy generalization of this result.
\end{proof}
From this lemma, we can deduce easily the dimension of the centralizer
of a matrix.
\begin{lemma}\label{lm:dim_cent}
  Let $H\in\K^{n\times n}$ be a matrix and let $J$ be its normal
  Jordan form. Assuming the blocks of $J$ associated to $\zeta_i$ are
  $J_{\zeta_i,s_{i,1}},\ldots,J_{\zeta_1,s_{i,d_1}}$ with
  $s_{i,1}\leq\cdots\leq s_{i,d_i}$ for $i,1\leq i\leq r$, then the
  centralizer of $H$ is a
  $\K$-vector subspace of $\K^{n\times n}$ of dimension no more than
  $\sum_{1\leq i\leq r}\sum_{1\leq j\leq d_i}(2\,d_i-2\,j+1)s_{i,j}$.
\end{lemma}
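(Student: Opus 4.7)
The plan is to reduce to the Jordan normal form $J$ and then just count the free parameters allowed by Lemma~\ref{lm:cent}. Since $H$ and $J$ are conjugate over the splitting field $\LL$, the $\LL$-dimension of their centralizers coincides, and $\dim_{\K}\Cent(H)\leq\dim_{\LL}\Cent(J)$ by the tensor-product argument already used in the paragraph preceding the lemma. So it is enough to bound $\dim_{\LL}\Cent(J)$.

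By Lemma~\ref{lm:cent}, any $X\in\Cent(J)$ decomposes as a block matrix indexed by the Jordan blocks of $J$, whose $(j,k)$-block vanishes whenever the corresponding eigenvalues differ, so $\Cent(J)$ splits as a direct sum over the distinct eigenvalues $\zeta_1,\ldots,\zeta_r$. Fix an eigenvalue $\zeta_i$ with associated Jordan blocks of sizes $s_{i,1}\leq\cdots\leq s_{i,d_i}$. Again by Lemma~\ref{lm:cent}, for $1\leq j,k\leq d_i$, the $(j,k)$-sub-block of $X$ is an upper triangular Toeplitz matrix of size $s_{i,j}\times s_{i,k}$ whose only possibly nonzero entries lie on its first $\min(s_{i,j},s_{i,k})$ diagonals. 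Since a Toeplitz block is determined by the entries on those diagonals, each such $(j,k)$-sub-block contributes exactly $\min(s_{i,j},s_{i,k})$ free parameters.

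Summing over $(j,k)$ and using the ordering $s_{i,1}\leq\cdots\leq s_{i,d_i}$, which gives $\min(s_{i,j},s_{i,k})=s_{i,\min(j,k)}$, the dimension contributed by the $i$th eigenvalue is
\[
\sum_{j,k=1}^{d_i}\min(s_{i,j},s_{i,k})
=\sum_{j=1}^{d_i}s_{i,j}+2\sum_{1\leq j<k\leq d_i}s_{i,j}
=\sum_{j=1}^{d_i}\bigl(1+2(d_i-j)\bigr)s_{i,j}
=\sum_{j=1}^{d_i}(2d_i-2j+1)s_{i,j}.
\]
Summing over $i$ from $1$ to $r$ yields the claimed upper bound.

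The argument involves no serious obstacle: the only mildly delicate point is keeping track of the field of definition (the Jordan form lives over an extension $\LL$, while the final bound is on the $\K$-dimension), but this is handled by the standard inequality $\dim_\K\mcal{V}\leq\dim_\LL(\mcal{V}\otimes_\K\LL)\leq\dim_\LL\Cent(J)$. Everything else is a direct parameter count following Lemma~\ref{lm:cent} and a clean rearrangement of the double sum $\sum_{j,k}\min(s_{i,j},s_{i,k})$.
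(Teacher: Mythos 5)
Your proof is correct and follows essentially the same route as the paper: pass to the splitting field $\LL$ via the tensor-product inequality $\dim_\K\Cent(H)\leq\dim_\LL\Cent(J)$, invoke Lemma~\ref{lm:cent} to count $\min(s_{i,j},s_{i,k})$ free parameters per block, and rearrange the double sum using the ordering of the $s_{i,j}$. The only cosmetic difference is that the paper counts diagonal and off-diagonal blocks separately while you sum $\sum_{j,k}\min(s_{i,j},s_{i,k})$ at once; the arithmetic is identical.
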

\begin{proof}
  Let $\LL$ be the smallest field over which $J$ is defined. It is
  clear that the centralizer of $H$ over $\LL$, denoted $\mcal{W}$,
  contains $\Cent(H)\otimes\LL$. Hence,
  $\dim_{\K}\Cent(H)=\dim_{\LL}(\Cent(H)\otimes\LL)
  \leq\dim_{\LL}\mcal{W}$.

  Now, let $X=(X_{i,j})_{1\leq i,j\leq
    d_1+\cdots+d_r}\in\mcal{V}$. From Lemma~\ref{lm:cent}, there are
  $\sum_{1\leq i\leq r}\sum_{1\leq j\leq d_i} s_{i,j}$ free
  parameters for the diagonal blocks of $X$ and $2\,\sum_{1\leq i\leq
    r}\sum_{1\leq j<k\leq d_i}\min(s_{i,j}, s_{i,k})
  =2\sum_{1\leq i\leq r}\sum_{1\leq j\leq d_i}(d_i-j)s_{i,j}$ free
  parameters for the off-diagonal blocks of $X$. This concludes the
  proof.
\end{proof}
As a consequence, the number of invertible matrices in $\Cent(H)$ is
bounded from above by
\[q^{\pare{\sum_{1\leq i\leq r}\sum_{1\leq j\leq d_i}(2\,d_i-2\,j+1)\,s_{i,j}}}-1,\]
as stated in Corollary~\ref{cor:enum}.

\end{document}